
\documentclass[12pt,reqno,a4paper,oneside]{amsart}    

\usepackage[totalwidth=450pt, totalheight=717pt]{geometry}



\usepackage{color}
\usepackage{amsmath, amssymb}
\usepackage{amsthm}

\usepackage{accents}     

\usepackage{bbm}         

\usepackage{ifthen}       

\usepackage{mathrsfs}    
\renewcommand\mathcal\mathscr  


\newcommand{\look}[1]{\rule{5ex}{1em}\textbf{*}
    \footnote{ #1 }}
\newcommand{\lookO}[1]{\rule{5ex}{1em}\textbf{*}
    \footnote{\textbf{Olaf:} #1 }}
\newcommand{\lookRi}[1]{\rule{5ex}{1em}\textbf{*}
    \footnote{\textbf{Ricardo:} #1 }}
\newcommand{\lookRa}[1]{\rule{5ex}{1em}\textbf{*}
    \footnote{\textbf{Rainer:} #1 }}
\newcommand{\hiddenfootnote}[1]{}

\ifthenelse{\isundefined \draft }
{\renewcommand{\look}[1]{}%
 \renewcommand{\lookO}[1]{}%
 \renewcommand{\lookRi}[1]{}%
 \renewcommand{\lookRa}[1]{}%
 \renewcommand{\hiddenfootnote}[1]{}%
}{\usepackage[notref,notcite]{showkeys}} 

\usepackage{comment, color}

\ifthenelse{\isundefined \colorsforreferees }
{\usepackage{marginnote}   
 \newcommand{\colour}[1][]{}
}
{\usepackage{marginnote}   
 \newcommand{\colour}[1][]{%
  \marginnote{\textbf{\color{blue} \small #1}}\color{red}}
}
\numberwithin{equation}{section}


\newcounter{myenumi}
\newenvironment{myenumerate}[1]{
\begin{list}{\indent(\themyenumi) }
  {\renewcommand{\themyenumi}{#1{myenumi}}
    \usecounter{myenumi}
    \setlength{\topsep}{0em}
    \setlength{\itemsep}{0em}
    \setlength{\leftmargin}{0em}
    \setlength{\labelwidth}{0em}
    \setlength{\labelsep}{0em}}
  }
  {
  \end{list}
  }
\newcommand{\itemref}[1]{\eqref{#1}}


\theoremstyle{plain}            
\newtheorem{theorem}{Theorem}[section]

\newtheorem{proposition}[theorem]{Proposition}
\newtheorem{lemma}[theorem]{Lemma}
\newtheorem{corollary}[theorem]{Corollary}

\theoremstyle{definition}       
\newtheorem{definition}[theorem]{Definition}
\newtheorem{assumption}[theorem]{Assumption}
\newtheorem{example}[theorem]{Example}

\newtheorem{remark}[theorem]{Remark}


%
\newcommand{\Sec}[1]{Section~\ref{sec:#1}}

\newcommand{\App}[1]{Appendix~\ref{app:#1}}

\newcommand{\Eq}[1]{Eq.~\eqref{eq:#1}}



\newcommand{\Thm}[1]{Theorem~\ref{thm:#1}}

\newcommand{\Ex}[1]{Example~\ref{ex:#1}}

\newcommand{\Lem}[1]{Lemma~\ref{lem:#1}}

\newcommand{\Cor}[1]{Corollary~\ref{cor:#1}}

\newcommand{\Prp}[1]{Proposition~\ref{prp:#1}}

\newcommand{\Rem}[1]{Remark~\ref{rem:#1}}

\newcommand{\Remenum}[2]{Remark~\ref{rem:#1}~(\ref{#2})}

\newcommand{\Def}[1]{Definition~\ref{def:#1}}

\newcommand{\Ass}[1]{Assumption~\ref{ass:#1}}
\newcommand{\Asss}[2]{Assumptions~\ref{ass:#1} and~\ref{ass:#2}}

%
%

\newcommand{\abs}[2][{}]{\lvert{#2}\rvert_{{#1}}}    
\newcommand{\abssqr}[2][{}]{\lvert{#2}\rvert^2_{#1}} 
\newcommand{\bigabs}[2][{}]{\bigl\lvert{#2}\bigr\rvert_{#1}}     
\newcommand{\bigabssqr}[2][{}]{\bigl\lvert{#2}\bigr\rvert^2_{#1}}
\newcommand{\Bigabs}[2][{}]{\Bigl\lvert{#2}\Bigr\rvert_{#1}}     

\newcommand{\normsymb}{|\!|}
\newcommand{\triplenormsymb}{|\!|\!|}
\newcommand{\bignormsymb}[1]{#1|\!#1|}

\newcommand{\norm}[2][{}]{\normsymb{#2}\normsymb_{{#1}}}    
\newcommand{\normsqr}[2][{}]{\normsymb{#2}\normsymb^2_{#1}} 
\newcommand{\bignorm}[2][{}]{\bignormsymb{\bigl}{#2}\bignormsymb{\bigr}_{#1}}
\newcommand{\bignormsqr}[2][{}]{\bignormsymb{\bigl}{#2}%
                                \bignormsymb{\bigr}^2_{#1}}
\newcommand{\Bignorm}[2][{}]{\bignormsymb{\Bigl}{#2}\bignormsymb{\Bigr}_{#1}}


\newcommand{\triplenorm}[2][{}]{\triplenormsymb{#2}\triplenormsymb_{{#1}}}
\newcommand{\triplenormsqr}[2][{}]{\triplenormsymb{#2}\triplenormsymb^2_{{#1}}}

\newcommand{\iprod}[3][{}]{\langle{#2},{#3}\rangle_{#1}}  
\newcommand{\bigiprod}[3][{}]{\bigl\langle{#2},{#3}\bigr\rangle_{#1}}

\newcommand{\set}[2]{\{ \, #1 \, | \, #2 \, \} }      
\newcommand{\bigset}[2]{\bigl\{ \, #1 \, \bigl|\bigr. \, #2 \, \bigr\} }

\newcommand{\map}[3]{ #1 \colon #2 \longrightarrow #3}    
\newcommand{\embmap}[3]{ #1 \colon #2 \hookrightarrow #3} 

\newcommand{\bd}  {\partial}             
\newcommand{\clo}[1]{\overline{{#1}}} 

\newcommand{\conj}[1]{\overline {{#1}}}  

\newcommand{\restr}[1]{{\restriction}_{#1}} 


\DeclareMathOperator{\const}  {const}
\newcommand{\dd}    {\, \mathrm d}    
\DeclareMathOperator{\dist}   {dist}
\DeclareMathOperator{\dom}    {Dom}
\DeclareMathOperator{\ran}    {Ran}
\DeclareMathOperator{\Ker}    {Ker}  
\DeclareMathOperator{\inj}    {inj}  
\DeclareMathOperator{\Ric}    {Ric}

\DeclareMathOperator{\sgn}    {sgn}
\DeclareMathOperator{\supp}   {supp}
\DeclareMathOperator{\vol}    {vol}
\DeclareMathOperator{\dvol}    {d\, vol}

\newcommand{\de} {\mathord{\mathrm d}} 

\newcommand{\strong}          {\mathrm s}    
\DeclareMathOperator{\slim}   {\strong\text{-}lim}  



\newcommand{\eps}{\varepsilon} 
\renewcommand{\phi}{\varphi}   
\renewcommand{\rho}{\varrho}   
\renewcommand{\theta}{\vartheta}

\newcommand{\R}{\mathbb{R}} 
\newcommand{\C}{\mathbb{C}} 
\newcommand{\N}{\mathbb{N}} 
\newcommand{\Sphere}{\mathbb{S}} 


\newcommand{\e}{\mathrm e}  
\newcommand{\im}{\mathrm i} 

\newcommand{\wt}{\widetilde}           
\newcommand {\qf}[1]{\mathfrak{#1}}    

\newcommand{\HS}{\mathcal H}           

\newcommand{\Sobsymb} {\mathsf H}      
\newcommand{\Sobnsymb} {\ring{\mathsf H}}   
\newcommand{\SobWsymb}{\mathsf W}      
\newcommand{\SobnWsymb}{\ring{\mathsf W}}   

\newcommand{\Contsymb} {\mathsf C}     
\newcommand{\Lsymb}    {\mathsf L}     

\newcommand{\TRspace}{\BdOpsymb_1}
\newcommand{\HSspace}{\BdOpsymb_2}


\newcommand{\Sobspace}[1][1]{\Sobsymb^{#1}} 
\newcommand{\Sobnspace}[1][1]{\Sobnsymb^{#1}} 

\newcommand{\SobWspace}[2][p]{\SobWsymb_{#1}^{#2}}  
\newcommand{\SobnWspace}[2][p]{\SobnWsymb_{#1}^{#2}}  
\newcommand{\Contspace}[1][{}]{\Contsymb^{#1}}     
\newcommand{\Lpspace}[1][p]    {\Lsymb_{#1}}     
\newcommand{\Lsqrspace}    {\Lpspace[2]}     

\newcommand{\BdOpsymb} {\mathcal B}       
\newcommand{\BdOp}[2][{}]{\BdOpsymb_{#1}({#2})}

\newcommand{\Ci} [2][{}]{\Contspace [\infty]_{#1} ({#2})}
\newcommand{\Cci}[1]{\Ci[\mathrm c]{#1}}
\newcommand{\Cont}[2][{}]{\Contspace[#1]({#2})}




\newcommand{\Schwartz}[1]{\mathcal S(#1)}
\newcommand{\Lp}[2][p]{\Lpspace [#1]({#2})} 
\newcommand{\Lploc}[2][p]{\Lpspace [#1,\mathrm{loc}]({#2})}

\newcommand{\Lsqr}[2][{}]{\Lsqrspace^{#1}({#2})} 
 %




\newcommand{\Sob}[2][1]{\Sobspace [#1]({#2})}         
\newcommand{\Sobn}[2][1]{\Sobnspace [#1]({#2})}  

\newcommand{\Sobx}[3][1]{\Sobspace [#1]_{{#2}}({#3})} 
\newcommand{\SobW}[3][p]{\SobWspace[#1]{#2}(#3)} 

\newcommand{\SobnW}[3][p]{\SobnWspace[#1]{#2}(#3)} 

%
%

%
%

\newcommand{\Met}{\mathsf{Met}}  

\newcommand{\Sesq}{\mathsf{Sesq}} 

\newcommand{\rH}[1]{r_{#1}}  

\newcommand{\unifL}[1]{\inf\nolimits_{#1}} 

\newcommand {\ini}{\mathrm{ini}} 
\newcommand {\odd}{\mathrm{odd}}
\newcommand {\loc}{\mathrm{loc}}
\newcommand {\dec}{\mathrm{dec}}
\newcommand {\ac} {\mathrm {ac}}

\newcommand{\pot}{{w}} 

\newcommand{\mybigoplus}{\bigoplus} 
\newcommand{\Mscr} {{\mathcal M}}

\newcommand{\Uscr} {{\mathcal U}}

\newcommand{\slimpm}{\slim_{t \to \pm \infty}}

\newcommand{\slimm}{\slim_{t \to - \infty}}

%
%


\begin{document}

\title[Scattering on Manifolds with Ends]{On Open Scattering
  Channels\\ for Manifolds with Ends}

\author{Rainer Hempel}
\address{Institute for Computational Mathematics, TU Braunschweig,
  Pockelsstr. 14, 38106 Braunschweig, Germany}
\email{r.hempel@tu-bs.de}

\author{Olaf Post}
\address{Department of Mathematical Sciences,
  Durham University, South Road, Durham, DH1 3LE, United Kingdom}
\email{olaf.post@durham.ac.uk}

\author{Ricardo Weder} %

\address{Departamento de F\'{\i}sica Matem\'atica.  Instituto de
  Investigaciones en Matem\'aticas Aplicadas y en Sistemas,
  Universidad Nacional Aut\'onoma de M\'exico, Apartado Postal
  20--726, M\'exico, D.F.  01000.} %
\email{weder@unam.mx}


\keywords{Laplacian on Riemannian manifolds, scattering matrix,
  harmonic radius} \subjclass[2000]{58J50, 34P25, 37A40, 81U99}

\begin{abstract}
  In the framework of time-dependent geometric scattering theory, we
  study the existence and completeness of the wave operators for
  perturbations of the Riemannian metric for the Laplacian on a
  complete manifold of dimension $n$. The smallness condition for the
  perturbation is expressed {\colour (intrinsically and coordinate
    free)} in purely geometric terms using the harmonic radius;
  therefore, the size of the perturbation can be controlled in terms
  of local bounds on the {\colour[A3] injectivity radius} and the
  Ricci-curvature.  As an application of these ideas we obtain a
  stability result for the scattering matrix with respect to
  perturbations of the Riemannian metric. This stability result
  implies that a scattering channel which interacts with other
  channels preserves this property under small perturbations.
\end{abstract} 

\maketitle

\setcounter{tocdepth}{1}
\tableofcontents

%
\section{Introduction}
\label{sec:intro}
%

The first fundamental problem in multi-channel scattering theory is to
establish the existence and the (asymptotic) completeness of the wave
operators.  These questions are currently quite well understood in
various situations including the case of the $N$-body problem in
quantum mechanics, multi-channel scattering in perturbed acoustic and
electromagnetic wave guides, and scattering on manifolds with ends;
cf., e.g., \cite{yafaev:10}, \cite{dg:97}, \cite{weder:91}, and the
literature discussed at the end of this introduction.  Roughly
speaking, asymptotic completeness in multi-channel scattering means
the following: as time goes to $\pm \infty$, any scattering state
decays into a number of states living in subsystems (channels); these
subsystems then evolve according to a simpler reference dynamics, like
clusters of particles in the quantum mechanical case, radiation and
guided modes for perturbed wave guides, and components that travel
into the various ends of a manifold.  However, given an initial state
belonging to a particular channel (as time goes to $-\infty)$,
asymptotic completeness does not tell us into which channels our state
will decay as time goes to $+\infty$, or, put differently, which
subsystems will actually be non-zero. We are therefore led to ask
which channels are \emph{open} to an initial state belonging to a
particular channel in the past ($t \to - \infty$). Clearly, one would
expect that two scattering channels will be open to each other unless
a particular obstruction prevents the decay from one into the other;
put differently, two channels should be open to each other in some
generic sense.  It appears, though, that there are no general methods
in mathematical multi-channel scattering which would allow to prove
such a result. 

\subsection{Open scattering channels}
\label{sec:intro.open.channels}
As a first step in the analysis of this issue, the present paper
studies the interaction of the channels in geometric scattering theory
where the dynamics is given by the Laplacian on a complete
$n$-dimensional Riemannian manifold with a finite number of ends.  Any
geometric end gives rise to a scattering channel provided the
corresponding decoupled part of the Laplacian has a non-zero
absolutely continuous part.  One of our main results (cf.\ \Thm{main2}
and \Cor{main2}) roughly says the following: Suppose that the $i$-th
scattering channel is open to the $k$-th channel in the sense that the
channel scattering operator $S_{ik}$ for these channels satisfies
 \begin{equation}
  \label{eq:ch.open}
  S_{ik}\ne 0;  
\end{equation}
then the same property~\eqref{eq:ch.open} holds for small
perturbations of the metric.  In other words, we derive a
\emph{stability theorem} for property~\eqref{eq:ch.open}.  The
smallness of the perturbation is expressed in geometric terms that
involve the \emph{harmonic radius} $\rH \Mscr (x)$ at a point $x \in
M$ of a Riemannian manifold $\Mscr=(M,g)$, defined as in the work of
Anderson and Cheeger~\cite{anderson-cheeger:92}.  According
to~\cite{anderson-cheeger:92} and~\cite{hebey-herzlich:98}, $\rH \Mscr
(x)$ depends only on (local) lower bounds for the radius of
injectivity and the Ricci curvature.  Note that we do not need to
require any particular structure of the unperturbed manifolds or its
ends.  As explained at the end of \Sec{non-zero.transm}, the
property~\eqref{eq:ch.open} is symmetric in $k$ and $i$, i.e., $S_{ik}
\ne 0 \Leftrightarrow S_{ki} \ne 0$. %
{\colour[A2] 
  Some comments on the notion of openness of scattering channels can
  be found at the end of \Sec{non-zero.transm} in
  \Rem{when.channels.closed}.%
}

\subsection{An intrinsic trace class perturbation result}
\label{sec:intro.tr.class.pert}
In preparation for the above analysis we derive a rather general
theorem which establishes existence and completeness of the wave
operators for a pair of Laplacians on a manifold $M$ with two
different metrics. This result generalizes Theorem~0.1 of M\"uller and
Salomonsen~\cite{mueller-salomonsen:07} in several directions and may
be of independent interest.

{\colour[A1] 
Although the definitions are somewhat involved, let us try and
give a description of our basic construction:

For $x \in M$ and quasi-isometric metrics $g_1$ and $g_2$ (see
\Def{quasi}), let $A(x)$ be the endomorphism from $T^*_xM$ to itself
defined by
\begin{equation*}
  g_2(x)(\xi,\zeta) = g_1(x)(A(x) \xi, \zeta),
  \qquad  \xi, \zeta \in T_x^*M.
\end{equation*}
We denote the (positive) eigenvalues of $A(x)$ by $\alpha_k(x)$, $k=1,
\ldots, n$.  Then our distance function is defined as
\begin{equation*}
  \wt d(g_1,g_2)(x) 
  := 2 \sinh \Bigl(\frac n 4 \cdot \max_k \abs{\ln \alpha_k(x)} \Bigr)
    = \max_k \bigabs{\alpha_k(x)^{n/4}-\alpha_k(x)^{-n/4}}.
\end{equation*}
Note that $g_1$ and $g_2$ are quasi-isometric if and only if 
\begin{equation*}
  \wt d_\infty(g_1, g_2) := \sup_{x \in M} \wt d(g_1, g_2)(x)<\infty.
\end{equation*}
(The tilde $\wt \cdot$ here and below indicates that the distance
functions $\wt d$, $\wt d_\infty$ etc.\ only satisfy a weaker version
of the triangle inequality, see \App{dist.met}.)  Of equal importance
is a distance function in form of a weighted integral,
\begin{equation}
  \wt d_1(g_1, g_2)
   := \int_M  \wt d(g_1, g_2)\cdot r_0^{-(n+2)}
  \cdot (1+\rho_{g_2,g_1}) \dvol_{g_1},
\end{equation}
where $\wt d(g_1,g_2)$ is the pointwise distance introduced above,
$\map{r_0} M {(0,1]}$ is a continuous function (in practice $r_0(x)$
is a common lower bound for the harmonic radii of $g_1$ and $g_2$ at
the point $x \in M$), and $\rho_{g_2,g_1}$ is the density of
$\dvol_{g_2}$ with respect to $\dvol_{g_1}$.  It is a key element of
our analysis that the trace-class condition for relative scattering
theory with respect to the metrics $g_1$ and $g_2$ is satisfied
provided $\wt d_1(g_1,g_2)$ is finite.  By a basic result of Birman
and Belopol'skii (\cite[Thm.~XI.13]{reed-simon-3} or
\Thm{birman-belopolskii} below), the trace class condition then
implies existence and completeness of the wave operators. Passing from
our weighted integral condition to the existence and completeness of
the wave operators is almost immediate, and thus the finiteness of
$\wt d_1$ is a very natural, intrinsic condition.  Also note that we
express our perturbations in terms of quadratic forms to keep the
assumptions minimal.} %

In the above construction, we use harmonic coordinates in conjunction
with elliptic regularity theory in $\Lpspace$ as
in~\cite{anderson-cheeger:92} to obtain estimates for the Green's
function and its first order derivatives. These estimates are then
employed to verify the trace class condition that is required in the
Birman-Belopol'skii theorem.  A similar approach can be found
in~\cite{wed:84} in the case of higher-order operators in domains with
infinite boundary.

{\colour[A1] 
Let us note as an aside that the distance $\wt d_1^*(g_1,g_2)$ of
eqn.~\eqref{eq:eq.q-met.d1}, which is defined as $\wt d_1(g_1,g_2)$
above but without the factor $1+\rho_{g_2,g_1}$, can be computed more
or less explicitly in some particularly simple cases. In fact, in
\Rem{computation-of-metric} we consider the case of two
quasi-isometric Riemannian metrics $g_1$ and $g_2$ on $M = \R \times
\Sphere^{n-1}$ of the form $g_i = \dd s^2 + r_i(s)^2 \dd
g_{\Sphere^{n-1}}$, for $i = 1,2$, where the functions $r_i$ have to
satisfy some natural conditions.  Here one obtains
\begin{equation*}
  \wt d_1^*(g_1,g_2) =  \omega_{n-1} \int_{-\infty}^\infty  
  \Bigabs{\Bigl(\frac{r_2}{r_1}\Bigr)^{n/2}-
    \Bigl(\frac{r_1}{r_2}\Bigr)^{n/2}}
  \frac 1 {(\min\{1,r_1,r_2\})^{n+2}} \dd s, 
\end{equation*}
where $\omega_{n-1}$ denotes the volume of the $(n-1)$-sphere, and
$\wt d_1^*(g_1,g_2) \le \wt d_1(g_1,g_2) \le c \wt d_1^*(g_1,g_2)$ for
some constant $c \ge 1$ depending only on the quasi-isometric distance
$\wt d_\infty(g_1,g_2)$.%
}

\subsection{Structure of the paper}
\label{sec:intro.structure}
This paper is organized as follows. \Sec{prelim} introduces some basic
definitions concerning scattering in a two-Hilbert space setting,
Sobolev spaces on Riemannian manifolds, and the harmonic radius
according to~\cite{anderson-cheeger:92}.

\Sec{ex.wo} presents a first main result, \Thm{main1}, which
establishes the existence and completeness of the wave operators for
the Laplacian on a Riemannian manifold with respect to perturbations
of the metric tensor.  The trace class condition required in the
Birman-Belopol'skii theorem can be verified under fairly general and
simple conditions that depend on (local) lower bounds for the Ricci
curvature and the injectivity radius given in
eqn.~\eqref{eq:lower.bd}.  Note that we do not need any assumptions on
the derivatives of the curvature tensor nor do we need to control the
derivative of the relative perturbation.

In \Sec{mfds.ends} we introduce a class of Riemannian manifolds with
ends where we discuss the Laplacian $H$ and a decoupled version
$H_\dec$. It is well known that, under mild conditions, the wave
operators for the pair $(H,H_\dec)$ exist and are complete (cf., e.g.,
Carron~\cite{carron:02}). This allows us to define the scattering
operator and the scattering matrix for the pair $(H, H_\dec)$ in a
natural way.


In \Sec{non-zero.transm} we arrive at our second main result,
\Thm{main2}, which establishes strong continuity of the scattering
operator under perturbations of the metric that are small at infinity.
More precisely, we allow for perturbed metrics $g$ which are
quasi-isometric to the given metric $g_0$ and enjoy roughly the same
bounds for the injectivity radius and the Ricci curvature as $g_0$.
Furthermore, the perturbation has to satisfy a trace class condition
on each end, similar to the condition required in \Thm{main1}. As a
direct consequence, we then obtain a continuity result for the
scattering matrix which implies, in particular, that scattering
channels which are open for the metric $g_0$ will also be open for
metrics $g$ that are close to $g_0$ in the sense explained above.


Some simple examples are discussed in \Sec{example}. We restrict our
attention to manifolds $(M,g)$ with two ends and $M = \R \times
\Sphere^{n-1}$. We first give examples for \Thm{main1} where neither
the unperturbed nor the perturbed metric enjoy rotational symmetry
(however, a surface of revolution is used for the sake of comparison
to obtain a lower bound for the {\colour[A3] injectivity radius}).
Two examples illustrating \Cor{main2} have a surface of revolution as
unperturbed manifold while the perturbed manifolds may be more
general. For simplicity, we restrict our attention to the case where
one end is a horn while the other end is asymptotically Euclidean.  We
show by standard techniques that condition~\eqref{eq:ch.open} holds
for the rotationally symmetric case and determine a class of
admissible perturbations of the metric for which
property~\eqref{eq:ch.open} is preserved.  Obtaining suitable lower
bounds for the radius of injectivity is a cumbersome obstacle and we
have been happy to use a comparison theorem
of~\cite{mueller-salomonsen:07} which, however, requires a global
bound on the sectional curvature.  These issues are discussed in
\App{inj.rad}.


The paper comes with three more Appendices. \App{dist.met} contains a
coordinate free way of measuring the distance between two Riemannian
metrics $g$ and $g_0$ on a manifold.  \App{ptw.bd} gives some details
on how to apply elliptic regularity theory
(\cite{gilbarg-trudinger:83}) to the Laplacian in harmonic
coordinates; here we mostly follow~\cite{anderson-cheeger:92}.
\App{line} provides some (actually rather standard) material
concerning scattering on the line for $-\dd^2/\dd x^2$ and $-\dd^2/\dd
x^2 + \pot$ for short-range potentials $\pot$.

{\colour[A4]
\subsection{Notes on the literature}
\label{sec:intro.lit}

 There exists an extensive literature dealing with the
  major issues in Riemannian scattering, most notably the existence
  and completeness of the wave operators, absence of singular
  continuous spectrum, absence of embedded eigenvalues, counting of
  resonances, and the construction of a reference dynamics from the
  geometry. A part of this work was done from a starting point in
  mathematical physics, while other groups rather originate in
  differential geometry, like the school of R.~Melrose who introduced
  the important and fruitful concept of \emph{scattering manifolds}.
  Closest to our work is the recent paper by M\"uller and
  Salomonsen~\cite{mueller-salomonsen:07} mentioned above; Theorem~0.1
  of~\cite{mueller-salomonsen:07} yields existence and completeness of
  the wave operators for perturbations of the curvature tensor. This
  appears to be the first coordinate-free perturbative result. Our
  \Thm{main1} is stronger than their Theorem~0.1; cf.\
  \Rem{mueller-salomonsen} for a detailed comparison.
%

  More detailed information on the scattering operator can be obtained
  if one assumes that the ends possess some additional structure, often
  expressed in terms of a coordinate system which is globally defined
  on each end. Then advanced analytical tools from micro-local analysis, 
  pseudo-differential operators or Fourier integral operators can 
  be used to gain rather precise information on the behavior of 
  wave packets and the scattering matrix. Let us  
  highlight some of these developments.  A vast body of work has
  been devoted to scattering manifolds and the scattering on hyperbolic ends 
  (the survey by Perry~\cite{perry:07} lists some 170~references). 
  From its inception, the study of scattering manifolds in the sense of 
  Melrose has produced a constant stream of papers, devoted to various 
  aspects and issues. We have to restrict ourselves to a small selection 
  which nonetheless, as we hope, displays some of the variety and depth 
  of what has been achieved by various groups in two decades:   
  Melrose~\cite{melrose:95}, 
  Datchev~\cite{datchev:09}, 
  Guillop\'e and Zworski~\cite{guillope-zworski:97},  
  Hassell and Wunsch~\cite{hassell-wunsch:05}, \cite{hassell-wunsch:08},  
  Ito and   Nakamura~\cite{kenichi-nakamura:10}, \cite{kenichi-nakamura:12}, 
  Ito   and Skibsted~\cite{kenichi-skibsted:13b},
  Mazzeo and Vasy~\cite{mazzeo-vasy:05},  
  Melrose, S\'a Barreto and Vasy~\cite{melrose-barreto-vasy:13}, 
  S\'a Barreto~\cite{sabarreto:05},  
  The papers~\cite{guillope-zworski:97} and Wunsch and
  Zworski~\cite{wunsch-zworksi:00} deal with the counting of
  resonances on scattering manifolds. Early on, the Mourre-method has
  been applied by Froese and Hislop~\cite{froese-hislop:89} and others
  to exclude singular continuous spectrum, while
  Donnelly~\cite{donnelly:99}, Kumura~\cite{kumura:10} study
  asymptotically Euclidean manifolds where they exclude singular
  continuous spectrum and embedded eigenvalues as well; cf.~also the
  recent paper by Ito and Skibsted~\cite{kenichi-skibsted:13}.  The
  decay of solutions of the Schr\"odinger equation on asymptotically
  conical ends has recently been studied by Schlag, Soffer and
  Staubach~\cite{sss:10,sss:10b}.  In many instances progress is
  achieved via the (non-perturbative) construction of ``natural''
  dynamics on the ends, starting from the classical geodesic flow;
  cf., e.g., Herbst and Skibsted~\cite{herbst-skibsted} and the most
  recent work of Ito and Skibsted~\cite{kenichi-skibsted:13b} or Ito
  and Nakamura~\cite{kenichi-nakamura:10}, and the references therein.
  From our point of view, some of the non-perturbative methods provide
  natural comparison dynamics which can be used as a reference for
  perturbed systems. This point of view will be taken up again
    in \Rem{ref.op}.
  We emphasize that neither these remarks nor the list of references
  are in any way complete.  }

%
\section{Preliminaries}
\label{sec:prelim}
%
In this section, we introduce basic notation and definitions
concerning scattering in two Hilbert spaces and Laplacians on
manifolds; this material is fairly standard. We then discuss harmonic
coordinates and the harmonic radius in the sense of Anderson and
Cheeger~\cite{anderson-cheeger:92}.

\subsection{Some basic notation}
\label{sec:notation}

Let $U \subset \R^d$ be open. The vector space of infinitely
differentiable functions $\map \phi U \C$ with compact support in $U$
is denoted as $\Cci U$.  For $0 < \alpha \le 1$ and $k \in \N_0$, we
denote as $\Cont[k,\alpha] U$ the space of functions $\map f U \C$
that are $k$-times continuously differentiable with all partial
derivatives of order $k$ being locally H\"older-continuous, and
similarly, we denote as $\Cont[k,\alpha]{\clo U}$ the subspace of
$\Cont[k,\alpha] U$ with all derivatives of order $k$ being
\emph{uniformly} H\"older-continuous functions, as defined
in~\cite[Sec.~4.1]{gilbarg-trudinger:83}.  In the special case $U =
\R^d$, we will also need the Schwartz space $\Schwartz {\R^d}$ of
rapidly decreasing functions.

For $1 \le p < \infty$, the space of (equivalence classes of)
Borel-measurable functions $\map f U \C$ with $\int_U \abs{f(x)}^p \dd
x < \infty$, equipped with the usual norm, is denoted as $\Lp U$;
$\Lp[\infty] U$ is the Banach space of (equivalence classes of
essentially) bounded Borel-functions $\map f U \C$ with the usual
norm. For $1 \le p \le \infty$ and $k \in \N$, we let $\SobW k U$
denote the Sobolev space of all functions $f \in \Lp U$ with
distributional derivatives {\colour[B1] up to order $k$} in $\Lp U$,
equipped with the canonical norm as
in~\cite[Sec.~7.5]{gilbarg-trudinger:83}.  In the special case $p=2$
we write $\Sob[k] U := \SobW[2] k U$, a Hilbert space. The local
spaces $\SobW[p,\loc] k U$, $\Sobx [k] \loc U$ are defined
accordingly. We will also need the subspaces $\SobnW k U$ and
$\Sobn[k] U$ obtained as the closure of $\Cci U$ in the respective
norms.

For linear operators $T$ acting in a Hilbert space $\HS$ we denote by
$\dom T$, $\ran T$, and $\Ker T$ the domain, the range, and the kernel
of $T$, respectively.

\subsection{Scattering in two Hilbert spaces}
\label{sec:two.hs.scatt}
Let $H_1$ and $H_2$ be self-adjoint operators acting in separable
Hilbert spaces $\HS_1$ and $\HS_2$, respectively, and let $J$ be a
bounded operator from $\HS_1$ into $\HS_2$.  We define the wave
operators
\begin{equation}
  \label{eq:2.1}
  W_\pm (H_2, H_1, J)
  =  \slimpm \e^{\im t H_2} J \e^{-\im tH{_1}} P_\ac(H_1),
\end{equation}
provided that the strong limits exist, with $P_\ac (H_1)$ denoting the
projection onto the subspace of absolute continuity of $H_1$.  We say
that the wave operators $W_\pm (H_2, H_1, J)$ are \emph{complete} if
$(\Ker W_\pm (H_2, H_1, J))^\perp = \HS_\ac(H_1)$ and
\begin{equation}
  \label{eq:2.2}
  \clo{\ran W_\pm (H_2, H_1, J)}
  = \HS_\ac (H_2), 
\end{equation}
where $\HS_\ac (H_i)$ denotes the subspace of absolute continuity of
$H_i$; note that since we only assume that $J$ is bounded, $\ran W_\pm
(H_2, H_1, J)$ is not necessarily closed.

A bounded linear operator $\map {T}{\HS_1}{\HS_2}$ ($T \in
\BdOp{\HS_1,\HS_2}$) is said to be \emph{trace class} if $(T^*
T)^{1/2}$ is trace class in $\HS_1$; the corresponding space of trace
class operators is denoted as $\BdOp[1] {\HS_1, \HS_2}$.
Equivalently, $T$ is trace class if $T$ can be factorized as $T=T_2^*
T_1$ with $\map {T_i}{\HS_i} {\HS_0}$ being Hilbert-Schmidt operators
into a third Hilbert space $\HS_0$.  The space of such Hilbert-Schmidt
operators is denoted by $\BdOp[2]{\HS_i, \HS_0}$.

For further basic definitions and results in
two-Hilbert space scattering, we refer to~\cite{reed-simon-3,kato:67}.
Our main result will be based on the Birman-Belopol'skii theorem as
given in~\cite[Thm.~XI.13]{reed-simon-3}:
{\colour[A5] 
\begin{theorem}
  \label{thm:birman-belopolskii}
  For $j=1,2$, let $H_j$ be a self-adjoint and semi-bounded
  operator in a Hilbert space $\HS_j$ with associated quadratic form
  $\qf h_j$ and spectral projectors $E_{\mathbb I}(H_j)$.  Suppose
  that $I \in \BdOp{\HS_1,\HS_2}$ such that
  \begin{myenumerate}{\alph}
    \item
      $I$ has a two-sided bounded inverse;
    \item we have $E_{\mathbb I}(H_2)(H_2I-IH_1)E_{\mathbb I}(H_1) \in
      \BdOp[1]{\HS_1,\HS_2}$ for any bounded interval $\mathbb I$;
    \item the operator $(I^*I-1)E_{\mathbb I}(H_1)$ is compact
      for any bounded interval $\mathbb I \subset \R$;
    \item  $I(\dom \qf h_1) =\dom \qf h_2$.
  \end{myenumerate}
  Then the wave operators $W_\pm(H_2,H_1,I)$ exist, are complete, and
  partially isometric with initial space $\HS_\ac(H_1)$ and final
  space $\HS_\ac(H_2)$.
\end{theorem}
}

\subsection{Manifolds and their Laplacians}
\label{sec:mfds.sob.lap}
Let $M$ be a smooth, complete, oriented, connected mani\-fold of
dimension $n \ge 2$; typically, $M$ will not be compact.  

{\colour[B2] We will define various objects such as Hilbert spaces or
  norms intrinsically without referring to an atlas, in order to make
  the definitions as natural and general as possible.

  Let $T^*M$ denote the cotangent bundle on $M$.  A \emph{Riemannian
    metric} $g$ on $M$ is a smooth family of positive definite
  sesquilinear forms on $T^*M$. The corresponding \emph{Riemannian}
  manifold will be written as $\Mscr = (M,g)$.  As explained in
  \Rem{regularity}, we may weaken the regularity assumption on the
  metric.

  We assume---chiefly for simplicity---that $M$ has no boundary, and
  that $(M,g)$ is complete.  Our results easily extend to many cases
  where the manifold has a boundary.  In this case, we have to specify
  suitable boundary conditions, such as Dirichlet or Neumann, to
  obtain a self-adjoint realization of the Laplacian.}

The metric $g$ naturally induces a \emph{volume measure} on $M$,
denoted by $\dvol_g$.  The corresponding Hilbert space of square
integrable (equivalence classes of) functions on $M$ is denoted by
$\Lsqr \Mscr=\Lsqr{M,g}$ with inner product $\iprod u v := \int_M u
\conj v \dvol_g$ and corresponding norm
\begin{equation*}
  \normsqr[\Mscr] u = \normsqr[\Lsqr \Mscr] u
  = \int_M \abssqr u \dvol_g.
\end{equation*}
Similarly, we denote by $\Lsqr {T^*\Mscr}$ the square integrable
sections in the 
{\colour[B3] Riemannian} cotangent bundle $T^* \Mscr = (T^* M, g)$
with norm
\begin{equation*}
  \normsqr[T^*\Mscr] \omega = \normsqr[\Lsqr {T^*\Mscr}] \omega
  = \int_M \abssqr[g] \omega \dvol_g
\end{equation*}
where $\abssqr[g] \omega = g(\omega,\omega)$ depends on the metric
$g$.  {\colour[B3] We assume that the fibres $T_x^*M$ are
  \emph{complex} vector spaces, and that $g_x$ is a
  \emph{sesquilinear} form on $T_x^*M$, linear in the first and
  anti-linear in the second argument.}  In a coordinate chart $\map
\Phi B U$ with $B \subset M$ and $U \subset \R^n$ open and $(x^1,
\dots, x^n)=\Phi(x)$, $x \in B$, we have the local expression
\begin{equation*}
  \abssqr[g] \omega 
  = \sum_{i,j=1}^n g^{ij} \omega_i \conj \omega_j
\end{equation*}
where $\omega=\sum_i \omega_i \de x^i$ and $g^{ij} = g(\de x^i, \de
x^j)$.  On the \emph{tangent} bundle $T \Mscr$, the metric components
are given by the inverse matrix $(g_{ij})$ with $g_{ij}=g(\partial_i,
\partial_j)$.  Moreover, $\dvol_g = \sqrt {\det g} \dd x^1
{\colour[B4] \cdots} \dd x^n$ is the volume measure in the coordinate
chart, where $\det g$ is the determinant of the matrix $(g_{ij})$.

Let $\de u$ be the exterior derivative of $u$, a section in the
cotangent bundle.  Let $\Cci M$ be the space of smooth functions with
compact support.  We denote by $\Sob \Mscr$ the closure of $\Cci M$
with respect to the norm
\begin{equation*}
  \normsqr[\Sob \Mscr] u
  := \normsqr [\Lsqr \Mscr] u +
     \normsqr [\Lsqr {T^*\Mscr}] {\de u}.
\end{equation*}
We define the operator $\map D {\Sob \Mscr} \Lsqr{T^*\Mscr}$ by $Du :=
\de u$.  Note that by definition of $\Sob \Mscr$, $D$ is a closed
operator.  Similarly, the quadratic form $\qf d$ given by $\qf d(u) :=
\normsqr [\Lsqr {T^*\Mscr}] {\de u}$ and $\dom \qf d := \Sob \Mscr$ is
closed.  We denote the corresponding sesquilinear form obtained via 
the polarization identity by the same symbol $\qf d$.  By the first
representation theorem (cf.~\cite[Thm.~VI.2.1]{kato:66}), there exists
a self-adjoint, non-negative operator $\Delta=\Delta_\Mscr$, the
\emph{Laplacian} of the Riemannian manifold $\Mscr$, satisfying
\begin{equation*}
  \qf d(u,v) = \iprod {\Delta u} v
\end{equation*}
for all $u \in \dom \Delta$ and $v \in \dom \qf d$.  We also write
$\Delta =\Delta_\Mscr = \Delta_g$ in order to stress the dependence on
the metric $g$.  Note that we define the Laplacian as a non-negative
operator---instead of a non-positive operator---as is often the
case in the mathematical physics literature.

\subsection{Harmonic radius}
\label{sec:harm.rad}

We denote by $B(x,r)=B_\Mscr(x,r)$ the open geodesic ball of radius
$r$ around $x$ in the Riemannian manifold $\Mscr=(M,g)$.  A central
role in our analysis is played by the (local) \emph{harmonic radius}
of $\Mscr$.  Roughly speaking, for any point $x \in M$, the harmonic
radius $\rH \Mscr (x)$ at $x$ is the largest radius (less than the
{\colour injectivity radius} at $x$) with the property that there exists a
system of harmonic coordinates in $B_\Mscr(x,\rH \Mscr(x))$. This
means that there is an open set $U \subset \R^n$ and a diffeomorphism
$\map \Phi {B_{\Mscr}(x, \rH \Mscr(x))} {U}$ such that each component
function $\Phi^i$ satisfies $\Delta_g \Phi^i = 0$ in $B_{\Mscr}(x, \rH
\Mscr(x))$.  The actual definition, given below, is a bit more
technical and provides several rather precise estimates.  We mostly
follow the work of Anderson and Cheeger~\cite{anderson-cheeger:92};
cf.\ also~\cite{deturck-kazdan:81, hebey:96, hebey-herzlich:98,
  hebey:99}.  We first define the \emph{harmonic radius at $x \in M$}
as in~\cite{anderson-cheeger:92}:

\begin{definition}
  \label{def:harm.rad}
  For $p \in (n,\infty)$ and $Q \in (1,\infty)$, the
  \emph{$\SobWspace[p] 1$-harmonic radius} of $\Mscr$ at $x \in M$ is
  the largest number $\rH \Mscr(x) = \rH \Mscr(x;p,Q)$ such that there
  is a system of harmonic coordinates in $B_{\Mscr}(x, \rH \Mscr(x))$
  with the property that the metric tensor $(g_{ij})$ in these
  coordinates satisfies
  \begin{gather}
    \label{eq:2.3}
    Q^{-1} (\delta_{ij}) \le (g_{ij})
    \le Q (\delta_{ij}), \qquad
     \intertext{as bilinear forms, and}
    \label{eq:2.4}
    \rH \Mscr(x)^{1 - n/p} \norm[\Lp U]{\partial_k g_{ij}}
    \le Q - 1,
  \end{gather}
  where $U \subset \R^n$ is the domain of the coordinates.
\end{definition}

{\colour[A6] We are mainly interested in $Q \in (1,2]$ close to $1$;
  the actual choice of $Q$ will only become important in \App{ptw.bd}
  (see the text before eqn.~\eqref{eq:b.15}) when we transfer
  estimates in elliptic regularity theory from Euclidean space to the
  manifold). At the same time, we will fix $p := n+1$. Our choice of
  $Q$ in \App{ptw.bd} is not mandatory; other $Q$'s would lead to
  different constants, however.  }

Note that~\eqref{eq:2.4} gives control of the $g_{ij}$ in
$\alpha$-H\"older-norm with exponent $\alpha : = 1 - n/p$.  A fine
point in the way how~\cite{anderson-cheeger:92} use elliptic
regularity theory in conjunction with harmonic coordinates concerns
the modulus of continuity of the $g_{ij}$ which enters the
$\Lpspace$-estimate of~\cite[Thm.~9.11]{gilbarg-trudinger:83}.  By
\cite[Thm.~7.19]{gilbarg-trudinger:83}, the above eqn.~\eqref{eq:2.4}
implies the following H\"older estimate: there exists a constant $C =
C(n,\alpha)>0$ such that for any ball $B$ satisfying $\clo B \subset
U$ we have
\begin{equation}
  \label{eq:gij.hoelder}
      |g_{ij}(y) - g_{ij}(y')| 
      \le {\colour C (Q-1) \rH \Mscr(x)^{-\alpha}} 
      |y -y'|^\alpha, 
      \qquad y, y' \in \clo B, 
\end{equation}
with $\alpha = 1 - n/p$.  In particular, $|g_{ij}(y) - g_{ij}(y')| \le C
(2Q)^\alpha (Q - 1)$, for $y, y' \in \clo B \subset U$. {\colour 
 By eqn.~\eqref{eq:2.3} we also have the elementary estimate 
 $1/Q \le |g_{ij}(y)| \le Q$, for all $i,j$ and all $y \in U$.} 

We need a local lower bound on the harmonic radius in terms of local
lower bounds on the injectivity radius $\inj_\Mscr$ and the Ricci
curvature $\Ric_\Mscr$.  Denote by $\map{\Ric^-_\Mscr} M \R$ the
(pointwise) lowest eigenvalue of $\Ric_\Mscr$ viewed as an
endomorphism on $T^*M$.

For $\delta>0$ and a continuous function $\map f M \R$ denote by
\begin{equation*}
  (\unifL \delta f)(x) := \inf_{y \in B_{\Mscr}(x,\delta)} f(y) 
\end{equation*}
the \emph{$\delta$-homogenized lower bound} of $f$.  Note that if $f$
is Lipschitz-continuous with Lipschitz constant $L>0$ on
$B_{\Mscr}(x,\delta)$ then
\begin{equation}
  \label{eq:unif.inj.lip}
  f(x) - L \delta \le \unifL \delta f (x) \le f(x).
\end{equation}

As in~\cite{anderson-cheeger:92}, we denote by
\begin{equation}
  \label{eq:unif.inj.rad}
  \iota_\Mscr(x) := \sup_{\delta > 0}
     \min \bigl\{\delta, \unifL \delta \inj_\Mscr(x) \bigr\}
\end{equation}
the largest radius $\delta$ for which the injectivity radius at $y \in
B(x,\delta)$ is bounded from below by $\delta$; we call
$\iota_\Mscr(x)$ the \emph{homogenized injectivity radius}.
{\colour[B5] An important technical ingredient in the proof of
  Thm.~0.3 in Anderson and Cheeger~\cite[p.~271]{anderson-cheeger:92}
  consists in the observation that
\begin{equation*}
  \iota_{\mathcal B}(y) = \frac 12 \dist(y,\bd B(x,r)) 
\end{equation*}
for $y \in B(x,r)$ and $r = \inj_{\mathcal B}(x)$, where $\mathcal
B=(B(x,r),g)$.  This fact justifies the complicated definition of
$\iota_\Mscr(x)$. In particular, for $y=x$, we have $\iota_{\mathcal
  B}(x)=r/2$.  We use a slight generalisation of this fact in the
following form that $r \le \inj_\Mscr(x)$ implies $\iota_{\mathcal
  B}(x)\ge r/2$.

This inequality can be easily seen as follows: $r \le \inj_\Mscr(x)$
implies that the exponential map is well-defined in $B(y,r/2)$ for any
$y \in B(x,r/2)$, hence $\iota_{\mathcal B'}(x) \ge r/2$ with
$\mathcal B'=(B(y,r/2),g)$.  The domain monotonicity of the
homogenized injectivity radius implies
\begin{equation}
  \label{eq:hom.injrad2}
  r/2 \le \iota_{\mathcal B'}(x) \le \iota_{\mathcal B}(x).
\end{equation}
}

The harmonic radius is a purely geometric quantity with a lower bound
depending only on lower bounds for the injectivity radius and the Ricci
curvature (cf.~\cite[Thm.~0.3]{anderson-cheeger:92}):
\begin{proposition}
  \label{prp:lower.bd.rh}
  Let $\Mscr=(M,g)$ be a smooth Riemannian manifold and let $\map
  {r_0} M {(0,1]}$ be a continuous function such that the
  homogenized injectivity radius and Ricci curvature satisfy the lower
  bounds
  \begin{equation}
    \label{eq:lower.bd}
    \iota_\Mscr (x) \ge r_0(x)
       \quad\text{and}\quad
    \unifL {r_0(x)} \Ric_\Mscr^-(x) \ge - \frac 1 {r_0(x)^2}
  \end{equation}
  for all $x \in M$.  Then there exists a constant $c=c(n,p,Q) > 0$
  such that the $\SobWspace[p] 1$-harmonic radius of $\Mscr$ at $x \in
  M$ satisfies the lower bound
  \begin{equation*}
    \rH \Mscr (x) \ge c r_0(x),
  \end{equation*}
 for all $x \in M$.
\end{proposition}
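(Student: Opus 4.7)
The plan is to follow the strategy of Anderson--Cheeger and argue by contradiction using a compactness/rescaling argument, exploiting the fact that lower bounds on the injectivity radius and the Ricci curvature are preserved under the natural scaling of the metric.

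First, I would assume that no such constant $c=c(n,p,Q)>0$ exists. Then there is a sequence of smooth pointed Riemannian $n$-manifolds $(\Mscr_j, x_j) = (M_j, g_j, x_j)$, together with continuous functions $r_0^{(j)}:M_j\to(0,1]$ satisfying \eqref{eq:lower.bd}, such that
\begin{equation*}
  \eps_j := \frac{\rH_{\Mscr_j}(x_j)}{r_0^{(j)}(x_j)} \longrightarrow 0.
\end{equation*}
The natural next step is to rescale: set $\tilde g_j := r_0^{(j)}(x_j)^{-2} g_j$ and $\tilde \Mscr_j := (M_j, \tilde g_j)$. Under this rescaling the injectivity radius and the harmonic radius scale by $r_0^{(j)}(x_j)^{-1}$, while $\Ric$ scales by $r_0^{(j)}(x_j)^2$. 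Consequently, using~\eqref{eq:unif.inj.rad} and~\eqref{eq:lower.bd}, on $B_{\tilde\Mscr_j}(x_j,1)$ one has $\inj_{\tilde\Mscr_j}\ge 1$ and $\Ric_{\tilde\Mscr_j}\ge -(n-1)$, whereas $\rH_{\tilde\Mscr_j}(x_j)=\eps_j\to 0$.

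Now I would invoke the Cheeger--Gromov compactness theorem in the $\Contspace[1,\alpha]$- (or, equivalently, $\SobWspace[q]2$-) category: any sequence of pointed $n$-manifolds with uniform lower bounds on the injectivity radius and two-sided Ricci bounds admits a subsequence that converges, in the pointed $\Contspace[1,\alpha]$-topology, to a limit pointed manifold $(\Mscr_\infty, x_\infty)$ whose metric $g_\infty$ is of class $\Contspace[1,\alpha]\cap\SobWspace[q]2$. (In our situation only a one-sided Ricci bound is assumed, but this is exactly the setting treated in~\cite{anderson-cheeger:92}, where the relevant convergence is established using harmonic coordinates of a fixed, quantitative, size on each $\tilde\Mscr_j$.) On the limit, standard regularity for the equation $\Delta_{g_\infty}\Phi^i=0$ in harmonic charts furnishes a ball $B_{\Mscr_\infty}(x_\infty,\delta)$ on which harmonic coordinates exist and satisfy~\eqref{eq:2.3}--\eqref{eq:2.4} for the prescribed $(p,Q)$; hence $\rH_{\Mscr_\infty}(x_\infty)\ge\delta>0$.

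The contradiction would then come from the lower semi-continuity of the harmonic radius under $\Contspace[1,\alpha]$-convergence: harmonic coordinates on the limit pull back, via the diffeomorphisms giving the convergence, to coordinates on $\tilde\Mscr_j$ that are almost harmonic and satisfy~\eqref{eq:2.3}--\eqref{eq:2.4} with constants approaching the optimal ones; a small perturbation (solving $\Delta_{\tilde g_j}\Phi^i_j = 0$ on balls of a fixed size with Dirichlet data given by the pulled-back coordinates) upgrades these to genuine harmonic coordinates on $\tilde\Mscr_j$ of definite size, so that $\liminf_j \rH_{\tilde\Mscr_j}(x_j)\ge\rH_{\Mscr_\infty}(x_\infty)>0$, contradicting $\rH_{\tilde\Mscr_j}(x_j)=\eps_j\to 0$. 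I expect the main technical obstacle to lie precisely in this last step: one needs sharp elliptic estimates in $\SobWspace[p]1$ (essentially~\cite[Thm.~9.11]{gilbarg-trudinger:83}, applied in charts whose metric coefficients are only $\Contspace[0,\alpha]$), together with the perturbation argument producing genuine harmonic charts on the approximating manifolds. All of this is carried out in detail in~\cite{anderson-cheeger:92}, and we would simply transport their argument; the cited estimate from~\cite[Thm.~7.19]{gilbarg-trudinger:83} already flagged after Definition~\ref{def:harm.rad} is exactly the regularity input required to close the loop.
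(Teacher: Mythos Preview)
Your outline is essentially the Anderson--Cheeger argument, and that is precisely what the paper does: it does not give an independent proof of this proposition but simply invokes~\cite[Thm.~0.3]{anderson-cheeger:92}, noting only that their result is purely local (the first step being a reduction to geodesic balls). So your proposal is correct and coincides with the paper's approach, the only difference being that you have unpacked the cited reference rather than quoting it.
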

{\colour[B5]
  \begin{proof}
    For $x \in M$ given, we apply Thm.~0.3
    of~\cite{anderson-cheeger:92} to the specific choice of $\mathcal
    B=(B,g)$ with $B := B(x,r_0(x))$ (which now replaces the manifold
    $M$ of~\cite{anderson-cheeger:92}). The constant $\lambda$ of
    Thm.~0.3 is then replaced by $1/r_0(x)$; notice that
    $\Ric_\Mscr^-(y) \ge -1/r_0(x)^2$ for all $y \in B$ by the very
    definition of $\inf_\delta$, for $\delta = r_0(x)$. Applying their
    theorem, we obtain that (with constants $c_1$ and $c_2$ depending
    only on $Q$, $n$ and $p$)
    \begin{equation*}
      \rH \Mscr(x) \ge \rH {\mathcal B}(x)
      \ge \min\{c_1 \lambda^{-1}, c_2 \iota_{\mathcal B} (x) \}
      \ge \min\{ c_1 r_0(x), c_2 r_0(x)/2 \} 
      = c r_0(x) 
    \end{equation*}
    with $c := \min\{c_1, c_2/2\}$; here we have
    used~\eqref{eq:hom.injrad2} with $r=r_0(x)$.
 \end{proof}}
{\colour[B5] As an alternative to the above proof of \Prp{lower.bd.rh}
  one could just follow the proof of Theorem~0.3
  in~\cite{anderson-cheeger:92}. Indeed, Thm.~0.3
  in~\cite{anderson-cheeger:92} is of a purely local nature (as noted
  by the authors), and the first step in their proof of Thm.~0.3
  consists in a reduction to geodesic balls.}

Note that Theorem~0.3 in~\cite{anderson-cheeger:92} is purely local
(as noted by the authors). Indeed, the first step in their proof of
Theorem~0.3 consists in a reduction to geodesic balls.

Without loss {\colour[B10] of generality} we may assume in the sequel,
as we have already done in \Prp{lower.bd.rh}, that the function $r_0$,
serving as lower bound for the injectivity radius and the Ricci
curvature, is bounded from above by $1$; this assumption is convenient
in the proofs of \App{ptw.bd}.

For further reference, we use the following notation:
\begin{definition}
  \label{def:harm.rad.met}
  For a continuous positive function $\map {r_0} M {(0,1]}$, we denote
  by $\Met_{r_0}(M)$ the set of smooth metrics $g$ on $M$ that satisfy 
  the lower bounds~\eqref{eq:lower.bd}.
\end{definition}

Let us mention a particularly simple situation where the
\emph{homogenized} injectivity radius and Ricci curvature can be
estimated from below using a \emph{pointwise} lower bound on the
injectivity radius and Ricci curvature itself.

\begin{proposition}
  \label{prp:bd.geo}
  Assume that 
  \begin{equation}
    \label{eq:bd.geo}
    \Ric_{(M,g)}^-(x) \ge -\frac 1{\beta(x)^2}  \quad \text{and} \quad
    \inj_{(M,g)}(x) \ge r(x), 
  \end{equation}
  for all $x \in M$, where $\map {\beta,r} M {(0,1]}$ are
  $\Contspace[1]$-functions enjoying the following properties: $r$ has
  bounded derivative, and $\beta$ satisfies an estimate
  $\abs{\beta'(x)} \le C \beta(x)^3$ for all $x \in M$, for some
  constant $C \ge 0$.

Then the lower bound~\eqref{eq:lower.bd} on the homogenized
injectivity radius and the homogenized Ricci curvature holds with
  \begin{equation*}
     r_0(x):=\min \Bigl\{1,\frac{r(x)}{1+\norm[\infty]{r'}},
                        \frac {\beta(x)}{\sqrt {1+2C}} \Bigr\}, 
  \end{equation*}
  i.e., $g \in \Met_{r_0} M$.
  In particular, if $\beta$ and $r$ are constant, then
  $r_0(x)=\min\{1,r, \beta\}$ can be chosen as a constant function.
\end{proposition}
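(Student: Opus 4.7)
The plan is to verify the two inequalities in~\eqref{eq:lower.bd} separately for the candidate $r_0$; continuity, positivity and the upper bound $r_0(x)\le 1$ are immediate from the construction. For the injectivity radius, I would apply the general Lipschitz bound~\eqref{eq:unif.inj.lip} to $f=r$ with Lipschitz constant $L=\norm[\infty]{r'}$ and test radius $\delta=r_0(x)$. Combined with the hypothesis $\inj_\Mscr\ge r$, this gives $\unifL{r_0(x)}\inj_\Mscr(x)\ge r(x)-\norm[\infty]{r'}\,r_0(x)$, and the definition of $r_0$ ensures that the right-hand side is at least $r_0(x)$. Substituting this choice of $\delta$ in~\eqref{eq:unif.inj.rad} then yields $\iota_\Mscr(x)\ge r_0(x)$.

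The Ricci estimate, where the nonlinear hypothesis $\abs{\beta'}\le C\beta^3$ enters, is the heart of the argument. Here I would integrate along a minimizing unit-speed geodesic $\gamma$ from $x$ to an arbitrary $y\in B_\Mscr(x,r_0(x))$, which exists because $r_0(x)\le r(x)\le \inj_\Mscr(x)$. The cubic differential inequality for $\beta$ becomes a \emph{linear} bound on $h:=1/\beta^2$, namely $\abs{h'}\le 2C$ along $\gamma$, so that
\begin{equation*}
  \frac{1}{\beta(y)^2} \le \frac{1}{\beta(x)^2} + 2C\,r_0(x).
\end{equation*}
Using $r_0(x)\le 1\le 1/\beta(x)^2$ (the latter because $\beta(x)\le 1$) converts the additive remainder into a multiplicative one, giving $1/\beta(y)^2\le (1+2C)/\beta(x)^2$; the choice $r_0(x)\le \beta(x)/\sqrt{1+2C}$ then shows this is at most $1/r_0(x)^2$. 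Hence $\Ric_\Mscr^-(y)\ge -1/r_0(x)^2$ uniformly on $B_\Mscr(x,r_0(x))$, which is exactly the required lower bound on $\unifL{r_0(x)}\Ric_\Mscr^-(x)$.

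The only delicate step is this ODE comparison for $\beta$: the substitution $h=1/\beta^2$ that linearises the cubic inequality is what makes the factor $\sqrt{1+2C}$ in the definition of $r_0$ match the constant in the hypothesis on $\beta'$. The last claim for constant $\beta$ and $r$ is immediate, since one may then take $\norm[\infty]{r'}=0$ and $C=0$ in the formulas above, so that $r_0$ reduces to $\min\{1,r,\beta\}$.
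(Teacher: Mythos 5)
Your proposal is correct and follows essentially the same route as the paper: for the Ricci bound you unfold the paper's appeal to~\eqref{eq:unif.inj.lip} into an explicit geodesic integration, and the substitution $h=1/\beta^2$ is exactly the device the paper uses to turn the cubic hypothesis $\abs{\beta'}\le C\beta^3$ into a Lipschitz constant $\le 2C$ for $-\beta^{-2}$. The injectivity-radius part is likewise the paper's monotonicity-plus-Lipschitz argument, just phrased via $\unifL{r_0(x)}\inj_\Mscr\ge\unifL{r_0(x)}r\ge r(x)-\norm[\infty]{r'}\,r_0(x)$.
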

\begin{proof}
  Note first that $f \ge g$ implies $\unifL \delta f \ge \unifL \delta
  g$.  Applying~\eqref{eq:unif.inj.lip} to the function $-\beta^{-2}$
  with $\delta \in (0,1]$ we obtain
  \begin{equation*}
    \unifL \delta \Bigl(-\frac 1 {\beta^2} \Bigr)(x)
    \ge -\frac 1 {\beta(x)^2} - L \delta 
    \ge -\frac 1 {\beta(x)^2} - 2C
    \ge -\frac {1+2C} {\beta(x)^2}
  \end{equation*}
  with $L = \norm[\infty]{(\beta^{-2})'}\le 2C$ and similarly,
  $\iota_{(M,g)}(x) \ge \min \{\delta, r(x) - \norm[\infty]{r'} \delta
  \}$.  The latter expression is greater than or equal to $\delta$ iff
  $r(x) - \norm[\infty] {r'} \delta \ge \delta$.  This inequality
  yields the inequality $r(x) (1+\norm[\infty] {r'})^{-1} \ge r_0(x)$
  on $r_0(x)=\delta$.
\end{proof}

\begin{remark}
  \label{rem:regularity}
  We may weaken the regularity assumptions on the metric $g$ as
  follows: It is sufficient to assume that $g \in \Contspace
  [1,\alpha]_\loc \subset \SobWspace [p,\loc] 1$, i.e., we assume that
  there is a covering with (for simplicity, smooth) charts, such that
  the metric tensor $(g_{ij})$ in each of these charts is of class
  $\Contspace [1,\alpha]_\loc$.  In this
  case, the Ricci curvature is still defined and $\Ric_{(M,g)} \in
  \Contspace [0,\alpha]_\loc$ (cf.\ the paper~\cite{akklt:03} for a
  detailed discussion of related ideas and results).

  More precisely, we can argue as follows:
  \begin{myenumerate}{\alph}
  \item 
    \label{reg.a}
    The H\"older regularity $g \in \Contspace [1,\alpha]_\loc$ allows
    us to apply the results of deTurck and
    Kazdan~\cite{deturck-kazdan:81}, notably their Lemma~1.2 which
    states that, for any point $x \in M$, there exist harmonic
    coordinate charts of class $\Contspace [2,\alpha]_\loc$ near $x$
    and that, moreover, \emph{all} harmonic charts near $x$ have this
    regularity.  In particular, the metric tensor $g_{ij}$ in any
    harmonic chart has regularity $\Contspace [1,\alpha]_\loc$.

  \item
    \label{reg.b}
    As before, Theorem~0.3 of~\cite{anderson-cheeger:92} yields a
    lower bound for the $\SobWspace[p,\loc] 1$-harmonic radius at $x
    \in M$ as in \Prp{lower.bd.rh}. (Note that~\cite{anderson-cheeger:92} seem to consider smooth $g_{ij}$, but a
    simple approximation argument allows to establish the result of
    Theorem~0.3 in~\cite{anderson-cheeger:92} under the weaker
    assumption $g_{ij} \in \SobWspace[p,\loc] 1$).

  \item
    \label{reg.c}
    By~\itemref{reg.a}, the metric tensor $g_{ij}$ is locally
    Lipschitz and, by~\itemref{reg.b}, the
    estimates~\eqref{eq:2.3}--\eqref{eq:2.4} hold. This quality of the
    $g_{ij}$ is required for an application of elliptic regularity
    theory in $\Lpspace$ to the Laplacian, expressed in harmonic
    coordinates, cf.~\cite{anderson-cheeger:92} and \App{ptw.bd}.
    Note that the Laplacian, written in harmonic coordinates, has no
    first-order terms (cf.~\cite{deturck-kazdan:81},
    \cite{anderson-cheeger:92}), an important simplification.
  \end{myenumerate}
\end{remark}

%
 \section{Existence and completeness of the wave operators} 
\label{sec:ex.wo}
%

We are now going to derive a criterion for the existence and
completeness of the wave operators for the Laplacian of two
(non-compact) manifolds that are close to one another in a suitable
sense. It is our aim to find conditions that only involve geometric
quantities and do not assume a particular structure of the unperturbed
situation. This problem has recently been studied
in~\cite{mueller-salomonsen:07} where the (relative) smallness of the
perturbation at the ends of the manifold is expressed in terms of
bounds on the curvature tensor and its derivatives. Here we propose an
approach which, in several respects, is even closer to the geometry
and, hopefully, even simpler. To this end, we advocate the use of the
(local) harmonic radius (cf.\ \Sec{harm.rad}) as a basic geometric
quantity which can be used to express conditions on the perturbed
metric that translate into trace-class conditions for the difference
of resolvents (or, more precisely, the difference of suitable powers
of resolvents).  Note that the manifolds considered in this section
are more general than what we discuss later on where we will restrict
our attention to manifolds with ends.

Suppose that $M$ is an $n$-dimensional, smooth, oriented manifold with
two metrics $g_1$, $g_2$ such that $\Mscr_k := (M,g_k)$ is a complete
$n$-dimensional Riemannian manifold, for $k=1,2$.  Let us first
compare the corresponding norms defined with respect to $g_1$ and
$g_2$.  In \App{dist.met}, we define a quasi-distance
\begin{equation}
  \label{eq:def.q-dist}
  \wt d_\infty(g_1,g_2)
  := \sup_{x \in M} \wt d(g_1,g_2)(x),
\end{equation}
where
\begin{equation}
  \label{eq:def.q-dist.loc}
  \wt d(g_1,g_2)(x) 
  := 2 \sinh \Bigl(\frac n 4 \cdot \max_k |\ln \alpha_k(x)| \Bigr)
  {\colour = \max_k \bigabs{\alpha_k(x)^{n/4}-\alpha_k(x)^{-n/4}}},
\end{equation}
and where $\alpha_k(x)$ is the $k$-th eigenvalue of the positive
definite endomorphism $A(x) \in \BdOp{T^*_x M}$ given by
\begin{equation}
  \label{eq:the-matrix-A(x)}
    g_2(x) (\xi, \zeta)= g_1(x)(A(x) \xi, \zeta), \qquad
          \xi, \zeta \in T^*_x M.
\end{equation}

Let us give a simple example.  If $g_2=\e^{2\mu}g_1$ for some
(bounded) function $\map \mu M \R$, then $g_2$ is a \emph{conformal
  perturbation} of $g_1$.  In this case, $A=\e^{2\mu}$, $\wt
d_\infty(g_1,g_2)=2 \sinh (n \norm[\infty] \mu/2)$ and
$d_\infty(g_1,g_2)=2 \norm[\infty] \mu$ (cf.\ \App{dist.met} for a
definition of $d_\infty(g_1,g_2)$), i.e., $\wt d(g_1,g_2)$ measures
the distortion rate of the conformal factor.

The following definition is standard (cf.,
e.g.,~\cite{mueller-salomonsen:07}).

\begin{definition}
  \label{def:quasi}
  We say that the metrics $g_1$, $g_2$ are \emph{quasi-isometric} if
  there exists a constant $\eta>0$ such that
  \begin{equation*}
    \eta g_1(x)(\xi,\xi) \leq g_2(x)(\xi,\xi)
    \le \eta^{-1} g_1(x)(\xi,\xi)
  \end{equation*}
  for all $\xi \in T^*M$ and $x \in M$.
\end{definition}
Note that $g_1,g_2$ are quasi-isometric if and only if
\begin{equation*}
  \wt d_\infty(g_1,g_2) < \infty
\end{equation*}
(see \Rem{q-iso} in \App{dist.met}).

The following equivalence of norms follows immediately from the
definition and~\eqref{eq:met.est}--\eqref{eq:met.est2}:
\begin{proposition}
  \label{prp:eq.norms}
  If the metrics $g_1$, $g_2$ on $M$ are quasi-isometric, i.e., if the
  quasi-distance fulfills $\wt d_\infty=\wt d_\infty(g_1,g_2)<\infty$,
  then
 \begin{gather*}
    (1+\wt d_\infty)^{-1} \norm[\HS_1] u
    \le \norm[\HS_2] u
    \le (1+\wt d_\infty) \norm[\HS_1] u,\\
    (1+\wt d_\infty)^{-1} \norm[\hat \HS_1] \omega
    \le \norm[\hat \HS_2] \omega
    \le (1+\wt d_\infty) \norm[\hat \HS_1] \omega,\\
    (1+\wt d_\infty)^{-1} \norm[\HS_1^1] u
    \le \norm[\HS_2^1] u
    \le (1+\wt d_\infty) \norm[\HS_1^1] u
  \end{gather*}
   for $u$ resp.\ $\omega$ in the appropriate spaces.  Here, $\HS_k :=
  \Lsqr{M,g_k}$, $\hat \HS_k := \Lsqr {T^*M,g_k}$ and $\HS_k^1 :=
  \Sob{M,g_k}$.  In particular, the spaces for $k=1$ and $k=2$ are
  identical as vector spaces and have equivalent norms.
\end{proposition}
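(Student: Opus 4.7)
The plan is to reduce all three equivalences to pointwise comparisons of densities and then integrate; the heavy lifting is already packaged into the estimates~\eqref{eq:met.est}--\eqref{eq:met.est2} of \App{dist.met}, so here I only need to assemble them. Recall that $A(x)$ is the positive endomorphism of $T^*_xM$ with $g_2(x)=g_1(x)(A(x)\cdot,\cdot)$, and that its eigenvalues $\alpha_k(x)$ control both the pointwise distortion of the cotangent inner product (namely $\abssqr[g_2] \omega/\abssqr[g_1]\omega \in [\alpha_{\min}(x),\alpha_{\max}(x)]$) and the distortion of the volume form (a power of $\det A(x)=\prod_k\alpha_k(x)$). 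The calibration $2\sinh(\tfrac{n}{4}\max_k|\ln\alpha_k|)$ in~\eqref{eq:def.q-dist.loc} is designed precisely so that the combined distortions in either direction are controlled by $(1+\wt d(g_1,g_2)(x))^{\pm 2}$; this is exactly what~\eqref{eq:met.est} and~\eqref{eq:met.est2} should provide.

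Given these pointwise bounds, the rest is bookkeeping. For $u\in\Cci M$ I would write $\normsqr[\HS_2] u=\int_M \abssqr u\dvol_{g_2}$, apply the pointwise volume estimate~\eqref{eq:met.est}, pass to $\wt d_\infty=\sup_x \wt d(g_1,g_2)(x)$, and compare with $\normsqr[\HS_1] u$ to obtain the first pair of inequalities. For a one-form $\omega$ the same strategy works, now combining the pointwise form-norm comparison with the volume comparison via~\eqref{eq:met.est2}, which yields the second pair of inequalities with the same constant. The Sobolev bound is then immediate: since $\normsqr[\HS_k^1] u=\normsqr[\HS_k] u+\normsqr[\hat\HS_k]{\de u}$, applying the first inequality to $u$ and the second to $\de u$ produces the claimed $\HS_1^1$--$\HS_2^1$ equivalence on $\Cci M$, and the same bound then extends to the closure.

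The identification of $\Sob{\Mscr_1}$ and $\Sob{\Mscr_2}$ as the same vector space falls out of this closure argument: both are, by definition, the completion of $\Cci M$ under norms that by the preceding step are equivalent on $\Cci M$, so their Cauchy sequences, and therefore the two completions, coincide. The only (mild) obstacle in the whole argument is trusting the specific exponent $n/4$ in~\eqref{eq:def.q-dist.loc}, which is tuned in \App{dist.met} precisely so that one and the same constant $1+\wt d_\infty$ absorbs both the $\det A$-distortion of volumes and the $\alpha_{\max}$-distortion of form norms with a uniform power; beyond that calibration, no additional ideas are needed.
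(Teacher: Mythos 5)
Your proof is correct and follows exactly the route the paper intends: the proposition is stated as an immediate consequence of the pointwise estimates~\eqref{eq:met.est}--\eqref{eq:met.est2} together with the definitions, and your spelled-out version (pointwise bound on $\rho$ resp.\ $\rho A$, integrate, pass to the supremum, add the two pieces for the Sobolev norm, and identify the completions) is precisely that argument made explicit.
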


Let us assume for the rest of this section that $\wt d_\infty(g_1,g_2)
< \infty$, i.e., that $g_1$, $g_2$ are quasi-isometric
(cf.~\Rem{q-iso}).

We let $\map I {\Lsqr {\Mscr_1}} {\Lsqr {\Mscr_2}}$, $If_1 := f_1$, denote
the natural identification operator; its adjoint, $I^*$, is given by
$I^*g = \rho g$, where $\rho=\rho_{g_2,g_1}$ is the density of
$\dvol_{g_2}$ with respect to $\dvol_{g_1}$, i.e.,
\begin{equation*}
  \dvol_{g_2} = \rho_{g_2,g_1} \dvol_{g_1}.
\end{equation*}
Note that $\rho=(\det A)^{-1/2}$.  Finally, we let $H_k$ denote the
(self-adjoint, non-negative) Laplacian operator $\Delta_{\Mscr_k}$
acting in $\Lsqr{\Mscr_k}$; for simplicity of notation, we write $R_k
:= (H_k+1)^{-1}$ for the resolvents.

The aim of the following is to {\colour[B6] find conditions which will
  allow us to show that, for sufficiently large $m \in \N$, the
  operators}
\begin{equation}
  \label{eq:2.5}
     V := R_2^m (H_2 I - I H_1) R_1^m
\end{equation}
can be written as a sum of products of Hilbert-Schmidt operators; this
will be achieved in \Lem{res.diff} {\colour[B6] and \Prp{hs}}. We
begin with some technicalities.  Let
\begin{align*}
  \map{S &:= 2 \sinh \frac 12 \ln \rho = \rho^{1/2} - \rho^{-1/2}}
      M \R,\\
  \map{\hat S &:= 2 \sinh \frac 12 \ln (\rho A) =
   (\rho A)^{1/2} - (\rho A)^{-1/2}}
      M {\BdOp {T^*M}},
\end{align*}
where $\BdOp{T^*M}$ denotes the vector bundle of (fiberwise)
endomorphisms of $T^*M$.  We denote the corresponding multiplication
operators on $\HS_k=\Lsqr{M,g_k}$ (resp.\ $\hat \HS_k=
\Lsqr{T^*M,g_k}$) by $S_k$ (resp.\ $\hat S_k$). (Strictly speaking,
$\hat S(x)$ acts as an endomorphism on the fiber $T_x^* M$, but we
call $\hat S_k$ also a multiplication operator.)  We use the
(pointwise) polar decomposition $S = \abs S (\sgn S)$ and $\hat S =
\abs {\hat S} (\sgn \hat S)$, where $\abs S(x) = \abs {S(x)} \ge 0$
and $\abs{\sgn S(x)}=1$, and where $\abs {\hat S}(x)$ is a
non-negative endomorphism and $\sgn \hat S(x)$ is unitary on $T_x^*
M$.

The following pointwise estimates will be used in \Prp{hs}.
\begin{lemma}
  \label{lem:s.est}
  We have the pointwise estimate
  \begin{equation*}
    0 \le \abs S, \abs[\BdOp{T^* \Mscr}]{\hat S} \le \wt d(g_1,g_2),
  \end{equation*}
  where $\abs[\BdOp{T^* \Mscr}]{\cdot}$ denotes the pointwise operator
  norm in $\BdOp{T^* \Mscr}$.
\end{lemma}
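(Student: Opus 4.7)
The plan is to work pointwise and use the spectral decomposition of the positive definite endomorphism $A(x) \in \BdOp{T^*_xM}$, with eigenvalues $\alpha_1(x),\dots,\alpha_n(x)>0$. Since $\rho = (\det A)^{-1/2}$, we have $\ln \rho = -\frac{1}{2}\sum_k \ln \alpha_k$. Also, the eigenvalues of $\rho A$ (as an endomorphism of $T^*_xM$) are $\rho\alpha_k$, and $\hat S = (\rho A)^{1/2} - (\rho A)^{-1/2}$ is diagonal in the same basis with eigenvalues $(\rho\alpha_k)^{1/2}-(\rho\alpha_k)^{-1/2} = 2\sinh(\tfrac{1}{2}\ln(\rho\alpha_k))$; hence
\begin{equation*}
  \abs[\BdOp{T^*\Mscr}]{\hat S(x)}
  = 2\sinh\Bigl(\tfrac{1}{2}\max_k \abs{\ln(\rho(x)\alpha_k(x))}\Bigr).
\end{equation*}

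For $S$, the identity $S = 2\sinh(\tfrac12\ln\rho)$ together with the oddness of $\sinh$ gives $\abs{S} = 2\sinh(\tfrac14\abs{\sum_k\ln\alpha_k})$, and the triangle inequality $\abs{\sum_k\ln\alpha_k}\le n\max_k\abs{\ln\alpha_k}$ yields, by monotonicity of $\sinh$ on $[0,\infty)$,
\begin{equation*}
  \abs{S(x)} \le 2\sinh\Bigl(\tfrac{n}{4}\max_k\abs{\ln\alpha_k(x)}\Bigr) = \wt d(g_1,g_2)(x),
\end{equation*}
which is the desired bound for $S$.

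For $\hat S$ the point that requires a little care—and is essentially the only nontrivial algebraic step—is that the naive estimate $\abs{\ln(\rho\alpha_k)} \le \abs{\ln\alpha_k}+\tfrac{1}{2}\sum_j\abs{\ln\alpha_j} \le (1+n/2)\max_j\abs{\ln\alpha_j}$ is too weak. Instead I rewrite, with $\beta_j := \ln\alpha_j$,
\begin{equation*}
  \ln(\rho\alpha_k)
  = \beta_k - \tfrac{1}{2}\sum_j \beta_j
  = \tfrac{1}{2}\Bigl(\beta_k - \sum_{j\ne k}\beta_j\Bigr),
\end{equation*}
so the triangle inequality gives $\abs{\ln(\rho\alpha_k)} \le \tfrac{1}{2}\sum_j\abs{\beta_j} \le \tfrac{n}{2}\max_j\abs{\beta_j}$. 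Combined with the formula for $\abs[\BdOp{T^*\Mscr}]{\hat S(x)}$ and monotonicity of $\sinh$ on $[0,\infty)$, this yields
\begin{equation*}
  \abs[\BdOp{T^*\Mscr}]{\hat S(x)}
  \le 2\sinh\Bigl(\tfrac{n}{4}\max_j\abs{\ln\alpha_j(x)}\Bigr)
  = \wt d(g_1,g_2)(x),
\end{equation*}
completing the proof. The non-negativity statements $\abs S\ge 0$ and $\abs[\BdOp{T^*\Mscr}]{\hat S}\ge 0$ are immediate from the definition of the (operator) absolute value. The only real subtlety is the rewriting above that balances the $\beta_k$ term against the trace term to save the extra factor of $1$ in the exponent.
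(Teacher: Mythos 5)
Your proof is correct and follows essentially the same route as the paper's: both diagonalize $\rho A$, observe that the $k$-th eigenvalue of $\tfrac12\ln(\rho A)$ equals $\tfrac12\bigl(\ln\alpha_k - \sum_{j\ne k}\ln\alpha_j\bigr)$, and bound its absolute value by $\tfrac{n}{2}\max_j\abs{\ln\alpha_j}$ before applying the monotonicity of $\sinh$. The only cosmetic difference is that you spell out the bound for $S$ (which the paper dismisses as analogous) and you explicitly flag the cancellation that avoids the weaker $(1+n/2)$ factor, a point the paper leaves implicit.
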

\begin{proof}
  Let us prove the estimate for $\hat S$, the estimate $0 \le \abs S
  \le \wt d(g_1,g_2)$ can be seen in the same way.  We have
  \begin{equation*}
    \abs[\BdOp{T^* \Mscr}]{\hat S}
    = \bigabs[\BdOp{T^*\Mscr}] {(\rho A)^{1/2} - (\rho A)^{-1/2}}
    = 2 \sinh \Bigabs[\BdOp{T^*\Mscr}] {\frac 12 \ln (\rho A)}.
  \end{equation*}
  Moreover, the $i$-th eigenvalue of $\ln (\rho A)$ is given by
  \begin{equation*}
    - \sum_{k=1}^n \frac {\ln \alpha_k} 2 + \ln \alpha_i.
  \end{equation*}
  Let $k_0$ be such that $\abs{\ln \alpha_{k_0}}= \max_k
  \abs{\ln\alpha_k}$.  Then
  \begin{equation*}
    \Bigl|-\sum_{k=1}^n \frac{\ln \alpha_k} 2 + \ln \alpha_i \Bigr|
    \le  \frac n2  \abs{\ln \alpha_{k_0}}.
  \end{equation*}
  Therefore, we have
  \begin{equation*}
    \abs[\BdOp{T^* \Mscr}]{\hat S}
    \le 2 \sinh \frac n4 \abs{\ln \alpha_{k_0}}
    = \wt d (g_1,g_2).\qedhere
  \end{equation*}
\end{proof}

Denote by $D_k$ the exterior derivative viewed as a closed operator from
$\HS_k$ into $\hat \HS_k$ with domain $\dom D_k =
\HS_k^1=\Sob{M,g_k}$.  We also set
\begin{align*}
  \map U {&\HS_1} {\HS_2}, \qquad U u := (\sgn S) \rho^{-1/2} u,\\
  \map {\hat U} {&\hat \HS_1} {\hat \HS_2}, \qquad 
     \hat U \omega := (\sgn \hat S)(\rho A)^{-1/2} \omega.
\end{align*}
It is easily seen that $U$ and $\hat U$ are unitary.
We now define $V$ as a quadratic form and provide a decomposition of
$V$ which involves two terms, each of them a product containing
factors of operators $B_k^{(m)}$ or $\hat B_k^{(m)}$, as defined
below. We will show in a second step that these factors are
Hilbert-Schmidt operators, provided $m$ is large enough.
\begin{lemma}
  \label{lem:res.diff}
  For any $m \in \N$ the operator $V$ in eqn.~\eqref{eq:2.5} can be
  written as
  \begin{equation*}
    V = \bigl(\hat B_2^{(m)}\bigr)^* \hat U \hat B_1^{(m)}
            - \bigl(B_2^{(m)}\bigr)^* U B_1^{(m-1)} H_1 R_1,
  \end{equation*}
  where
  \begin{equation*}
    \map{B_k^{(m)} := \abs{S_k}^{1/2} R_k^m}{\HS_k}{\HS_k}
    \quad\text{and}\quad
    \map{\hat B_k^{(m)}
      := \abs{\hat S_k}^{1/2} D_k R_k^m} {\HS_k}{\hat \HS_k}.
  \end{equation*}
\end{lemma}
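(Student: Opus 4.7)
The plan is to verify the formula by testing both sides against $\phi \in \Cci M$, so that all manipulations take place at the level of quadratic forms and pointwise multiplication on 1-forms. This avoids having to deal directly with delicate domain questions for $D_2^*$ applied to 1-forms that naturally live in $\hat\HS_1$.

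First, I simplify the two product expressions on the right. Since $\rho$ is scalar and $A$ is a pointwise-selfadjoint endomorphism of $T_x^* M$, every operator built from them---notably $\hat S$, $|\hat S|^{1/2}$, $\sgn\hat S$ and $(\rho A)^{-1/2}$---commutes on each fiber, giving
\begin{equation*}
  |\hat S_2|^{1/2}\,\hat U\,|\hat S_1|^{1/2}
  \;=\; (\rho A)^{-1/2}\hat S
  \;=\; 1 - (\rho A)^{-1},
  \qquad
  |S_2|^{1/2}\,U\,|S_1|^{1/2}
  \;=\; \rho^{-1/2} S
  \;=\; 1 - \rho^{-1},
\end{equation*}
as pointwise endomorphisms. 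Combined with $(\hat B_k^{(m)})^* = R_k^m D_k^* |\hat S_k|^{1/2}$, $(B_k^{(m)})^* = R_k^m |S_k|^{1/2}$ and $R_1^{m-1} H_1 R_1 = R_1^{m-1} - R_1^m = H_1 R_1^m$ (for $m \ge 1$), the claim reduces to the identity
\begin{equation*}
  V \;=\; R_2^m D_2^*\bigl[1-(\rho A)^{-1}\bigr] D_1 R_1^m
      \;-\; R_2^m \bigl[1-\rho^{-1}\bigr] H_1 R_1^m.
\end{equation*}

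Second, both sides of this reduced identity are bounded operators $\HS_1 \to \HS_2$, so it suffices to check equality of matrix elements $\iprod[\HS_2]{\psi}{\cdot f}$ for $\psi, f \in \Cci M$. Pulling $R_2^m$ to the left by selfadjointness and setting $\phi := R_2^m \psi \in \Sob[1]_2$, $u := R_1^m f \in \dom H_1$, the task becomes
\begin{equation*}
  \qf d_2(\phi, Iu) - \iprod[\HS_2]{\phi}{I H_1 u}
  \;=\; \iprod[\hat \HS_2]{d\phi}{[1-(\rho A)^{-1}] du}
        \,-\, \iprod[\HS_2]{\phi}{[1-\rho^{-1}] H_1 u}.
\end{equation*}
Rewriting $\HS_2$-inner products via $\dvol_{g_2} = \rho\,\dvol_{g_1}$ and $I^*\phi = \rho\phi \in \Sob[1]_1$ (which uses $g \in \Contspace[1,\alpha]_\loc$ so that $\rho$ is Lipschitz), the first representation theorem turns every pairing against $H_1 u$ into $\qf d_1$. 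The crucial pointwise identity
\begin{equation*}
  g_2\bigl(d\phi, (\rho A)^{-1} du\bigr)\,\dvol_{g_2}
  \;=\; g_1(d\phi, du)\,\dvol_{g_1},
\end{equation*}
which follows from $g_2(\cdot,\cdot) = g_1(A\cdot,\cdot)$, the $g_1$-selfadjointness of $A$ (so $g_1(A d\phi, A^{-1} du) = g_1(d\phi, du)$), and $\dvol_{g_2} = \rho\,\dvol_{g_1}$, produces the required cancellations, leaving both sides equal to $\int g_2(d\phi, du)\,\dvol_{g_2} - \int g_1(d(\rho\phi), du)\,\dvol_{g_1}$.

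The main obstacle is purely bookkeeping: the 1-forms $D_1 u$ and $D_2 I u$ coincide with $du$ as pointwise objects but lie in different Hilbert-space completions, and the adjoints $D_k^*$ and $I^*$ depend on the inner product used. Routing every computation through pairings with $\phi \in \Cci M$ sidesteps any commitment about which completion a given 1-form ``really'' belongs to, and in particular avoids asking whether $[1-(\rho A)^{-1}] du$ lies in $\dom D_2^*$; on bounded operators, the weak identity on such matrix elements is enough to conclude.
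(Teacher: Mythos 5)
Your overall strategy---verifying the operator identity weakly, by pairing against compactly supported functions, after simplifying $|\hat S_2|^{1/2}\hat U|\hat S_1|^{1/2}=1-(\rho A)^{-1}$, $|S_2|^{1/2}U|S_1|^{1/2}=1-\rho^{-1}$, and $B_1^{(m-1)}H_1R_1=|S_1|^{1/2}H_1R_1^m$---is sound, and the ``crucial pointwise identity'' $g_2(d\phi,(\rho A)^{-1}du)\,\dvol_{g_2}=g_1(d\phi,du)\,\dvol_{g_1}$ is correct. This is in essence the paper's own quadratic-form computation, just organised around $(\phi,u)=(R_2^m\psi,R_1^m f)$.

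There is, however, a genuine gap in the step where you write ``$I^*\phi=\rho\phi\in\Sobspace^1_1$'' and then ``the first representation theorem turns every pairing against $H_1 u$ into $\qf d_1$.'' This is exactly what needs justification: you are claiming $\rho\phi\in\dom\qf d_1=\Sob{\Mscr_1}$, which requires $\phi\,\de\rho\in\Lsqr{T^*\Mscr_1}$. Quasi-isometry only gives $\rho$ bounded above and below; it gives no global control of $\de\rho$, and $\phi=R_2^m\psi$ is not compactly supported. Local Lipschitz regularity of $\rho$ (from $g\in\Contspace[1,\alpha]_\loc$) does not repair this, since $\de\rho$ may grow at infinity. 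The paper flags precisely this: it introduces the second identification $I'=I^{-1}$ (with $(I')^*v=\rho^{-1}v$, applied to the \emph{function} $H_1h_1$, never to a 1-form) and remarks ``The advantage of using the identification operator $I'$ instead of $I^*$ in the quadratic form is to avoid a condition on $\de\rho$.'' Your route through $I^*\phi=\rho\phi$ reintroduces exactly the condition the lemma is designed to avoid.

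The fix is small and brings you back to the paper's computation: never pull $\rho$ across the derivative. Splitting the second term as $\iprod[\HS_2]{\phi}{[1-\rho^{-1}]H_1u}=\iprod[\HS_2]{\phi}{H_1u}-\iprod[\HS_1]{\phi}{H_1u}$ (the change of measure removes the multiplier $\rho^{-1}$, not the test function), and using $\iprod[\HS_1]{\phi}{H_1u}=\qf d_1(\phi,u)$ together with your pointwise identity in the form $\int g_2(d\phi,(\rho A)^{-1}du)\,\dvol_{g_2}=\qf d_1(\phi,u)$, both sides collapse to $\qf d_2(\phi,u)-\iprod[\HS_2]{\phi}{H_1u}$ with no appearance of $d(\rho\phi)$ and hence no hypothesis on $\de\rho$. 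Once this is done your proof coincides with the paper's.
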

\begin{proof}
  Let us first introduce a second identification operator $\map {I'}
  {\HS_2} {\HS_1}$ given by $I'f_2=f_2$ for $f_2 \in \HS_2$; note that
  $I'=I^{-1}$.  Letting $\qf h_k(u)=\normsqr[\hat \HS_k]{D_k u}$
  denote the quadratic forms of the operators $H_k$, \Prp{eq.norms}
  implies $I(\dom \qf h_1) = \dom \qf h_2$.  We now consider $f_k \in
  \HS_k$ and write $h_k := R_k^m f_k$. With $I^* h_2 = \rho h_2$ we
  then compute
  \begin{align*}
    \iprod[\HS_2]{V f_1} {f_2}
    &= (\qf h_2 (Ih_1, h_2) - \qf h_1 (h_1, I' h_2))
       - \iprod[\HS_2] {(I - (I')^*)H_1 h_1} {h_2} \\
    &= \int_M \bigl( \bigiprod[g_2] {(1- \rho^{-1} A^{-1}) dh_1}
                                                           {dh_2} 
                   - (1 - \rho^{-1}) (H_1 h_1) \, \conj h_2
              \bigr) \dvol_{g_2}\\
    &= \iprod[\hat \HS_2] {\hat U \abs{\hat{S}_1}^{1/2} \de h_1} 
                                       {\abs{\hat{S}_2}^{1/2} \de h_2}
        - \iprod[\HS_2] { U \abs{S_1}^{1/2} H_1 h_1} {\abs{S_2}^{1/2} h_2}
  \end{align*}
  and the desired factorization follows.
\end{proof}
The advantage of using the identification operator $I'$ instead of
$I^*$ in the quadratic form is to avoid a condition on $\de \rho$.

We define another ``distance'' between two metrics $g_1$, $g_2$ on the
$n$-dimensional manifold $M$: for a given continuous function $\map
{r_0} M {(0,1]}$ we let
\begin{equation}
  \label{eq:met.l1}
  \wt d_1(g_1,g_2)
  := \int_M \wt d(g_1,g_2)(x) \cdot r_0(x)^{-(n+2)} \cdot   
         (1+\rho_{g_2,g_1}(x)) \dvol_{g_1} (x)  
\end{equation}
denote the \emph{weighted $\Lpspace[1]$-quasi-distance} of $g_1$ and
$g_2$.  The factor $ 1+\rho_{g_2,g_1}$ is introduced in order to make
$\wt d_1$ symmetric.  We will be interested in situations where $\wt
d_1(g_1,g_2)$ is finite, and we will show in \Rem{d1.q-met} that $\wt
d_1$ is actually a quasi-distance (see \Def{q-metric}) on a suitable
subset.

We next discuss a Hilbert-Schmidt property of the operators
$B_k^{(m)}$ and $\hat B_k^{(m)}$, for $m$ large:
\begin{proposition}
  \label{prp:hs}
  Let $\map {r_0} M {(0,1]}$ be a continuous function. Suppose we are given
  two quasi-isometric metrics $g_1$, $g_2$ on $M$ (i.e., $\wt
  d_\infty(g_1,g_2)<\infty$) which have $r_0$ as a common lower bound
  for the harmonic radii, i.e., $\rH {(M,g_k)} (x) \ge r_0(x)$ for all
  $x \in M$ and $k = 1,2$.

  {\colour[B7] If $\wt d_1(g_1,g_2) < \infty$ then} for any $m \in \N$
  with $m \ge [n/4] + 2$, the operators $B_k^{(m)}$ and $\hat
  B_k^{(m)}$, defined in \Lem{res.diff}, are
  Hilbert-Schmidt. Furthermore, their Hilbert-Schmidt norms satisfy
  the estimate
  \begin{equation*}
    \bignormsqr[\HSspace] {B_k^{(m)}}, \;
    \bignormsqr[\HSspace] {\hat B_k^{(m)}}
    \le C \wt d_1(g_1,g_2), \qquad k = 1,2, 
  \end{equation*}
  where $C$ depends only on $m$, $n$, $p$ and $Q$.
\end{proposition}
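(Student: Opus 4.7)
The plan is to express $B_k^{(m)}$ and $\hat B_k^{(m)}$ as integral operators, control the multiplication factors $|S_k|^{1/2}$ and $|\hat S_k|^{1/2}$ via \Lem{s.est}, and reduce the Hilbert--Schmidt bound to pointwise estimates on the Schwartz kernel of $R_k^m$. Those pointwise estimates are the main technical content and will rest on elliptic regularity in harmonic coordinates.

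First I would write $R_k^m$ as an integral operator with a continuous, (conjugate-)symmetric Schwartz kernel $G_k^{(m)}(x,y)$, available once $m \ge [n/4]+2$ through the Sobolev embedding $\SobW {2m} U \hookrightarrow \Contspace$. Then $B_k^{(m)}$ has the scalar kernel $|S_k(x)|^{1/2}G_k^{(m)}(x,y)$ while $\hat B_k^{(m)}$ has the $T_x^*M$-valued kernel $|\hat S_k(x)|^{1/2} \de_x G_k^{(m)}(x,y)$. Computing the Hilbert--Schmidt norms kernel-wise and invoking \Lem{s.est} (which bounds both $|S_k(x)|$ and the operator norm of $|\hat S_k(x)|$ by $\wt d(g_1,g_2)(x)$) gives
\begin{align*}
  \bignormsqr[\HSspace]{B_k^{(m)}}
  &\le \int_M \wt d(g_1,g_2)(x) \int_M \abssqr{G_k^{(m)}(x,y)}
        \dvol_{g_k}(y)\,\dvol_{g_k}(x),\\
  \bignormsqr[\HSspace]{\hat B_k^{(m)}}
  &\le \int_M \wt d(g_1,g_2)(x) \int_M \abssqr[g_k]{\de_x G_k^{(m)}(x,y)}
        \dvol_{g_k}(y)\,\dvol_{g_k}(x).
\end{align*}

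The crucial step, to be carried out in \App{ptw.bd}, is to prove the pointwise bounds
\begin{equation*}
  \int_M \abssqr{G_k^{(m)}(x,y)} \dvol_{g_k}(y) \le C r_0(x)^{-n},
  \qquad
  \int_M \abssqr[g_k]{\de_x G_k^{(m)}(x,y)} \dvol_{g_k}(y)
    \le C r_0(x)^{-n-2},
\end{equation*}
uniformly in $x \in M$, with $C = C(m,n,p,Q)$. By Riesz duality these quantities are the squared norms of the evaluation functionals $f \mapsto (R_k^m f)(x)$ and $f \mapsto \abs[g_k]{\de(R_k^m f)(x)}$ on $\Lsqr{\Mscr_k}$. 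Working in harmonic coordinates on the geodesic ball $B_{\Mscr_k}(x,r_0(x))$, where the metric tensor obeys the uniform estimates~\eqref{eq:2.3}--\eqref{eq:2.4} and $\Delta_{g_k}$ has no first-order coefficients, iterated interior $\Lpspace$-regularity (\cite[Thm.~9.11]{gilbarg-trudinger:83}) applied to the tower $u_j := R_k^j f$ (satisfying $(H_k+1)u_j = u_{j-1}$), combined with the Sobolev embedding into $\Contspace[1]$ valid for $m \ge [n/4]+2$ and $p>n$, delivers $|R_k^m f(x)| \le C r_0(x)^{-n/2}\norm[\Mscr_k]{f}$ and $\abs[g_k]{\de(R_k^m f)(x)} \le C r_0(x)^{-n/2-1}\norm[\Mscr_k]{f}$. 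The powers of $r_0$ emerge from rescaling the harmonic chart to the unit ball, where the embedding and regularity constants depend only on $n,p,Q$.

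The conclusion is then bookkeeping. Since $r_0 \le 1$, we have $r_0^{-n} \le r_0^{-(n+2)}$; converting $\dvol_{g_k}$ to $\dvol_{g_1}$ via $\dvol_{g_2} = \rho_{g_2,g_1}\dvol_{g_1}$ bounds the integrand against $\dvol_{g_1}$ by a factor $1+\rho_{g_2,g_1}$ (covering $k=1$ via $1 \le 1+\rho_{g_2,g_1}$ and $k=2$ via $\rho_{g_2,g_1} \le 1+\rho_{g_2,g_1}$), so that altogether
\begin{equation*}
  \bignormsqr[\HSspace]{B_k^{(m)}},\;
  \bignormsqr[\HSspace]{\hat B_k^{(m)}}
  \le C \int_M \wt d(g_1,g_2)(x)\, r_0(x)^{-(n+2)}
         \bigl(1+\rho_{g_2,g_1}(x)\bigr) \dvol_{g_1}(x)
  = C \wt d_1(g_1,g_2).
\end{equation*}
The symmetrizing factor $1+\rho_{g_2,g_1}$ in~\eqref{eq:met.l1} is precisely what allows one estimate to serve for both values of $k$. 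The main obstacle is the pointwise step, since the constants must be controlled purely by $(m,n,p,Q)$; this requires harmonic coordinates on balls of exact scale $r_0(x)$ and careful tracking of how rescaling interacts with iterated $\Lpspace$-regularity and Sobolev embedding.
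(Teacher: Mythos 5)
Your proposal is correct and follows essentially the same route as the paper's proof: reduce the Hilbert--Schmidt norms of $B_k^{(m)}$ and $\hat B_k^{(m)}$ to weighted integrals of the kernels $G_k^{(m)}$ and $\de_x G_k^{(m)}$ using \Lem{s.est}, then invoke the pointwise kernel estimates from \Thm{b.1} (proved in \App{ptw.bd} via harmonic coordinates, iterated $\Lpspace$-regularity, and Sobolev embedding), and close with the $r_0\le 1$ and $(1+\rho_{g_2,g_1})$ bookkeeping to land on $\wt d_1(g_1,g_2)$. The only cosmetic quibble is that $f\mapsto\abs[g_k]{\de(R_k^mf)(x)}$ is not itself a linear functional; the Riesz argument should be applied to the ($T_x^*M$-valued) linear map $f\mapsto \de(R_k^mf)(x)$, which is what the paper does.
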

\begin{proof}
  By the Riesz Representation Theorem and \Thm{b.1} it is easy to see
  that the resolvents $R_k^m$ are integral operators with (measurable)
  kernels $G_k^{(m)}(x,y)$ satisfying
  \begin{equation*}
    \int_M \abssqr{G_k^{(m)}(x,y)}  \dvol_{g_k}(y)
    \le C (\min\{1,\rH{(M,g_k)}(x)\})^{-n}  
    \le C r_0(x)^{-n},
  \end{equation*}
  where $C$  depends only on $m$, $n$, $p$, and $Q$. 
 We similarly obtain for the kernels $\de_x G_k^{(m)}(x,y)$ of the $D_k R_k^m$ 
 that 
   \begin{equation*}
     \int_M \abssqr{\de_x G_k^{(m)}(x,y)}  \dvol_{g_k}(y)
    \le C (\min\{1,\rH{(M,g_k)}(x)\})^{-n-2}
     \le C r_0(x)^{-n-2}.
  \end{equation*}
  The Hilbert-Schmidt norm of $B_k^{(m)}$ is given by
   \begin{equation*}
     \normsqr[\HSspace]{B_k^{(m)}}
    = \int_{M \times M} \abs{S_k(x)} \abs{G_k^{(m)}(x,y)}^2 
      \dvol_{g_k}(y) \dvol_{g_k}(x),
  \end{equation*}
  hence with the previous estimate on the kernel and the pointwise
  estimate $\abs S \le \wt d(g_1,g_2)$ from \Lem{s.est} we obtain
  \begin{align*}
    \bignormsqr[\HSspace] {B_1^{(m)}}
    &\le C \int_M {\wt d}(g_1,g_2)(x) r_0(x)^{-n} \dvol_{g_1}(x)
    \le C \wt d_1(g_1,g_2),\\
    \bignormsqr[\HSspace] {B_2^{(m)}}
    &\le C \int_M {\wt d}(g_1,g_2)(x) r_0(x)^{-n} 
            \rho_{g_2,g_1}(x) \dvol_{g_1}(x)
    \le C \wt d_1(g_1,g_2).
  \end{align*}
  The estimate on $\bignormsqr[\HSspace] {\hat B_k^{(m)}}$ follows in a
  similar fashion.
\end{proof}

Combining the last two propositions leads to the following result:
\begin{corollary}
  \label{cor:res.diff}
  Under the assumptions of the preceding \Prp{hs}, the operator $V :=
  R_2^m (H_2 I - I H_1) R_1^m$ is trace class. Furthermore,
  $\norm[\TRspace] V$, the trace norm of $V$, satisfies the estimate
  \begin{equation*}
    \norm[\TRspace] V \le 2C  \wt d_1(g_1,g_2), 
  \end{equation*}
  where the constant $C$ depends only on $m, n, p$, and $Q$.
\end{corollary}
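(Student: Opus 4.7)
The plan is to deduce this directly from \Lem{res.diff} and \Prp{hs} by Hilbert--Schmidt arithmetic. By \Lem{res.diff},
\begin{equation*}
  V = T_1 + T_2, \qquad
  T_1 := \bigl(\hat B_2^{(m)}\bigr)^* \hat U \hat B_1^{(m)}, \qquad
  T_2 := -\bigl(B_2^{(m)}\bigr)^* U B_1^{(m-1)} H_1 R_1.
\end{equation*}
It suffices to bound $\norm[\TRspace]{T_i} \le C \wt d_1(g_1,g_2)$ for $i=1,2$; the claim then follows by the triangle inequality for the trace norm.

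For $T_1$: both outer factors are Hilbert--Schmidt by \Prp{hs} and the sandwiched operator $\hat U$ is unitary. Applying the standard estimate $\norm[\TRspace]{AB} \le \norm[\HSspace]{A}\norm[\HSspace]{B}$ to the grouping $T_1 = (\hat B_2^{(m)})^* \cdot (\hat U \hat B_1^{(m)})$, noting that unitarity of $\hat U$ gives $\norm[\HSspace]{\hat U \hat B_1^{(m)}} = \norm[\HSspace]{\hat B_1^{(m)}}$, and invoking the bounds from \Prp{hs}, one obtains
\begin{equation*}
  \norm[\TRspace]{T_1}
  \le \norm[\HSspace]{\hat B_2^{(m)}} \norm[\HSspace]{\hat B_1^{(m)}}
  \le C \wt d_1(g_1,g_2).
\end{equation*}

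For $T_2$: the key additional observation is that $H_1 R_1 = I - R_1$ is bounded on $\HS_1$ with operator norm at most $1$ by the spectral theorem. Grouping $T_2 = (B_2^{(m)})^* \cdot \bigl(U B_1^{(m-1)} H_1 R_1\bigr)$, the first factor is Hilbert--Schmidt by \Prp{hs}, while the second is Hilbert--Schmidt with norm controlled by $\norm[\HSspace]{B_1^{(m-1)}}$ (times the operator norms of the unitary $U$ and of $H_1 R_1$, both equal to one). The same argument as for $T_1$ then produces $\norm[\TRspace]{T_2} \le C \wt d_1(g_1,g_2)$, and summing the two estimates finishes the proof. The main (minor) obstacle is purely one of bookkeeping: the factorization from \Lem{res.diff} produces $B_1^{(m-1)}$ rather than $B_1^{(m)}$, so we must verify that \Prp{hs} still applies with $m-1$ in place of $m$. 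This can be arranged either by imposing $m \ge [n/4]+3$ from the outset or by observing that the pointwise kernel estimates from \App{ptw.bd} are already strong enough to absorb the loss of one resolvent power.
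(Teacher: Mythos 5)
Your proposal is correct and follows the approach the paper has in mind (the paper itself gives no explicit proof, just the remark that the corollary follows from the preceding lemma and proposition). You correctly flag the one genuinely subtle point: the second summand from \Lem{res.diff} contains $B_1^{(m-1)}$ rather than $B_1^{(m)}$, and \Prp{hs} as stated requires the superscript to be at least $[n/4]+2$. Of your two suggested repairs, only the second is viable here: the first ($m \ge [n/4]+3$) would weaken the corollary, which is later invoked with $m = [n/4]+2$ (in the proof of \Thm{main1} and in \Lem{5.6}). The second is exactly right, and one can be more precise about why: the proof of \Thm{b.1} explicitly observes that the pointwise bound \eqref{eq:b.1a} on $u = R^m f$ already holds for $m \ge [n/4]+1$; it is only the gradient bound \eqref{eq:b.1b} that needs $m \ge [n/4]+2$. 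Since the Hilbert--Schmidt estimate for $B_k^{(j)}$ (as opposed to $\hat B_k^{(j)}$) only uses \eqref{eq:b.1a}, the operator $B_1^{(m-1)}$ is Hilbert--Schmidt with $\normsqr[\HSspace]{B_1^{(m-1)}} \le C\,\wt d_1(g_1,g_2)$ whenever $m-1 \ge [n/4]+1$, i.e.\ $m \ge [n/4]+2$. Combined with $\norm{H_1 R_1} \le 1$ and the Schatten--H\"older estimate $\norm[\TRspace]{AB}\le\norm[\HSspace]{A}\norm[\HSspace]{B}$, your two bounds each give $C\,\wt d_1(g_1,g_2)$, yielding $2C\,\wt d_1(g_1,g_2)$ in total as claimed. (One minor notational caution: the symbol $I$ is reserved in this paper for the identification operator $\HS_1 \to \HS_2$, so $H_1 R_1 = \1 - R_1$ should use the identity symbol $\1$ rather than $I$.)
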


We are now ready for the main theorem of this section.  Recall that
$\Met_{r_0} (M)$ consists of the set of metrics $g$ on $M$ that
satisfy the lower bounds~\eqref{eq:lower.bd} for the injectivity
radius and the Ricci-curvature in terms of a continuous function $\map
{r_0} M {(0,1]}$.  Also recall the definition of the wave operators
$W_\pm(H_2,H_1,I)$ in \Sec{two.hs.scatt}.

\begin{theorem}
  \label{thm:main1}
  Suppose we are given a smooth manifold $M$ and a continuous function
  $\map{r_0} M {(0,1]}$.  Let $g_1, g_2 \in \Met_{r_0}(M)$ denote two
  quasi-isometric Riemannian metrics on $M$.  Furthermore, we assume
  that the difference between $g_1$ and $g_2$ satisfies the
  $r_0$-dependent weighted integral condition $\wt
  d_1(g_1,g_2)<\infty$, with $\wt d_1$ as in \eqref{eq:met.l1}.
  
  Then the wave operators $W_\pm(H_2, H_1, I)$ exist and are complete.
  Furthermore, the $W_\pm(H_2, H_1, I)$ are partial isometries with
  initial space $\HS_\ac(H_1)$ and final space $\HS_\ac(H_2)$.
\end{theorem}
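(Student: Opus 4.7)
The plan is to apply the Belopol'skii-Birman theorem (\cite[Thm~XI.13]{reed-simon-3}), with Cor res.diff supplying the required trace-class input, and then to extract completeness and the partial-isometry structure from a symmetry argument. Since $g_1, g_2 \in \Met_{r_0}(M)$, Prp lower.bd.rh yields a uniform lower bound $\rH{\Mscr_k}(x) \ge c r_0(x)$ on the harmonic radius of $\Mscr_k = (M, g_k)$, with $c = c(n, p, Q) > 0$; replacing $r_0$ by $\min\{1, c r_0\}$ only affects constants in $\wt d_1$. Quasi-isometry, combined with Prp eq.norms, ensures that the identification $I$ and its inverse $I' = I^{-1}$ are both bounded. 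Applying Cor res.diff with any $m \ge [n/4] + 2$ then gives that $V = R_2^m (H_2 I - I H_1) R_1^m$ is trace class, with $\norm[\TRspace] V \le C \wt d_1(g_1, g_2) < \infty$.

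This is precisely the input demanded by Belopol'skii-Birman. To match the theorem's standard bounded-operator formulation, I use the invariance principle applied to the strictly monotone $\phi(\lambda) = -(1+\lambda)^{-N}$ on $\spec{H_k} \subset [0, \infty)$, which identifies $W^\pm(H_2, H_1, I)$ with $W^\pm(\phi(H_2), \phi(H_1), I)$, together with the Dunford-Taylor (or Helffer-Sj\"ostrand) representation
\begin{equation*}
  \phi(H_2) I - I \phi(H_1)
  = \frac{1}{2\pi\im} \oint \phi(z)\,(z - H_2)^{-1}(H_2 I - I H_1)(z - H_1)^{-1}\,dz,
\end{equation*}
which exhibits the difference as trace class once $N$ is large enough, the excess resolvent powers being absorbed into bounded factors. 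Belopol'skii-Birman then yields existence of $W^\pm(H_2, H_1, I)$ on $\HS_\ac(H_1)$.

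For completeness and partial isometry I exploit symmetry. Both $\wt d_\infty$ and $\wt d_1$ are symmetric in $g_1, g_2$ (the factor $1 + \rho_{g_2, g_1}$ in \eqref{eq:met.l1} being engineered for precisely this purpose), so swapping the two metrics and using $I'$ in place of $I$ preserves all hypotheses; running the previous step on the swapped data gives existence of $W^\pm(H_1, H_2, I')$. Since $I'I = \mathbf{1}_{\HS_1}$ and $II' = \mathbf{1}_{\HS_2}$, the chain rule for wave operators produces
\begin{equation*}
  W^\pm(H_1, H_2, I')\, W^\pm(H_2, H_1, I) = P_\ac(H_1), \qquad
  W^\pm(H_2, H_1, I)\, W^\pm(H_1, H_2, I') = P_\ac(H_2),
\end{equation*}
which delivers completeness in the sense of \eqref{eq:2.2}. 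The partial-isometry property then follows from $W^{\pm *}W^\pm = \slimpm \e^{\im t H_1}\, \rho_{g_2, g_1}\, \e^{-\im t H_1} P_\ac(H_1)$ together with a RAGE-type argument using that $\rho_{g_2, g_1} - \mathbf{1}$ is $H_1$-compact as a consequence of the weighted integrability $\wt d_1 < \infty$, forcing the strong limit to equal $P_\ac(H_1)$. The principal technical obstacle is the trace-class step: the bound on $V$ constrains only the symmetrically-balanced combination of resolvent powers, so unfolding it into $\phi(H_2) I - I \phi(H_1)$ requires either the contour integral above or a reorganization of a telescoping identity via $H_k R_k^\ell = R_k^{\ell - 1} - R_k^\ell$ (as in the proof of Lem res.diff) to keep the resolvent powers balanced on both sides.
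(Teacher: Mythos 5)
Your plan starts in the right place: feed Cor~\ref{cor:res.diff} into the Belopol'skii--Birman theorem. But you then take a substantial and unnecessary detour --- the invariance principle combined with a Dunford--Taylor/Helffer--Sj\"ostrand representation of $\phi(H_2)I - I\phi(H_1)$ --- and you yourself flag that this step is not actually carried out (``the principal technical obstacle''). This is a genuine gap as written, because Cor~\ref{cor:res.diff} controls only the balanced combination $R_2^m(H_2 I - I H_1)R_1^m$ and does not directly bound the first-order resolvent sandwich appearing under the contour integral. More importantly, the detour is superfluous: the formulation of Belopol'skii--Birman in \cite[Thm~XI.13]{reed-simon-3} that the paper invokes is set up precisely to accept a \emph{local} trace-class hypothesis on spectral intervals, namely that $E_{\mathbb I}(H_2)(H_2 I - I H_1)E_{\mathbb I}(H_1)$ be trace class for each bounded interval $\mathbb I$. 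This falls straight out of Cor~\ref{cor:res.diff}: since $E_{\mathbb I}(H_k)(H_k+1)^m$ is bounded, one simply writes
\begin{equation*}
  E_{\mathbb I}(H_2)(H_2 I - I H_1)E_{\mathbb I}(H_1)
  = \bigl(E_{\mathbb I}(H_2)(H_2+1)^m\bigr)\,V\,\bigl((H_1+1)^m E_{\mathbb I}(H_1)\bigr),
\end{equation*}
which is trace class. No invariance principle, no contour integral, no rebalancing of resolvent powers is needed. The remaining Belopol'skii--Birman hypotheses are also supplied directly: $(I^*I-1)E_{\mathbb I}(H_1)$ is compact because $(I^*I-1)R_1^m = (\rho-1)R_1^m = T_1 B_1^{(m)}$ with $T_1$ bounded and $B_1^{(m)}$ Hilbert--Schmidt by Prp~\ref{prp:hs}; and the form-domain condition $I(\dom\qf d_1)=\dom\qf d_2$ is exactly what Prp~\ref{prp:eq.norms} gives. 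You invoke Prp~\ref{prp:eq.norms} only for boundedness of $I$ and $I^{-1}$ and never verify the form-domain condition; that omission matters, since it is one of the explicit hypotheses of the theorem.

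Once the hypotheses of \cite[Thm~XI.13]{reed-simon-3} are in hand, existence, completeness, and the partial-isometry property for $W^\pm(H_2,H_1,I)$ (and for $W^\pm(H_1,H_2,I^{-1})$) are all delivered by the theorem in one stroke. Your swap-the-metrics-plus-chain-rule argument for completeness, and your $W^{\pm*}W^\pm = \slimpm P_\ac(H_1)\e^{itH_1}I^*I\,\e^{-itH_1}P_\ac(H_1)$ argument using compactness of $\rho-1$ against the local weak decay of $\e^{-itH_1}$ on $\HS_\ac(H_1)$, are both correct in spirit but redundant: they re-derive conclusions Belopol'skii--Birman already provides. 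The paper's proof is considerably shorter because it verifies the three hypotheses above and then cites the theorem for all conclusions at once.
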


\begin{proof}
  From \Prp{lower.bd.rh}, we obtain the lower bound $\rH{(M,g_k)}(x)
  \ge c r_0(x)$ on the harmonic radii.  {\colour[A5]%
    We are now going to check the assumptions of the
    Birman-Belopol'skii theorem as given in \Thm{birman-belopolskii}
    or~\cite[Thm.~XI.13]{reed-simon-3}:} That $I$ is bounded and has a
  bounded inverse is nothing but the equivalence of the
  $\Lsqrspace$-norms on $\Lsqr {\Mscr_k}$, cf.~\Prp{eq.norms}.  Note
  that $I^{-1}$ is simply the identification $I^{-1}h = h$, for $h \in
  \Lsqr{\Mscr_2}$.

  For the trace class condition, let $m \in \N$ satisfy $m \ge [n/4] +
  2$ and let $E_{\mathbb I}(H)$ denote the spectral projection of $H$
  associated with a bounded interval $\mathbb I \subset \R$.  As
  $E_{\mathbb I}(H)(H+1)^m$ is bounded, it follows from \Cor{res.diff}
  that $E_{\mathbb I}(H_2) (H_2 I - I H_1) E_{\mathbb I}(H_1)$ is
  trace class.

  Moreover, $V_1:=(I^*I - 1) R_1^m = (\rho - 1) R_1^m = T_1
  B_1^{(m)}$, where $T_1=\rho^{1/2} \sgn S \abs S^{1/2}$ is a bounded
  multiplication operator.  From \Prp{hs} we see, using again $\wt
  d_1(g_1,g_2)<\infty$, that $B_1^{(m)}$ (and therefore $V_1$) is
  Hilbert-Schmidt.  In particular, $(I^*I - 1)E_{\mathbb I}(H_1)$ is
  compact.

  We also have $I(\dom \qf h_1) = \dom \qf h_2$ (by \Prp{eq.norms}),
  where $\dom \qf h_k$ is the quadratic form domain of $H_k$.  Note
  that we have used the quasi-isometry of $g_1$ and $g_2$ here.  The
  desired results now follow from \Thm{birman-belopolskii}.
\end{proof}

\begin{remark}
  \label{rem:ends}
  In typical applications to manifolds with ends it is evident that
  $\rho(x) \to 1$ ``at infinity'' in the sense that for each $\eps >
  0$ there is a compact subset $M_\eps \subset M$ such that $|1 -
  \rho(x)| < \eps$ for all $x \notin M \setminus M_\eps$.  The
  compactness of $(I^*I - 1) (H_1 + 1)^{-m}$ is then immediate by
  local compactness of the Laplacian.
\end{remark}

\begin{remark}
  \label{rem:mueller-salomonsen}
  Let us comment on the result by M\"uller and
  Salomonsen~\cite{mueller-salomonsen:07}, mentioned in the
  introduction: We first note that the assumptions of Theorem~0.1
  of~\cite{mueller-salomonsen:07} already imply the assumptions of our
  \Thm{main1}, namely, $g_1 \sim_\beta^2 g_2$ in the sense
  of~\cite{mueller-salomonsen:07} implies that the metrics are
  quasi-isometric (this follows
  from~\cite[Lem.~1.7]{mueller-salomonsen:07}) and that $\wt
  d(g_1,g_2)(x) \le C_1 |g_1(x)-g_2(x)|_{g_1(x)} \le C_2 \beta(x)$.
  Here, $\beta$ is a function called \emph{of moderate decay}
  (actually, $\beta$ is a function of $d_{g_1}(x,p)$ for some
  reference point $p \in M$), and in particular bounded.

  {\colour[B8] M\"uller and Salomonsen require a \emph{weighted bound}
    on a modified injectivitiy radius which implies that} $\wt
  d_1(g_1,g_2) < \infty$: From condition~(iii) in their Thm.~0.1,
  it follows that the injectivity radius $\inj_{(M,g_1)}(x)$ is
  bounded from below by $r_0(x):=C_3 \beta(x)^{2a/(3n(n+2))}$ for some
  $a \le 1$.  Since the curvature is assumed to be bounded in Thm.~0.1
  of~\cite{mueller-salomonsen:07}, our condition on the Ricci
  curvature in~\eqref{eq:lower.bd} is automatically fulfilled.
  Moreover, by~(ii) in Thm.~0.1, one has $\beta^{b/3} \in
  \Lp[1]{M,g_1}$ with $b=2-a \ge 1$.  The latter condition, together
  with $d(g_1,g_2)(x) \le C_2 \beta(x)$, implies that
  \begin{equation*}
    \wt d_1(g_1,g_2) =
    \int_M {\wt d(g_1,g_2)(x)}{r_0(x)^{-(n+2)}} (1+\rho_{g_2,g_1}(x)) \dd x
    \le C_4 \int_M \beta(x)^{1-2 a/(3n)} \dd x.
  \end{equation*}
  Since $1-2 a/(3n)=1-4/(3n)+2b/(3n) \ge b/3$ for $n \ge 2$, and since
  $\beta$ is bounded, it follows from $\beta^{b/3} \in \Lp[1]{M,g_1}$
  that $\wt d_1(g_1,g_2) < \infty$.

  Let us also note that~\cite{mueller-salomonsen:07} require more
  regularity on the deviation of the metric $g_2$ from $g_1$. In fact,
  $g_1 \sim_\beta^2 g_2$ means that the derivatives up to order $2$
  have to be close to each other with respect to a weight function
  $\beta$; furthermore, the manifolds $(M,g_i)$ {\colour are supposed
    to have curvature bounded up to order $2$}.  Our assumptions here
  are weaker in the sense that there are only \emph{relative}
  conditions on the metrics, i.e., the metrics $g_1$ and $g_2$ only
  have to be quasi-isometric and that $\wt d_1(g_1,g_2) < \infty$.  No
  \emph{global} boundedness assumption on the curvature of $(M,g_i)$
  has to be made, and \emph{no} condition on the derivatives of $g_1$
  and $g_2$.
\end{remark}


%
\section{Manifolds with ends}
\label{sec:mfds.ends}
%

The general setup presented in this section is fairly standard and has
been used in a similar way by many authors; {\colour[A8] cf.\
  \Rem{mfd.with.ends} below.}  Let $M$ be a smooth, orientable,
connected $n$-dimensional manifold and let $g$ be a metric on $M$ such
that the Riemannian manifold $\Mscr = (M,g)$ is complete.  Our
manifolds with ends are characterized by geometric and spectral
assumptions. We first describe the geometry:

\begin{assumption}
  \label{ass:geo}
  We assume that $M$ can be decomposed into $\ell + 1$ open
  submanifolds $M_k$, $k=0, \dots, \ell$, where $\clo M_0$ is compact
  and $\clo M_k$, $k=1,\dots, \ell$, are non-compact.  More precisely,
  we assume that the boundaries {\colour[B9] $\Sigma_k := \partial M_k
    = \clo M_k \cap \clo M_0$, $k = 1, \ldots, \ell$, are pairwise
    disjoint, smooth and compact manifolds of dimension $n-1$; in
    particular,}
  \begin{equation*}
    M = \clo M_0 \cup M_1 \cup \dots \cup M_\ell.
  \end{equation*} 
\end{assumption}

We denote the corresponding Riemannian manifolds by $\Mscr_k=
(M_k,g)$, $k=0,\dots,\ell$.  In addition, we denote the boundaries,
now considered as Riemannian manifolds, by $\bd \mathcal M_k =
\mathcal S_k := (\Sigma_k, \iota^* g)$, $k=1,\dots, \ell$, where
$\embmap \iota \Sigma M$ denotes the natural embedding and $\iota^* g$
the induced metric.

\begin{remark} 
  If we allow metrics $g$ of class $\Contspace [1,\alpha]_\loc(M)$,
  then the induced metric $\iota_k^* g$ is of class $\Cont
  [1,\alpha]{\partial M_k}$ which is sufficient for our purposes,
  especially for the proof of \Prp{4.3}; cf.\ also~\cite{akklt:03} for
  a related discussion of regularity properties.
\end{remark}

We next turn to the spectral assumptions.  Let $H$ denote the
Laplacian of $\Mscr$ and let $H_k$ denote the Laplacian of $\Mscr_k$,
$k = 0, \dots, \ell$, where $H_0$ satisfies Dirichlet boundary
conditions on
\begin{equation*}
  \Sigma := \Sigma_1 \cup \dots \cup \Sigma_\ell = {\colour \bd M_0},
\end{equation*}
while the $H_k$ satisfy Dirichlet boundary conditions on $\Sigma_k$.
We can define $H_0$ as the self-adjoint operator associated with the
quadratic form $\qf h_0$ (cf.~\cite[Thm.~VI.2.1]{kato:66}), where $\qf
h_0$ is given by $\qf h_0(u) := \normsqr[\Lsqr{T^*\Mscr_0}] {\de u}$
and $\dom \qf h_0 := \Sobn{\Mscr_0}$.  Moreover, the Sobolev space
$\Sobn{\Mscr_0}$ is the completion of $\Cci{M_0}$ (functions with
support away from $\bd M_0$) in the norm of $\Sob{\Mscr_0}$.  The
operators $H_k$, $k=1,\dots, \ell$, are defined in a similar way.
 
We will need the decoupled Laplacians $H_\dec = \mybigoplus_{k=0}^\ell
H_k$ and $H'_\dec = \mybigoplus_{k=1}^\ell H_k$.  Finally, let
$\map{I_k} {\Lsqr{\Mscr_k}}{\Lsqr \Mscr}$ denote the natural
embedding, for $k = 0, \ldots, \ell$, and, similarly, let $I$ denote
the natural embedding of $\mybigoplus_{k=1}^\ell \Lsqr {\Mscr_k}$ into
$\Lsqr \Mscr$.  The Laplacian $H_0$ of $\Mscr_0$ has compact resolvent
and thus purely discrete spectrum.

\begin{assumption}
  \label{ass:spec}
  In addition to \Ass{geo}, we assume that each of the decoupled
  Laplacians $H_k$, $k = 1, \dots, \ell$, has a (non-trivial)
  absolutely continuous part.
\end{assumption}

\begin{remark}
  Given \Ass{geo}, this is an assumption on the metric $g$.  If
  \Ass{spec} is satisfied, each of the ends of $\Mscr$ will constitute
  a scattering channel.  \Ass{spec} is made chiefly for simplicity of
  notation later on.  Indeed, it would be easy to adapt our results to
  the case where some of the ends do \emph{not} participate in the
  scattering at all; e.g., an (infinite) horn may have purely discrete
  spectrum if it shrinks fast enough (cf.~\cite{bruening:89}).  This
  point will be discussed further in \Sec{example}.
\end{remark}
{\colour[A8]
\begin{remark}
  \label{rem:mfd.with.ends}
  As mentioned above, this setup can be considered standard. In fact,
  in many papers it is even assumed right from the beginning that each
  end of the manifold $\Mscr$ is given as a warped product on $N
  \times (0,\infty)$ where $N$ is a compact manifold (cf.,
  e.g.,~\cite{dbhs:92} for an early, and~\cite{kenichi-nakamura:10}
  for a recent reference). In particular, many examples are
  constructed using simple coordinate systems of this type, and our
  examples in \Sec{example} are no exception to that.

  Let us also note that the concept of a scattering manifold in the
  sense of Melrose fits into our general scheme.  Indeed, Melrose
  removes from a compact, smooth manifold a finite number of open sets
  $U_k$ (with $\partial U_k$ smooth and $\overline{U}_k$ pairwise
  disjoint, $k = 1, \ldots, \ell$), requiring the metric to become
  singular near $\partial U_k$ in a specific way. Here we may consider
  open sets $V_k \supset \overline{U}_k$, again with $\partial V_k$
  smooth and $\overline{V}_k$ pairwise disjoint, to model the ends on
  $M_k := V_k \setminus \overline{U}_k$; of course, the metric has to
  be singular near $\partial U_k$ in a certain sense.
\end{remark} 
}
We next discuss the decoupling by Dirichlet boundary conditions on the
submanifold $\Sigma$.  Similar decoupling arguments have been used
(for the smooth case) in Birman~\cite{birman:62, birman:63},
Weder~\cite{wed:84}, Yafaev~\cite{yafaev:92}, Hempel and
Weder~\cite{hempel-weder:93}, and Carron~\cite{carron:02}, to name
just a few.  {\colour[A9] There are different approaches to check that
  a trace class condition is satisfied; for example,
  Carron~\cite{carron:02} uses techniques from pseudo-differential
  operators for showing that the resolvent difference (denoted by $V$
  below) is trace class.  Here, we rather use techniques from elliptic
  regularity (see \Prp{b.5} in the appendix). Our method requires
  minimal regularity of the Riemannian metric only and uses methods
  from \App{ptw.bd} that are indispensable for our analysis anyway.}

\begin{proposition}
  \label{prp:4.3}
  Under the above \Ass{geo}, 
  the wave operators
  \begin{equation}
    \label{eq:3.1} 
    W_\pm(H, H_\dec)
    = \slim_{t \to \pm \infty} \e^{\im tH} \e^{-\im tH_\dec}
                                P_\ac(H_\dec)
  \end{equation}
  and
  \begin{equation}
    \label{eq:3.2} 
    W_\pm(H, H'_\dec, I)
    = \slim_{t\to \pm\infty} \e^{\im tH} I \e^{-\im tH'_\dec}
                             P_\ac(H'_\dec)
  \end{equation}
  exist, are complete, and partially isometric. In particular,
  \begin{equation}
    \label{eq:3.3}
    \ran W_\pm(H, H_\dec) = \HS_\ac(H),
  \end{equation}
  where
  \begin{equation}
    \label{eq:3.4}  
    \ran W_\pm(H, H_\dec) = \ran W_\pm(H, H'_\dec, I)
    = \mybigoplus_{k=1}^\ell \ran W_\pm(H, H_k, I_k).
  \end{equation}
\end{proposition}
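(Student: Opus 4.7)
The plan is to apply the Belopol'skii--Birman theorem \cite[Thm.~XI.13]{reed-simon-3} to the pair $(H, H_\dec)$ on the common Hilbert space $\Lsqr{\Mscr}$, and to deduce the remaining statements from elementary structural considerations. First I would dispose of the compact piece: since $\clo{M_0}$ is compact with smooth boundary, $H_0$ has compact resolvent, hence $\HS_\ac(H_0)=\{0\}$. Under the natural identification $\bigoplus_{k=0}^\ell \Lsqr{\Mscr_k}\simeq \Lsqr{\Mscr}$ induced by $I_0,\dots,I_\ell$, this gives $P_\ac(H_\dec)=0\oplus P_\ac(H'_\dec)$, so that, once existence is established, the strong limits in~\eqref{eq:3.1} and~\eqref{eq:3.2} agree on $\HS_\ac(H_\dec)=\HS_\ac(H'_\dec)$. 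This is the first equality in~\eqref{eq:3.4}. The second equality is then immediate from the direct-sum structure $H'_\dec=\bigoplus_{k=1}^\ell H_k$: the evolution and the spectral projection both split, so applying the strong limit channel by channel yields $W_\pm(H,H'_\dec,I)=\bigoplus_{k=1}^\ell W_\pm(H,H_k,I_k)$. Equation~\eqref{eq:3.3} is the completeness assertion itself, and the partial isometry property follows as usual from the intertwining identity $HW_\pm = W_\pm H_\dec$ on $\HS_\ac(H_\dec)$.

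The real content is the existence and completeness for the pair $(H,H_\dec)$. My strategy mirrors the proof of \Thm{main1}: for some $m\in\N$ with $m\ge[n/4]+2$ I would show that
\[
    (H+1)^{-m} - (H_\dec+1)^{-m}
\]
is trace class on $\Lsqr{\Mscr}$, and then invoke the Belopol'skii--Birman theorem. The compactness of $(I^*I-1)E_{\mathbb I}(H_\dec)$ that the theorem also requires is trivial here, since $H$ and $H_\dec$ act on the same Hilbert space and $I$ is just the identity. The key structural point is that, unlike in \Thm{main1}, the ``perturbation'' is not a change of metric but the imposition of a Dirichlet condition on the compact hypersurface $\Sigma$: both form domains lie in $\Sob{\Mscr}$ and coincide away from $\Sigma$, with $\qf d_\dec$ being the restriction of $\qf d$ to functions with vanishing trace on $\Sigma$.

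The main obstacle is the trace class estimate itself, which I would establish via a Birman--Krein-type factorization through $\Lsqr{\Sigma}$. Writing $v=(H+1)^{-1}u$ and $z=(H_\dec+1)^{-1}u$, the difference $w=v-z$ lies in $\Sob{\Mscr}$, satisfies $(\Delta_g+1)w=0$ distributionally on $M\setminus\Sigma$, and on each $M_k$ is determined by its boundary trace $v|_{\Sigma_k}$ via a Poisson-type extension operator $\mathcal P_k$ for $H_k+1$. This factors $(H+1)^{-1}-(H_\dec+1)^{-1}$ as $\mathcal P\circ\gamma_\Sigma\circ(H+1)^{-1}$, where $\gamma_\Sigma$ is the $\Sigma$-trace map. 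Compactness of $\Sigma$ and a finite covering by harmonic charts near $\Sigma$ then reduce the required Schatten estimate to the Sobolev trace theorem on the compact $(n-1)$-manifold $\Sigma$ combined with the pointwise kernel and gradient bounds for resolvent powers derived in \App{ptw.bd}; as in \Prp{hs}, this produces trace class for $m$ of order $[n/4]+2$. A shorter alternative is to cite Carron~\cite{carron:02}, where this Dirichlet decoupling on a compact hypersurface is treated in essentially the generality needed here.
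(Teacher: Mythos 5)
Your proposal follows essentially the same route as the paper: dispose of the compact piece $M_0$ and the direct-sum structure by elementary observations, then establish existence and completeness for the pair $(H,H_\dec)$ by a trace-class argument that factors through $\Lsqr{\Sigma}$ using compactness of $\Sigma$, elliptic regularity, and the resolvent kernel/gradient bounds of \App{ptw.bd}, finally invoking Belopol'skii--Birman. The only cosmetic difference is in how the factorization is packaged: you express the resolvent difference as $\mathcal P\circ\gamma_\Sigma\circ R$ with a Poisson-type extension operator, whereas the paper works directly with the form-sandwiched commutator $V=R_\dec^m(H-H_\dec)R^m$ and uses Green's formula to exhibit it as $(\Gamma R^m)^*(\Gamma' R_\dec^m)$ with $\Gamma$ the trace and $\Gamma'$ the normal-derivative jump, each factor then shown Hilbert--Schmidt; both bookkeepings lead to the same estimate, and you also correctly note the option of citing Carron.
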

\begin{proof} The proof is a modification of the proof of \Thm{main1}
  and we only give a sketch.  We first need some notation.  For $k =
  1, \ldots, \ell$, let $\mathcal U_k =(U_k,g)$ be a collar
  neighborhood of $\partial \mathscr M_k=\mathcal S_k$ in $\mathscr
  M_k$.  Similarly, we let $\mathcal U_0$ denote a collar neighborhood
  of $\partial \Mscr_0 = \mathcal S_1 \cup \dots \cup \mathcal S_\ell$
  in $\Mscr_0$.  Without loss {\colour[B10] of generality} we may
  assume that $U_0, \dots, U_\ell$ are relatively compact.  For
  brevity, we denote the $(\ell + 1)$-tuple of these collar
  neighborhoods as $\mathcal U$ and we let $\Sob[2]{\mathcal U}$
  denote the $(\ell + 1)$-tuple of the second order Sobolev spaces
  $\Sob[2]{\mathcal U_k}$ with $k = 0, \dots, \ell$.  We define the
  boundary operators
  \begin{align*}
    \map{\Gamma}{\Sob \Mscr} {\mathcal G}, &&
    \Gamma f &:= f \restr \Sigma,\\
    \map{\Gamma'}{\Sob[2] {\mathcal U}} {\mathcal G}, &&
    \Gamma' f &:= \partial_{\mathrm n_+} f \restr \Sigma +
                \partial_{\mathrm n_-}f \restr \Sigma,
  \end{align*}
  where $\mathcal G= \Lsqr {\mathcal S} = \mybigoplus_k \Lsqr
  {\mathcal S_k}$, $\partial_{\mathrm n_\pm} f =\de f \cdot \mathrm
  n_\pm$, and where $\mathrm n_\pm$ are unit vector fields on
  $\mathcal U_k$ resp.\ $\mathcal U_0$, normal to $\Sigma_k$ and
  pointing outwards of $\Mscr_k$ resp.\ of $\Mscr_0$.

  Defining $V$ via $\iprod {V\wt f} {\wt h} := \iprod{R_\dec^m \wt
    f}{H R^m \wt h} - \iprod{H_\dec R_\dec^m \wt f}{R^m \wt h}$ for
  $\wt f, \wt h \in \HS = \Lsqr \Mscr$ and fixing $m \in \N$, $m \ge
  [n/4] + 2$, we then have
  \begin{equation*}
    \iprod {V \wt f} {\wt h}
    = \iprod f {H h} - \iprod {H_\dec f} h
    = \iprod[\mathcal G] {\Gamma' f}{\Gamma h}
    = \iprod {(\Gamma R^m)^* (\Gamma' R_\dec^m) \wt f} {\wt h}
  \end{equation*}
  by Green's formula, where $f=R_\dec^m \wt f$ and $h=R^m \wt h$.  We
  have to show that $A := \Gamma R^m$ and $B:= \Gamma' R_\dec^m$ are
  Hilbert-Schmidt operators.  Their integral kernels are given by
  $\Gamma_x G^{(m)}(x,y)$ and $\Gamma_x'G^{(m)}_\dec(x,y)$
  respectively, for $x \in \Sigma$, $y \in M$, where $G^{(m)}$ resp.\
  $G^{(m)}_\dec$ denote the kernel of $R^m$ resp.\ $R_\dec^m$.
  {\colour[B11] Let $\Cont[1]{\Mscr}$ denote $\Cont[1] M$ equipped
    with the usual Fr\'echet-topology generated by the semi-norms
    $p_K(u) := \sup_{x\in K}(|u(x)| + |\dd u(x)|_g)$, with $K \subset
    M$ compact. Similarly, we let $\Cont[1]{\overline{\Mscr}_s}$
    denote the space of functions $u \in \Cont[1]{\overline{M}_s}$
    with the same family of semi-norms as above. The restrictions
    $\chi_{\overline{U}_s}$ define continuous embeddings of
    $\Cont[1]{\overline{\Mscr}_s}$ into $
    \Cont[1]{\overline{\Uscr}_s}$, for $s = 0, \ldots, \ell$.  By
    \Prp{b.5}, $ R^m$ maps $\HS$ continuously into $\Cont[1]{\Mscr}$,
    while $\chi_{\overline{U}_s} R_\dec^m$ maps $\HS$ continuously
    into $\Cont[1]{\overline{\Uscr}_s}$.}  As in the proof of
  \Prp{hs}, it follows by the Riesz Theorem that there is a constant
  $\wt C>0$ such that for all $x \in \Sigma$
  \begin{equation*}
    \int_M \bigabssqr{\Gamma_x G^{(m)}(x,y)} \dvol_g(y) \le \wt C, 
   \qquad
    \int_M \bigabssqr{\Gamma'_x G^{(m)}(x,y)} \dvol_g(y) \le \wt C.
  \end{equation*}
  Since $\Sigma$ is compact, $A$ and $B$ are Hilbert-Schmidt and the
  result follows.
\end{proof}

We now discuss incoming and outgoing states. It follows
from~\eqref{eq:3.3}--\eqref{eq:3.4} that for each state $f \in
\HS_\ac(H)$ there are vectors $g_{k;\pm} \in \HS_{\rm ac}(H_k)$, $k=1,
\dots, \ell$, such that
\begin{equation}
  \label{eq:3.5}
  \Bignorm {\e^{\pm \im tH} f  -  
            \mybigoplus_{k=1}^\ell I_k \e^{\pm \im t H_k} g_{k;\pm}} 
  \to 0, \qquad t \to \pm \infty. 
\end{equation}
We call $\ran W_\pm(H, H_k, I_k)$ the \emph{outgoing} (for $+$) and
the \emph{incoming} subspace (for $-$) of the $k$-th scattering
channel.  In this sense, scattering on $\Mscr$ can be understood as an
interaction of $\ell$ scattering channels. If $K \subset M$ is compact
and $f \in \HS_\ac(H)$, then $\chi_K \e^{\im tH}f \to 0$, as $t \to
\pm \infty$, by the usual arguments.

In many cases (and in particular in concrete examples), one would like
to describe the long-time behavior of the system by suitable
\emph{asymptotes:} for any given channel $\Mscr_k$, $k \in \{1,
\ldots, \ell\}$, there may exist a (simple) \emph{comparison dynamics}
given by an operator $h_k$, acting in a Hilbert space $\HS_k$, which
can be used to describe the asymptotic evolution of $\e^{-\im tH_k}f$
as $t \to \pm\infty$, for $f$ in the absolutely continuous subspace of
$H_k$.  More precisely, suppose there is a bounded operator
$\map{j_k}{\HS_k} {\Lsqr {\Mscr_k}}$ such that the wave operators
\begin{equation}
  \label{eq:3.6} 
  W_\pm(H_k, h_k, j_k)
  := \slim_{t \to \pm \infty} \e^{\im t H_k} j_k
  \e^{-\im t h_k} P_\ac(h_k)  
\end{equation}
exist, are partially isometric, and complete, so that, in particular,
\begin{equation*}
  \ran W_\pm(H_k, h_k, j_k) = \HS_\ac(H_k). 
\end{equation*}
If we assume, for the moment, that such reference operators $h_k$ exist for
\emph{all} $k = 1, \ldots, \ell$, the chain rule implies that
\begin{align*}
    W_\pm(H, \mybigoplus_{k=1}^\ell h_k, Ij)
    & = W_\pm(H, H'_\dec, I) 
         \circ W_\pm(H'_\dec, \mybigoplus_{k=1}^\ell h_k, j)\\
    & = W_\pm(H,H'_\dec,I)
         \circ 
           \Bigl( \mybigoplus_{k=1}^\ell W_\pm (H_k, h_k, j_k) \Bigr),
\end{align*}
where $j$ is the direct sum of the $j_k$. In view of~\eqref{eq:3.4} we
then see that
\begin{equation}
  \label{eq:3.7}
  \HS_\ac(H) = \mybigoplus_{k=1}^\ell \ran W_\pm(H, h_k, I_k j_k). 
\end{equation}
Note that the definition of the incoming/outgoing subspaces is
independent of the choice of the comparison dynamics given by the pairs
$(\HS_k, h_k)$.

Standard examples for ends include (asymptotically) Euclidean ends
where we may take the (flat) Laplacian on $\R^n$ to define the
comparison dynamics, half-cylinders, cusps or horns, funnels, etc. In
many of these examples the notion of ``incoming/outgoing'' corresponds
to the geometric notion of coming in from infinity or going out to
infinity; cf.\ \Eq{3.5}. Some of these examples will be discussed in
more detail in \Sec{example}.

Abstracting from the above situation, we use the following
terminology:
\begin{definition}
  \label{def:ref.op} 
  Let $H$ and $\Mscr$ satisfy \Asss{geo}{spec}.  Suppose we are given
  Hilbert spaces $\HS_k$ and self-adjoint operators $h_k$ acting in
  $\HS_k$, for $k =1, \ldots, \ell$.  We say that $h := \mybigoplus_k
  h_k$ is a \emph{reference operator with $\ell$ channels} for the
  Laplacian $H$ on $\Mscr$, if there are bounded operators
  $\map{J_k}{\HS_k}{\Lsqr{\mathcal M_k}}$ such that the following
  holds:

  If we define the \emph{identification operator} $\map J
  {\HS=\mybigoplus_k \HS_k} {\Lsqr \Mscr}$ by $Jf := \sum_k J_k f_k$
  then the wave operators $W_\pm(H, h, J)$ exist and are complete, and
  they are partially isometric with initial space $\mybigoplus_k
  \HS_\ac(h_k)$ and final space $\HS_\ac(H)$.

  We then define the associated \emph{scattering operator}
  by
  \begin{equation}
    \label{eq:3.8} 
    \map{S=S(H,h,J) := W_+^*(H,h,J) \circ W_-(H,h,J)} \HS \HS. 
  \end{equation}
\end{definition}

\begin{remark}
  \label{rem:ref.op}
  Note that we always have the trivial choice $h_k = H_k$, $\HS_k =
  \Lsqr{\Mscr_k}$ which leads to the scattering operator $S(H,
  H_\dec',I)$ with $H_\dec'$ and $I$ as in the beginning of this
  section.  While the scattering operators $S(H, h, J)$ depend on the
  choice of the reference operators, the chain rule implies that they
  are all unitarily equivalent to $S(H, H_\dec', I)$.

  {\colour In many instances a ``natural'' choice of a reference
    dynamics can be derived from the geometry of the ends, cf.\ e.g.\
    the recent work of Ito and Skibsted~\cite{kenichi-skibsted:13b}
    and the literature quoted there.}
\end{remark}

In order to study the interaction of the channels, we introduce the
\emph{scattering matrix} $(S_{ik})_{i,k =1, \ldots, \ell}$ where
\begin{equation*}
  \map{S_{ik} =  W_+^*(H,h_i,J_i) \circ W_-(H,h_k,J_k)}
  {\HS_k} {\HS_i};
\end{equation*}
in particular, $Sf = \bigl(\sum_{k=1}^\ell S_{ik}
f_k)_{i=1,\dots,\ell}$ for all $f \in \HS$.  Clearly, $S_{ik} \ne 0$
for a given pair of indices $(i,k)$ is equivalent to the fact that
some incoming states which for $t \to -\infty$ asymptotically lie in
the $k$-th end have a non-zero asymptotic part in the $i$-th end, as
$t \to +\infty$.  Put differently, $S_{ik} \ne 0$ is equivalent to
having non-zero transmission from the $k$-th into the $i$-th channel
or end, i.e., the $k$-th scattering channel is open to the $i$-th
channel.
 
Note that $S_{ik} \ne 0$ is equivalent to $\ran W_-(H,h_k,J_k) \cap
\ran W_+(H,h_i,J_i) \ne \{0\}$.  Furthermore, $\ran W_-(H,h_k,J_k)=
\ran W_+(H,h_k,J_k)$ implies that $S_{ik}=0$ for all $i \ne k$.
Conversely, suppose that $S_{ik} \ne 0$ for some $i \ne k$, then,
necessarily, $\ran W_-(H,h_k,J_k) \ne \ran W_+(H,h_k,J_k)$. 

The property $S_{ik} \ne 0$ is symmetric in the indices $i$ and $k$,
as we will show now: Denoting by $T$ the operator of complex
conjugation, $T\varphi(x)= \conj {\varphi(x)}$ for complex-valued
functions $\varphi$ {\colour[A10] ($T$ is the operator of time
  reversal in Quantum Mechanics),} we clearly have
\begin{equation*}
  T \e^{-\im t H}= e^{\im t H}T, \qquad 
  T \e^{-\im t h_k}= \e^{\im t h_k}T, 
\end{equation*}
and it follows that 
\begin{equation*}
  T W_{\pm}(H, h_k, J_k) = W_{\mp}(H,h_k, J_k) T, \qquad
  T W^\ast_\pm(H, h_k, J_k) = W^\ast_\mp (H,h_k, J_k) T. 
\end{equation*}
Then,
\begin{equation*}
  T S_{ik}
  = T W_+^*(H,h_i,J_i) \circ W_-(H,h_k,J_k)
  = W_-^*(H,h_i,J_i) \circ W_+(H,h_k,J_k) T
  =  S_{ki}^\ast T;  
\end{equation*} 
as a consequence, $S_{ik} \neq 0 \; \Longleftrightarrow \; S_{ki}^\ast
\neq 0 \; \Longleftrightarrow \; S_{ki}\neq 0$.  We are thus justified
in saying that the $i$-th and the $k$-th scattering channel are open
to one another.

In the next section, we will show that the property $S_{ik} \ne 0$ is
stable under small perturbations of the metric.

%
\section{Non-zero transmission under perturbation of the metric}
\label{sec:non-zero.transm}
%

In this section, we prove strong continuity of the scattering matrix
$S$ with respect to perturbations of the metric.  This immediately
implies the result mentioned at the end of \Sec{mfds.ends} on the
stability of the openness of scattering channels.
 
Let $M$ be as in \Ass{geo} and let $g_0$ be a metric on $M$ satisfying
\Ass{spec}; we consider $g_0$ as being fixed.  Let $r_0 > 0$ be a
function satisfying~\eqref{eq:lower.bd} for $\Mscr_0 = (M, g_0)$, and
let $H_0$ denote the Laplacian of $\Mscr_0$. Let
$h_0=\mybigoplus_{k=1}^\ell h_k$ denote a reference operator for $H_0$
with $\ell$ scattering channels as in \Def{ref.op}; in particular,
there is a Hilbert space $\HS_0$ and a bounded operator $\map J
{\HS_0} {\Lsqr \Mscr}$ such that the wave operators $W_\pm(H_0, h_0,
J)$ exist and are complete.  We may then define the associated
scattering operator $S(H_0, h_0, J)$ as in \Eq{3.8}.

We first describe the set of admissible metrics that are close to the
metric $g_0$.  Recall the definitions of $\wt d_\infty$ and $\wt d_1$
in \Sec{ex.wo}, eqns.~\eqref{eq:def.q-dist},
\eqref{eq:def.q-dist.loc}, and~\eqref{eq:met.l1}.  For $\gamma>0$ and
$\eps>0$ we set
 \begin{equation}
  \label{eq:met.ball}
  \Met_{r_0}(M,g_0, \gamma, \eps)
  := \bigset{g \in \Met_{r_0}(M)}
        {\wt d_\infty(g_0,g) \le \gamma, \quad
        \wt d_1(g_0,g) \le \eps},
\end{equation}
i.e., $\Met_{r_0}(M,g_0,\gamma,\eps)$ is the set of
smooth\footnote{For less regular metrics, cf.~\Rem{regularity}.}
metrics $g$ enjoying the following properties:
\begin{enumerate}
\item The homogenized Ricci curvature and the homogenized injectivity
  radius of $g$ are controlled \emph{locally} from below by the
  function $r_0$, cf.~\eqref{eq:lower.bd}.
\item The metric $g$ is quasi-isometric to $g_0$ ($\wt d_\infty(g_0,g)
  \le \gamma < \infty$, cf.~\Rem{q-iso}).
\item The weighted $\Lpspace[1]$-quasi-distance $\wt d_1(g_0,g)$ is
  smaller than or equal to $\eps$.  The quasi-distance is defined
  in~\eqref{eq:met.l1} with respect to the weight function $r_0$.
\end{enumerate}
We comment on the structure of the space $\Met_{r_0}(M,g_0,\gamma)$ of
admissible metrics later on in \Rem{d1.q-met}. 

Let us now present a stability result for the scattering operator
under perturbation of the metric:

\begin{theorem}
  \label{thm:main2}
  Let $g_0$, $H_0$, $h_0$ and $S(H_0,h_0,J)$ as above.  For $g_\eps
  \in \Met_{r_0}(M,g_0, \gamma, \eps)$, denote by $H_\eps$ the
  Laplacian of $\Mscr_\eps = (M, g_\eps)$ and by $I_\eps \colon {\Lsqr
    {\Mscr_0}} \to {\Lsqr {\Mscr_\eps}}$ the natural identification.
  Then $h_0$ is also a reference operator for $H_\eps$, and the
  associated scattering operators $S(H_\eps, h_0, I_\eps J)$ converge
  strongly to $S(H_0, h_0, J)$, as $\eps \to 0$.
\end{theorem}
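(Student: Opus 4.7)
The plan is to reduce, via the chain rule for wave operators, the continuity of the full scattering operator $S(H_\eps, h_0, I_\eps J)$ to the strong convergence of the intermediate scattering operator for the pair $(H_\eps, H_0)$, and then to derive the latter from the trace-norm bound of \Cor{res.diff}. Since $g_\eps \in \Met_{r_0}(M, g_0, \gamma, \eps)$ means precisely that $g_0, g_\eps$ are quasi-isometric elements of $\Met_{r_0}(M)$ with $\wt d_1(g_0, g_\eps) \le \eps$, \Thm{main1} applies directly and yields that $W_\pm(H_\eps, H_0, I_\eps)$ exist, are complete, and are partial isometries from $\HS_\ac(H_0)$ onto $\HS_\ac(H_\eps)$. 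The chain rule then produces
\[
  W_\pm(H_\eps, h_0, I_\eps J) = W_\pm(H_\eps, H_0, I_\eps) \circ W_\pm(H_0, h_0, J),
\]
which exist, are complete and partially isometric from $\mybigoplus_{k=1}^\ell \HS_\ac(h_k)$ onto $\HS_\ac(H_\eps)$, so that $h_0$ is a reference operator for $H_\eps$. Setting $\wt S_\eps := W_+^*(H_\eps, H_0, I_\eps) \circ W_-(H_\eps, H_0, I_\eps)$, a unitary on $\HS_\ac(H_0)$, the same chain rule yields
\[
  S(H_\eps, h_0, I_\eps J) = W_+^*(H_0, h_0, J) \circ \wt S_\eps \circ W_-(H_0, h_0, J),
\]
with $\wt S_0 = P_\ac(H_0)$. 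It therefore suffices to show $\wt S_\eps \to P_\ac(H_0)$ strongly as $\eps \to 0$.

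For this, the essential input is the estimate $\norm[\TRspace]{V_\eps} \le 2C\eps$ from \Cor{res.diff}, where $V_\eps := R_\eps^m(H_\eps I_\eps - I_\eps H_0) R_0^m$. For vectors in the dense subspace of $\HS_\ac(H_0)$ of the form $f = \varphi(H_0)\psi$ with $\varphi \in \Cci{(0,\infty)}$ supported in the a.c.\ spectrum of $H_0$, a Cook-type argument based on the identity
\[
  \bigl(W_-(H_\eps, H_0, I_\eps) - W_+(H_\eps, H_0, I_\eps)\bigr) f = -\im \int_{-\infty}^{+\infty} \e^{\im t H_\eps}\,(H_\eps I_\eps - I_\eps H_0)\, \e^{-\im t H_0} f \, dt
\]
expresses the deviation of $\wt S_\eps$ from the identity as a $t$-integral. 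Using the factorization of \Lem{res.diff} together with the Hilbert-Schmidt estimates of \Prp{hs} to control the integrand in suitable norms, and invoking the uniform norm equivalences of \Prp{eq.norms}, one sees that this integral tends to zero in $\HS_\ac(H_0)$ as $\norm[\TRspace]{V_\eps} \to 0$. Since the operators $\wt S_\eps$ are uniformly bounded (being unitary on $\HS_\ac(H_0)$), a density argument then upgrades this to strong convergence $\wt S_\eps \to P_\ac(H_0)$ on all of $\HS_\ac(H_0)$, whence the result.

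The main obstacle lies in the last step: one must convert the integrated geometric smallness $\wt d_1(g_0, g_\eps) \to 0$ (equivalently, $\norm[\TRspace]{V_\eps} \to 0$) into strong-operator convergence of the intermediate scattering operator, while handling the two-Hilbert-space setting in which the target spaces $\Lsqr{\Mscr_\eps}$ themselves vary with $\eps$. A standard approach is to invoke Birman-type stability results for wave operators under trace-norm-convergent perturbations; the delicate point here is the $\eps$-uniform treatment of the high-energy cutoff $\varphi(H_0)$ and the careful use of the quasi-isometric bounds of \Prp{eq.norms} to pull all estimates back to the fixed reference space $\Lsqr{\Mscr_0}$.
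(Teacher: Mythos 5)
Your reduction via the chain rule is exactly what the paper does, and the algebraic identity
\[
  S(H_\eps, h_0, I_\eps J) = W_+^*(H_0, h_0, J)\,\wt S_\eps\,W_-(H_0, h_0, J),
  \qquad
  \wt S_\eps = W_+^*(H_\eps, H_0, I_\eps)\,W_-(H_\eps, H_0, I_\eps),
\]
is a correct reformulation: since the $\wt S_\eps$ are unitary on $\HS_\ac(H_0)$, weak convergence $\wt S_\eps \to P_\ac(H_0)$ already forces strong convergence, and then the factorization yields $S_\eps \to S_0$ strongly. Up to this point you are on the same track as the paper (which, instead of $\wt S_\eps$, proves the equivalent statement $I_\eps^* W_\pm(H_\eps, H_0, I_\eps) \to P_\ac(H_0)$ strongly and reads off weak convergence of $S_\eps$ from the inner-product formula).

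The step that does not hold up is the Cook-integral argument. The identity
\[
  \bigl(W_- - W_+\bigr)(H_\eps, H_0, I_\eps)\,f
  = -\im \int_{-\infty}^{+\infty} \e^{\im t H_\eps}\,(H_\eps I_\eps - I_\eps H_0)\, \e^{-\im t H_0} f \, \dd t
\]
is only formal: the integrand need not be absolutely integrable over $t \in \R$, the unsmoothed commutator $H_\eps I_\eps - I_\eps H_0$ is an unbounded operator (only $R_\eps^m(\cdots)R_0^m$ is trace class), and Hilbert--Schmidt bounds on factors of $V_\eps$ give no decay in $t$. So "controlling the integrand in suitable norms" does not yield that the integral is $o(1)$ as $\eps\to 0$ -- this is precisely the point where the naive Cook argument fails and where the Kato--Birman smooth-perturbation theory has to take over. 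The paper deals with this via the \emph{quantitative} form of the Kato--Birman estimate (the Corollary following Theorem~XI.7 in \cite{reed-simon-3}): it gives
\[
  \normsqr{W_\pm(H_\eps, H_0, R_\eps^m I_\eps R_0^m)\phi - R_\eps^m I_\eps R_0^m P_\ac(H_0)\phi}
  \le 16\pi\,\norm[\TRspace]{V_\eps}\,\norm{I_\eps}\,\triplenormsqr\phi
\]
for $H_0$-smooth vectors $\phi$, which is the mechanism by which $\norm[\TRspace]{V_\eps}\to 0$ is converted into strong convergence. Note that this estimate refers to the \emph{smoothed} identification $R_\eps^m I_\eps R_0^m$, not to $I_\eps$ itself; the paper then needs two further ingredients that are missing from your argument: the intertwining relation $W_\pm(H_\eps, H_0, I_\eps) R_0^{2m} = W_\pm(H_\eps, H_0, R_\eps^m I_\eps R_0^m)$, and the operator-norm convergence $\norm{R_\eps^m I_\eps R_0^m - I_\eps R_0^{2m}} \to 0$, which is itself proved from a variant of \Lem{res.diff} and \Prp{hs}. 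Your "invoke Birman-type stability results" is the right instinct, but as written it replaces the proof's core quantitative step by a gesture; the Cook identity you supply does not substitute for it.
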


\begin{remark}
  \label{rem:main2}
  In particular, it follows that, for $\eps > 0$ small, all $\ell$
  channels constitute a scattering channel for $H_\eps$. Moreover, if
  the $k$-th and the $i$-th channels are open to each other for $H$
  they will also be open to each other for $H_{\eps}$. Conversely,
  suppose there are $k_0, i_0 \in \{1, \ldots, \ell\}$ such that the
  $k_0$-th and the $i_0$-th channels are \emph{not open to each other}
  for a sequence $(H_{\eps_j})_{j\in\N}$ where $\eps_j \to 0$ as $j
  \to \infty$.  Then, the $k_0$-th and the $i_0$-th channels are also
  not open to each other for $H$.
\end{remark}

\begin{proof}[Proof of \Thm{main2}]
  It follows from \Thm{main1} that the wave operators
  $W_\pm(H_\eps,H_0,I_\eps)$ exist, are complete and partially
  isometric.  The same is true for the wave operators
  $W_\pm(H_0,h_0,J)$ by \Def{ref.op}, with $\map J {\HS_0}
  {\Lsqr{\Mscr_0}}$ as above.  By the chain rule, we have existence
  and completeness of the wave operators
  \begin{equation*}
    W_\pm(H_\eps, h_0, I_\eps J)
    = W_\pm(H_\eps, H_0, I_\eps) \circ W_\pm(H_0, h_0, J).
  \end{equation*}
  In particular, $h_0$ is also a reference operator for $H_\eps$, and
  the associated scattering operator is given by
  \begin{equation*}
  S_\eps :=  S(H_\eps, h_0, I_\eps J)
    = \bigl(W_+(H_\eps, h_0, I_\eps J) \bigr)^*
         \circ W_-(H_\eps, h_0, I_\eps J). 
  \end{equation*}
  For  $u, v \in \HS_\ac(h_0)$, we then have
  \begin{align*}
  &{\iprod [\HS_0] {S_\eps u} v} \\
  & =  \bigiprod [\Lsqr{\Mscr_\eps}] 
            { W_-(H_\eps,H_0,I_\eps) \cdot W_-(H_0, h_0, J) u}
            { W_+(H_\eps,H_0,I_\eps) \cdot W_+(H_0, h_0, J) v}
            \\
  & = \bigiprod [\Lsqr{\Mscr_0}] 
           {I_\eps^* \cdot W_-(H_\eps,H_0,I_\eps) \cdot W_-(H_0, h_0, J) u}
           {I_\eps^* \cdot W_+(H_\eps,H_0,I_\eps) \cdot W_+(H_0, h_0, J) v}. 
  \end{align*}
  Using \Lem{5.6} below, we find that
  \begin{equation*}
    {\iprod  [\HS_0] {S_\eps u} v}
    \to \bigiprod [\Lsqr{\Mscr_0} ] {P_\ac(H_0) \circ W_-(H_0, h_0, J)u}
               {P_\ac(H_0) \circ W_+(H_0, h_0, J)v} 
    =   {\iprod [\HS_0] {S_0 u} v} 
  \end{equation*}
  as $\eps \to 0$, where $S_0 = S(H_0, h_0, J)$.   
 The strong convergence follows from the fact that
  the operators $S(H_\eps)$ are unitary.
\end{proof}

As we observed in \Rem{main2}, \Thm{main2} immediately gives the
following stability result for the scattering matrix:
 
\begin{corollary}
  \label{cor:main2}
  Let $M$, $g_0$ and $H_0$ as above and suppose that for a given pair
  of indices $i,k \in \{1, \ldots, \ell \}$ we have
  \begin{equation}
    \label{eq:scatt-non-triv}
    S_{ik}(H_0, h_0, J) \ne 0.
  \end{equation}
  Then for any $\gamma > 0$ fixed, there exists $\eps_0 > 0$ such that
  $S_{ik}(H_\eps, h_0, I_\eps J) \ne 0$ for all metrics $g_\eps \in
  \Met_{r_0}(M,g_0,\gamma,\eps)$ and all $0< \eps \le \eps_0$.
\end{corollary}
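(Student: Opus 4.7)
The plan is to deduce the corollary as an essentially soft consequence of \Thm{main2}, following the contrapositive sketched in \Rem{main2}. The first step is to reformulate the hypothesis $S_{ik}(H_0,h_0,J) \ne 0$ as the non-vanishing of a single matrix element of the total scattering operator. Since $h_0 = \mybigoplus_{k=1}^\ell h_k$ acts diagonally on $\HS_0 = \mybigoplus_{k=1}^\ell \HS_k$ and $J = \sum_k J_k$, the channel scattering operators $S_{ik}$ are precisely the block entries of $S(H_\eps, h_0, I_\eps J)$ with respect to this direct sum. Writing $\iota_k \colon \HS_k \hookrightarrow \HS_0$ for the canonical inclusion, one has
\begin{equation*}
   \iprod[\HS_i]{S_{ik}(H_\eps, h_0, I_\eps J) u}{v}
   = \iprod[\HS_0]{S(H_\eps, h_0, I_\eps J)\, \iota_k u}{\iota_i v},
   \qquad u \in \HS_k,\; v \in \HS_i,
\end{equation*}
and the analogous identity at $\eps = 0$. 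The hypothesis then supplies vectors $u \in \HS_{\mathrm{ac}}(h_k)$ and $v \in \HS_{\mathrm{ac}}(h_i)$ with $\iprod[\HS_0]{S(H_0,h_0,J)\,\iota_k u}{\iota_i v} =: \kappa \ne 0$.

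The corollary itself I would then prove by contradiction. Suppose the conclusion fails for the given $\gamma > 0$. Then one can extract a sequence $\eps_j \searrow 0$ together with metrics $g_{\eps_j} \in \Met_{r_0}(M,g_0,\gamma,\eps_j)$ for which $S_{ik}(H_{\eps_j}, h_0, I_{\eps_j} J) = 0$ for every $j \in \N$. Applying \Thm{main2} along this sequence yields strong convergence $S(H_{\eps_j}, h_0, I_{\eps_j} J) \to S(H_0, h_0, J)$, hence convergence of the matrix element against the fixed pair $(\iota_k u, \iota_i v)$ to $\kappa \ne 0$. But by construction each such matrix element vanishes, giving the desired contradiction and thereby producing the claimed $\eps_0 = \eps_0(\gamma) > 0$.

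The argument is thus essentially soft once \Thm{main2} is in hand; the analytic work of the paper has already been paid for there and in \Cor{res.diff}. The only point that deserves attention within this corollary is that the strong convergence in \Thm{main2} holds \emph{uniformly} over the whole ball $\Met_{r_0}(M,g_0,\gamma,\eps)$ as $\eps \to 0^+$, so that any sequence extracted from a putative counterexample genuinely falls within its scope. This uniformity, however, is built into the proof of \Thm{main2} via the trace-norm estimate of \Cor{res.diff}, whose right-hand side controls the perturbation $R_\eps^m(H_\eps I_\eps - I_\eps H_0) R_0^m$ through the single quantity $\wt d_1(g_0, g_\eps) \le \eps$. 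Consequently the real obstacle of the theory does not lie in this corollary but in the verification of the trace-class condition underlying \Thm{main2}.
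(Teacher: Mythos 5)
Your proof is correct and follows exactly the route the paper intends: the paper gives no standalone argument for the corollary, pointing instead to \Rem{main2}, which sketches precisely your contrapositive (a sequence $\eps_j \to 0$ with $S_{ik}(H_{\eps_j},h_0,I_{\eps_j}J)=0$ forcing $S_{ik}(H_0,h_0,J)=0$ via the strong convergence of \Thm{main2}). You were right to flag and resolve the only nontrivial point, namely that the convergence in \Thm{main2} is uniform over the ball $\Met_{r_0}(M,g_0,\gamma,\eps)$ because the trace-norm estimate of \Cor{res.diff} is controlled solely by $\wt d_1(g_0,g_\eps)\le\eps$; with that observed, the extraction-and-contradiction argument closes the gap cleanly.
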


\begin{remark}
  \begin{myenumerate}{\alph}
  \item As discussed in \Rem{ref.op}, the property $S_{ik}(H_0,h_0,J)
    \ne 0$ is independent of the choice of the reference operator
    $h_0$.

  \item It is not easy to establish property~\eqref{eq:scatt-non-triv}
  in concrete situations. We will give simple examples with $\ell=2$ in
  \Sec{example} where we exploit rotational symmetry.
 \end{myenumerate}
\end{remark}

{\colour[A11] The following lemma has been used in the proof of
  \Thm{main2}.  We consider here a family of metrics $(g_\eps)_\eps$
  converging to a metric $g_0$ in the sense of~\eqref{eq:met.ball}, so
  that, in particular, $\wt d_1(g_\eps,g_0) \le \eps$.}

\begin{lemma}
  \label{lem:5.6}
  With the assumptions and notation of \Thm{main2}, let $H_\eps$ be
  the Laplacian of $\Mscr_\eps=(M, g_\eps)$ associated with a metric
  $g_\eps \in \Met_{r_0}(M,g_0,\gamma,\eps)$, for $0 < \eps < \eps_0$.
  Then
  \begin{equation}
    \label{eq:5.3}
    \slim_{\eps \to 0} I_\eps^* W_\pm(H_\eps, H_0, I_\eps) = P_\ac(H_0).
  \end{equation}
\end{lemma}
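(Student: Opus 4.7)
The plan is to show $I_\eps^* W_\pm(H_\eps, H_0, I_\eps) u \to u$ in $\Lsqr{\Mscr_0}$ for every $u \in \HS_\ac(H_0)$. By \Prp{eq.norms} the norms $\norm{I_\eps}$ and $\norm{I_\eps^*}$ are uniformly bounded by $1+\gamma$ on $\Met_{r_0}(M,g_0,\gamma,\eps)$, while \Thm{main1} ensures $W_\pm(H_\eps, H_0, I_\eps)$ is a partial isometry with initial space $\HS_\ac(H_0)$. Consequently $\norm{I_\eps^* W_\pm(H_\eps, H_0, I_\eps)} \le (1+\gamma)^2$ uniformly in $\eps \in (0,\eps_0]$, and it suffices to prove strong convergence on a dense subset of $\HS_\ac(H_0)$.

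I would then use the decomposition
\begin{equation*}
I_\eps^* W_\pm(H_\eps, H_0, I_\eps) u - u = I_\eps^* \bigl( W_\pm(H_\eps, H_0, I_\eps) - I_\eps \bigr) u + (\rho_\eps - 1) u,
\end{equation*}
based on $I_\eps^* I_\eps = \rho_\eps$ as a multiplication operator. For the ``density'' term, the polar-decomposition identity $\rho_\eps^{1/2} S_\eps = \rho_\eps - 1$ together with \Lem{s.est} yields the pointwise bound $|\rho_\eps - 1| \le (1+\gamma) \wt d(g_0, g_\eps)$; combined with $\wt d(g_0,g_\eps) \le \gamma$, $r_0 \le 1$, and $\wt d_1(g_0, g_\eps) \le \eps$, this gives
\begin{equation*}
\norm[\Lp[2]{\Mscr_0}]{\rho_\eps - 1}^2 \le (1+\gamma)^2 \gamma\, \wt d_1(g_0, g_\eps) \le (1+\gamma)^2 \gamma \eps \to 0.
\end{equation*}
Approximating $u \in \Lsqr{\Mscr_0}$ by bounded functions and using the uniform bound on $\norm[\infty]{\rho_\eps - 1}$, one concludes $(\rho_\eps - 1) u \to 0$ in $\Lsqr{\Mscr_0}$.

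The main obstacle is to show $(W_\pm(H_\eps, H_0, I_\eps) - I_\eps) u \to 0$ in $\Lsqr{\Mscr_\eps}$ for $u$ in a dense subset. I would choose $u = R_0^{2m} \phi(H_0) w$ with $\phi \in C_c^\infty(\R)$ and $w \in \HS_\ac(H_0)$ and invoke Cook's formula
\begin{equation*}
(W_\pm(H_\eps, H_0, I_\eps) - I_\eps) u = \im \int_0^{\pm \infty} e^{\im s H_\eps}(H_\eps I_\eps - I_\eps H_0) e^{-\im s H_0} u \, ds.
\end{equation*}
Writing $(H_\eps I_\eps - I_\eps H_0) R_0^m = R_\eps^{-m} V_\eps$ with $V_\eps := R_\eps^m(H_\eps I_\eps - I_\eps H_0) R_0^m$, the trace-class operator $V_\eps$ surfaces, with $\norm[\TRspace]{V_\eps} \le 2C \eps$ by \Cor{res.diff}. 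The technical heart is to transfer this trace-norm smallness into norm smallness of the Cook integral: one exploits the Hilbert-Schmidt factorisation of $V_\eps$ from \Lem{res.diff} and \Prp{hs} (each factor of Hilbert-Schmidt norm $O(\sqrt{\eps})$), together with the uniform (in $\eps$) Kato smoothness of the spectral-projection factors $\phi(H_0) R_0^m$ and $\phi(H_\eps) R_\eps^m$ on the compact spectral window $\supp \phi$, to estimate the integral in the spirit of the Belopol'skii-Birman theorem \cite[Thm.~XI.13]{reed-simon-3}.
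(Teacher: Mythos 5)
Your decomposition into the ``density'' term $(\rho_\eps - 1)u$ and the wave-operator term is reasonable, and the treatment of $(\rho_\eps - 1)u \to 0$ is correct (and in fact makes explicit a small step that the paper leaves implicit). However, the wave-operator part has a genuine gap.

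You correctly identify $V_\eps = R_\eps^m(H_\eps I_\eps - I_\eps H_0)R_0^m$ as the trace-class quantity controlled by \Cor{res.diff}, but the operator appearing in your Cook integral is the \emph{unregularized} commutator $H_\eps I_\eps - I_\eps H_0$, which is not trace class; writing it as $R_\eps^{-m}V_\eps R_0^{-m}$ introduces the unbounded factor $R_\eps^{-m}$ sandwiched between $e^{\im s H_\eps}$ and $V_\eps$, and a spectral cutoff $\phi(H_0)$ on the \emph{initial} dynamics does not remove it: one would need a cutoff $\phi(H_\eps)$ on the \emph{left}, which the evolution $e^{\im s H_\eps}$ does not supply. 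The phrase ``uniform Kato smoothness of $\phi(H_\eps)R_\eps^m$ in the spirit of the Belopol'skii--Birman theorem'' does not resolve this mismatch; it is the statement of a substantial estimate, not a proof of it. What makes the trace-norm smallness of $V_\eps$ usable is the intertwining relation
\begin{equation*}
W_\pm(H_\eps,H_0,I_\eps)\,R_0^{2m} = W_\pm\bigl(H_\eps,H_0,R_\eps^m I_\eps R_0^m\bigr),
\end{equation*}
which converts the problem to a wave operator with the \emph{modified} identification $J_\eps := R_\eps^m I_\eps R_0^m$, for which $H_\eps J_\eps - J_\eps H_0 = V_\eps$ is exactly the trace-class operator. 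One then invokes the quantitative estimate following Theorem~XI.7 in \cite{reed-simon-3},
\begin{equation*}
\normsqr{W_\pm(H_\eps,H_0,J_\eps)\phi - J_\eps P_\ac(H_0)\phi}
  \le 16\pi\,\norm[\TRspace]{V_\eps}\,\triplenormsqr{\phi}\,\norm{I_\eps},
\end{equation*}
rather than re-deriving it via Cook's formula. This is precisely the route the paper takes. Finally, one also needs the operator-norm estimate $\norm{R_\eps^m I_\eps R_0^m - I_\eps R_0^{2m}} \to 0$ to replace $J_\eps$ by $I_\eps R_0^{2m}$; the paper proves this via a telescoping expansion of the resolvent difference and the Hilbert--Schmidt bounds of \Prp{hs}, a step your sketch omits entirely. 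Without these two ingredients --- the intertwining reduction and the $J_\eps \approx I_\eps R_0^{2m}$ estimate --- your argument does not close.
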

\begin{proof}
  For $\eps \ge 0$ set $R_\eps := (H_\eps + 1)^{-1}$, and let $m :=
  [n/4] + 2$.  Defining $V_\eps$ in analogy with $V$ in~\eqref{eq:2.5}
  by
  \begin{equation*}
    V_\eps := R_\eps^m (H_\eps I_\eps - I_\eps H) R_0^m,
  \end{equation*}
  it follows from our assumptions and \Cor{res.diff} that
  $\norm[\TRspace]{V_\eps} \to 0$ as $\eps \to 0$.  This implies
  (cf.~the Corollary following Theorem~XI.7 in~\cite{reed-simon-3}) 
  that
  \begin{equation}
    \label{eq:5.4}
    \normsqr{W_\pm(H_\eps,H_0, R_\eps^m I_\eps R_0^m) \phi 
           - R_\eps^m I_\eps R_0^m P_\ac(H_0) \phi} 
    \le 16 \pi \norm[\TRspace]{V_\eps} \cdot \triplenormsqr \phi \cdot
           \norm{I_\eps} \to 0
  \end{equation}
  as $\eps \to 0$, for all $\phi \in \mathcal M(H_0)$ that satisfy
  $\triplenorm \phi < \infty$; the notation is as
  in~\cite{reed-simon-3}.  Note that, by assumption and
  \Prp{eq.norms}, $\norm {I_\eps} \le 1+\wt d_\infty(g_0,g_\eps) \le
  1+\gamma$ independently of $\eps$.  By the intertwining
  relations~\cite{reed-simon-3},
  \begin{equation}
    \label{eq:5.5}
    W_\pm(H_\eps, H_0, I_\eps) R_0^{2m}
    = W_\pm(H_\eps, H_0, R_\eps^m I_\eps R_0^m).
  \end{equation}
  Furthermore, the arguments used in \Sec{ex.wo} yield
  \begin{equation}
    \label{eq:5.6}
    \norm{R_\eps^m I_\eps R_0^m - I_\eps R_0^{2m}} \to 0, \qquad \eps
    \to 0;
  \end{equation}
  a proof of \Eq{5.6} will be given below.  We conclude from
  \eqref{eq:5.4}--\eqref{eq:5.6} that for all $\phi$ with
  $\triplenorm\phi < \infty$
  \begin{equation*}
    \norm{W_\pm(H_\eps, H_0, I_\eps) R_0^{2m} \phi
          - I_\eps P_\ac(H_0) R_0^{2m}\phi } \to 0,
    \qquad \eps \to 0.
  \end{equation*}
  The set of vectors $\set{R_0^{2m} \phi} {\phi \in \Lsqr{\Mscr_0},
    \triplenorm \phi < \infty}$ is dense and~\eqref{eq:5.3} follows.

  It remains to prove~\eqref{eq:5.6}.  Here we first note that, by a
  standard expansion,
  \begin{equation*}
    (R_\eps^m I_\eps - I_\eps R_0^m) R_0^m
    = - \sum_{j=1}^m R_\eps^j (H_\eps I_\eps - I_\eps H_0) R_0^{2m - j + 1};
  \end{equation*}
  therefore, it is clearly enough to show that $\norm{ R_\eps (H_\eps
    I_\eps - I_\eps H_0) R_0^{m+1}} \to 0$, as $\eps \to 0$.  By a
  simple variant of \Lem{res.diff}, we have
  \begin{equation*}
    R_\eps (H_\eps I_\eps - I_\eps H_0) R_0^{m+1}
    = \bigl(\hat B_\eps^{(1)}\bigr)^* \hat U \hat B_\eps^{(m+1)}
            - \bigl(B_\eps^{(1)}\bigr)^* U B_\eps^{(m)} R_0H_0,
  \end{equation*}
  where $B_\eps^{(m)} = \abs{S_\eps}^{1/2} R_\eps^{m}$ and $\hat
  B_\eps^{(m+1)} = \abs{\hat S_\eps}^{1/2} D_\eps R_\eps^{m+1}$ (see
  \Sec{ex.wo}).  From \Prp{hs} we now conclude 
  \begin{equation*}
    \norm[\HSspace]{R_\eps (H_\eps I_\eps - I_\eps H_0) R_0^{m+1}}
    \le  2 (\gamma C \eps)^{1/2} \to 0
  \end{equation*}
  as $\eps \to 0$, using the norm bound $\normsqr {B_\eps^{(1)}} \le
  \norm[\infty] {S_\eps} \le \gamma$ (see \Lem{s.est}) and similarly
  for $\hat B_\eps^{(1)}$.  Moreover, the convergence in
  Hilbert-Schmidt norm implies the convergence in operator norm
  and~\eqref{eq:5.6} follows.
\end{proof}

To conclude this section, let us comment on the metric structure of
the spaces of metrics used so far.
\begin{remark}
  \label{rem:d1.q-met}
  \begin{myenumerate}{\alph}
  \item On $\Met(M,g_0,\gamma):=\set{g \in \Met(M)}{\wt
      d_\infty(g_0,g) \le \gamma}$, the distance function $\wt
    d_\infty$ is a quasi-distance with constant $\tau=1+\gamma/2$
    (cf.\ \Def{q-metric} and what is said before that definition).  A
    quasi-distance induces a unique topology and a uniform structure
    (cf.\ the comment after \Def{q-metric}).

  \item
  \label{d1.q-met.b}
  Moreover, $\wt d_1$ is a quasi-distance on the subspace of the
  metrics $g \in \Met(M,g_0,\gamma)$ for which $\wt d_1(g_0,g)$ is
  finite. Indeed, straightford calculations, using~\eqref{eq:met.est}
  for the estimates on $\rho$ and the fact that $\wt d$ is a
  quasi-distance with constant $\tau=1+\gamma/2$, show that
  \begin{align*}
    \wt d_1(g_1,g_3) \le  \tau (1 + \gamma)^4
      \bigl(\wt d_1(g_1,g_2) + \wt d_1(g_2,g_3) \bigr). 
  \end{align*}
    
  On $\Met(M,g_0,\gamma)$, we can also work with the quasi-distance
  \begin{equation}
    \label{eq:eq.q-met.d1}
    \wt d_1^*(g_1,g_2)
    := \int_M \wt d(g_1,g_2) r_0^{-(n+2)} \dvol_{g_0}
  \end{equation}
  (with constant $\tau$) which is equivalent to $\wt d_1$.  We prefer
  to work with $\wt d_1$ since this is the quasi-distance appearing
  naturally in the estimates of \Prp{hs} and \Cor{res.diff}.

  The set $\Met(M,g_0, \gamma, \eps)=\set{g \in
    \Met(M,g_0,\gamma)}{\wt d_1(g_0,g)\le \eps}$ is now the closed
  $\eps$-ball with respect to the quasi-distance $\wt d_1$, and the
  set $\Met_{r_0}(M,g_0,\gamma,\eps)$ defined above is the
  intersection of this $\eps$-ball with the space $\Met_{r_0}(M)$ of
  metrics fulfilling the (local) lower bounds~\eqref{eq:lower.bd} on
  the homogenized Ricci curvature and injectivity radius.

\item
  \label{d1.q-met.c}
  On the $\eps$-ball $\Met_{r_0}(M,g_0,\gamma,\eps)$, the wave
  operators $W_\pm(\Delta_{(M,g_2)},\Delta_{(M,g_1)},I)$ exist and are
  complete for any two metrics $g_1,g_2$ in this ball due to
  \Thm{main1}.  \Thm{main2} can be restated as saying that the map $g
  \mapsto S(\Delta_{(M,g)},h_0, IJ)$ (with $\map I
  {\Lsqr{M,g_0}}{\Lsqr{M,g}}$, $If=f$), associating to a metric the
  scattering operator of $\Delta_{(M,g)}$ and a reference operator
  $h_0$ (see \Def{ref.op}), is \emph{continuous} with respect to the
  quasi-distance $\wt d_1$ and the topology of strong convergence of
  operators in $\BdOp{\HS_0}$.
    
  Moreover, \Cor{main2} can be restated by saying that the set of
  metrics $g$ such that the $k$-th and the $i$-th channel of the
  scattering operator $S(\Delta_{(M,g)},h_0, IJ)$ are open to each
  other, is a neighborhood of $g_0$ (i.e., it contains a ball of
  radius $\eps_0$ around $g_0$) in the quasi-distance space
  $(\Met_{r_0}(M,g_0,\gamma), \wt d_1)$.

  It would be interesting to analyze in more detail the structure of
  $\Met(M,g_0,\gamma,\eps)$ and $\Met_{r_0}(M,g_0,\gamma,\eps)$.
  \end{myenumerate}
\end{remark}

{\colour[A2] 
  We conclude this section with some remarks of a more general nature
  concerning the question of openness of scattering channels.
\begin{remark}
  \label{rem:when.channels.closed}
  \begin{myenumerate}{\alph}
  \item One might conjecture that an end can only be closed if the
    (decoupled) Laplacian of this end has no absolutely continuous
    spectrum.  In particular, this would mean that there are no
    geometric or topological obstructions that might prevent sending
    wave packets from the $i$-th channel into any of the other
    channels. We are not aware of any counter-examples. 

  \item Another conjecture would say that---as long as the ends have
    some absolutely continuous spectrum---condition~\eqref{eq:ch.open}
    holds in a generic sense. Our work provides a step in this
    direction since we show that the set of metrics enjoying
    property~\eqref{eq:ch.open} for all pairs $(i,k)$ is open in the
    sense of \Remenum{d1.q-met}{d1.q-met.c}; whether this set is also
    dense is a rather difficult question.

  \item In this paper we consider the openness of the scattering
    channels in a global sense, i.e., we ask---without imposing any
    restrictions on the energy of the wave packets---whether the
    scattering channels are open.  From a physical point of view, the
    following question of \emph{local openness} would also be of great
    interest. Suppose we restrict our attention to a \emph{compact set
      with respect to energy,} i.e, we consider wave packets which are
    localized in Fourier space.  If such a wave packet comes in from
    the $i$-th channel, how does it split up as $t$ tends to
    $+\infty$?  Here it might happen that scattering channels that are
    open for wave packets with energy in a compact set, $K$, are no
    longer open for such wave packets if the metric is perturbed in
    such a way that the set $K$ and the absolutely continuous spectrum
    of the ``receiving channel'' become disjoint.

  \item Strong distortions of the metric on one of the ends may
    destroy the a.c.\ spectrum there and then this end (with the new
    metric) would be closed for scattering. It is easy to construct
    examples of the type discussed in \Sec{example} below where the
    radial function $s \mapsto r(s)$ is distorted in such a way that
    the corresponding potential $w$ in eqn.~\eqref{eq:6.3} changes
    from short range to a potential that tends to $+\infty$ as $s \to
    \infty$; consider, e.g., $r(s) := e^{-s^2}$ for the distorted
    radial function.

    It is a much harder question to ask whether a channel can close
    while staying within a given class $\Met(M,g_0,\gamma,\eps_0)$,
    maybe with some large $\eps_0$; note that for metrics from this
    class the a.c.\ spectrum is stable.
  \end{myenumerate}
\end{remark} %
}
%
\section{Examples}
\label{sec:example}
%

In this section, we present some examples where our main results,
\Thm{main1} and \Cor{main2}, can be applied. We only consider
manifolds with two ends; dealing with more than two ends would require
additional efforts.

\subsection{Surfaces of revolutions and warped products}
\label{sec:warped}
As a preparation, we first recall and establish some facts on
manifolds which are symmetric with respect to rotation around some
axis (\emph{surfaces of revolution} for $n=2$).  Such manifolds are
special cases of warped products~\cite{oneill:83}.  %
In some of our examples such manifolds are used to find a suitable
function $r_0$ and in other examples we consider perturbations of
manifolds of this type.

A particularly simple case is given by the following situation: Let
$\Mscr_0$ denote an $n$-dimensional manifold in $\R^{n+1}$ which is
symmetric with respect to rotation around the $x_{n+1}$-axis and
homeomorphic to $\R \times \Sphere^{n-1}$.  Writing $\bar x = (x_1,
\ldots, x_n)$, we introduce coordinates $(s, \xi)$, $\xi = |\bar x
|^{-1} \bar x \in \Sphere^{n-1}$ for $\bar x \ne 0$, where $s$ is
arc-length along any line $\xi = \const$.  Defining $r(s) = \abs{\bar
  x}$, we assume that $r$ is a positive function of class $\Contspace
[2]$.  In these coordinates, the Riemannian manifold $\Mscr_0 =
(M,g_0)$ is given by $M=\R \times \Sphere^{n-1}$ and $g_0 = \de s^2 +
r(s)^2 g_{\Sphere^{n-1}}$, where $g_{\Sphere^{n-1}}$ denotes the
standard metric on $\Sphere^{n-1}$.  More generally, we can start with
a metric $g_0$ and define the Riemannian manifold $(M,g_0)$
abstractly, without referring to the ambient space $\R^{n+1}$ (this is
necessary, e.g., if $r(s)$ grows fast).  Some of the results in this
section remain true if we replace $\Sphere^{n-1}$ by any compact
Riemannian manifold $Y$; for simplicity, we only treat the case
$Y=\Sphere^{n-1}$ here.

As is well known, there is a unitary operator $\map U {\Lsqr \Mscr}
\Lsqr{\R , \Lsqr {\Sphere^{n-1}}}$ with the property that
\begin{equation}
  \label{eq:6.1}
  H = U^* {\tilde H} U, \qquad
  \tilde H = \mybigoplus_{m=0}^\infty \tilde H_m, 
\end{equation}
where
\begin{equation}
  \label{eq:6.2}
   \tilde H_m = - \frac {\dd^2} {\dd s^2}+ \pot(r)  +
     \frac{\lambda_m} {r ^2}, \qquad m \in \N_0, 
\end{equation}
a self-adjoint operator in $\Lsqr{\R, E_m}$, and
\begin{equation}
  \label{eq:6.3}
   \pot(r) = \frac{n-1}{2} \dot q
      + \Bigl(\frac {n-1}2 \Bigr)^2 q^2,
    \qquad
   q = \frac{\dot r} r;
\end{equation}
here, $E_m$ denotes the eigenspace associated with $\lambda_m$, the
$m$-th eigenvalue of the Laplacian on the sphere $\Sphere^{n-1}$.  We
have $\dim E_0 = 1$ and $\lambda_0 = 0$ with constant eigenfunction.
The Ricci curvature (viewed as a symmetric tensor on $T^* M$) is given
by
\begin{equation*}
  \Ric = -(n-1) (\dot q + q^2) \dd s^2 + 
  \Bigl(\frac {n-2} {r^2} - \bigl( \dot q + (n-1) q^2 \bigr) \Bigl)
          r^2 g_{\Sphere^{n-1}}. 
\end{equation*}
Similarly, we obtain for the sectional curvature 
\begin{equation*}
  K(\partial_s,\partial_{\xi_j})
  = -\frac{\ddot r}{r}
   \qquad\text{and}\qquad
  K(\partial_{\xi_j},\partial_{\xi_k})
  = \frac{1-\dot r^2}{r^2}
\end{equation*}
for $i \ne j$ (provided $n \ge 3$), where $\{\partial_{\xi_j}\}_j$ is
a basis of $T_x \Sphere^{n-1}$; cf., e.g.,
\cite[p.~209ff]{oneill:83}. In particular, if $n=2$, we have
\begin{equation*}
  \Ric = -(\dot q + q^2) g = -\frac{\ddot r}{r} g
   \qquad\text{and}\qquad
  K(\partial_s,\partial_{\xi})
  = -\frac{\ddot r}{r}.
\end{equation*}

\subsection{Reference operators on the two ends}
\label{sec:ref.op.2}
We decouple $\Mscr=(M,g)$ into just two pieces by a surface $\Sigma
\subset M$ which corresponds to $s = 0$.  With the same unitary
operator $U$ as in~\eqref{eq:6.1}, the decoupled operators $H_\dec$
and $H_{\dec,m}$ satisfy
\begin{equation*}
  H_\dec = U^* {\tilde H}_\dec U, \qquad
  \tilde H_\dec = \mybigoplus_{m=0}^\infty \tilde H_{\dec,m},  
\end{equation*}
with
\begin{equation*}
  \tilde H_{\dec, m}
  = \tilde H_{-, m} \oplus \tilde H_{+, m}; 
\end{equation*}
here $\tilde H_{-,m}$ and $\tilde H_{+, m}$ are,
respectively, the operator~\eqref{eq:6.2} in $\Lsqr{\R_-, E_m}$ and in
$\Lsqr{\R_+, E_m}$ with Dirichlet boundary condition at $s=0$, and
$\R_- :=(-\infty,0)$ and $\R_+ :=(0,\infty)$. We then have:

\begin{myenumerate}{\alph}
\item Applying a celebrated result of Deift and
  Killip~\cite{deift-killip:99} to our situation, we see that $\tilde
  H_{+, m}$ (respectively, $\tilde H_{-, m}$) has absolutely
  continuous (a.c.)  spectrum $[0,\infty)$, provided the potential
  $\pot + \lambda_m/r^2$ is square integrable over $(1,\infty)$
  (respectively, over $(-\infty, -1)$).

\item If $r(s) \to 0$ as $s \to \pm \infty$, the decoupled operators
  $H_{\pm, m}$ have purely discrete spectrum for $m \ge 1$ (while
  $H_{\pm,0}$ still may have a non-trivial a.c.\ part).
  Br\"uning~\cite{bruening:89} provided conditions for Laplacians on
  cusps that guarantee purely discrete spectrum.

\item We will mostly work with the following assumption on $\pot$: we
  say that $\pot$ is \emph{short range} on $\R$ if there exists a
  constant $\alpha > 1$ such that
  \begin{equation}
    \label{eq:s-range}
    \abs{\pot(s)} \leq C(1 + |s|)^{-\alpha}.
  \end{equation}
  Under this assumption, the Enss method yields existence and
  completeness of the wave operators for the pair $(h_0, h_0 + \pot)$
  and the absence of singular continuous spectrum for $h_0 + \pot$,
  with $h_0$ denoting the unique self-adjoint realization of $-
  \dd^2/\dd s^2$ in $\Lsqr \R$ with domain $\Sobsymb^2(\R)$; cf.,
  e.g., \cite{enss:78}, \cite{ simon:79c}.  Analogous results hold for
  the operators $\tilde H_{\pm,m}$, $m \ge 1$, provided $\pot +
  \lambda_m/r^2$ is short-range.
\end{myenumerate}

The cases where $r$ is of the form $r(s) = s^\beta$, for some $\beta \in
\R$ and $s \ge 1$, are particularly simple. Let
\begin{equation*}
  r(s):= \tau s^\beta, \qquad s \ge 1, 
\end{equation*}
for some $\beta \in \R$ and $\tau >0$. In this case,
\begin{equation*}
  q(s) = \frac \beta s,  \qquad 
  \pot(s) = \Bigr(\frac {n-1} 2 \beta - 1 \Bigl)
         \frac{n-1}2 \beta \frac 1 {s^2},  
\end{equation*}
and the potentials $\pot$ are short-range.  Let us describe the geometry
of such an end in more detail in the case $n=2$.  The Ricci curvature
is then given by
\begin{equation*}
  \Ric = -(\beta-1) \beta  \frac 1 {s^2} \cdot g.
\end{equation*}
In particular, the curvature tends to zero as $s \to \infty$.

If $\beta>1$, the end is large and negatively curved.  If $\beta=1$,
we have a Euclidean (flat) end for $\tau=1$ and a cone for $0<\tau<1$.
If $0<\beta<1$, we have a positively curved parabolic-type end.  If
$\beta=0$, we have a flat cylinder of radius $\tau>0$, and if $\beta <
0$, we have a negatively curved shrinking horn.

On the real line, one may combine different asymptotics of the above
type for $s \ge 1$ and $s \le -1$. More precisely, we may consider
$\Mscr_0 = (M, g_0)$ with two ends and the decomposition $M= M_- \cup
\clo M_0 \cup M_+$ with $M_- = (-\infty, -1) \times \Sphere^{n-1}$,
$M_0=(-1,1) \times \Sphere^{n-1}$, and $M_+ = (1, \infty) \times
\Sphere^{n-1}$.  We then consider $\Contspace [2]$-functions on the
real line given by
\begin{equation*}
  r(s) =
  \begin{cases}
    \tau_- \abs s^{\beta_-},& \text{if $s \le -1$,}\\
    \tau_+  s^{\beta_+},& \text{if $s \ge 1$,}
  \end{cases}
\end{equation*}
for some constants $\beta_\pm \in \R$ and $\tau_\pm>0$; for $-1<s<1$,
the function $r > 0$ can be defined arbitrarily. In all these cases we
have two scattering channels, one for $\R_+$ and one for $\R_-$.

\subsection{Perturbations of warped products: Existence and
  completeness of wave operators (Applications of \Thm{main1})}
\label{sec:ex.thm.main1}

We begin with the slightly simpler case where we assume a global 
bound for the sectional curvature. In dimension $n=2$, this 
assumption covers warped products with at most exponential growth 
or decay, while it excludes the cases $r(s) \to 0$ in dimension $n \ge 3$. 

\begin{example}[Existence and completeness of wave operators]
  \label{ex:6.1}
  Suppose we are given a manifold $\Mscr_0 = (M,g_0)$ with a warped
  product metric $g_0 = \dd s^2 + r(s)^2 \dd g_{\Sphere^{n-1}}$
  satisfying the following conditions: $r \in \Cont[2] \R$, the
  functions $\ddot r/r$ and, for $n \ge 3$, $(1 + \dot r^2)/r^2$, are
  bounded, and there exists a constant $m \ge 1$ such that
  \begin{equation}
    \label{eq:6.7}
    \frac 1 m r(s_0)
    \le r(s)
    \le m r(s_0),
    \qquad \forall s \in [s_0 -2, s_0 + 2], 
  \end{equation}
  for all $s_0 \in \R$.  Notice that $\Mscr_0$ has bounded sectional
  curvature.  It follows from \Lem{d.2} and \Rem{d.5} that
  $\iota_{\Mscr_0}(s)$, the (homogenized) {\colour[A3] injectivity
    radius} of $\Mscr_0$ at points $(s,y) \in M$, is bounded below by
  $r_0(s) := c_0\min\{r(s), 1\}$, where $c_0 > 0$ is a suitable
  constant; without loss of generality, we may assume $c_0 \le 1$.

  Now consider two metrics $g_1$ and $g_2$ on $M$ with the following
  properties:
  \begin{enumerate}
  \item
    \label{met1}
    $g_1$ and $g_2$ are quasi-isometric to $g_0$.
  \item
    \label{met2}
    $\Mscr_k = (M, g_k)$ have bounded sectional curvature.
    
  \item
    \label{met3}
    We assume that
    \begin{equation}
      \label{eq:6.8}
      \wt d_1^*(g_1,g_2)
      = \int_M {\wt d(g_1,g_2)(x)}{r_0(x)^{-(n+2)}} \dvol g_0(x) < \infty, 
    \end{equation}
    with the function $r_0$ obtained above, i.e., we assume that the
    quasi-distance $\wt d_1^*(g_1,g_2)$ as defined in
    eqn.~ is finite.  Note that this
    quasi-distance is equivalent to the original quasi-distance $\wt
    d_1$ defined in eqn.~\eqref{eq:def.q-dist.loc}, 
    cf.~\Remenum{d1.q-met}{d1.q-met.b}.
  \end{enumerate}
  Then the wave operators for the Laplacians $H_1$ and $H_2$,
  associated with the metrics $g_1$ and $g_2$, exist and are complete.

  Indeed, it follows from~\itemref{met1},~\itemref{met2}
  and~\cite[Prop.~2.1]{mueller-salomonsen:07} (cf.~also \Prp{d.1})
  that the {\colour[A3] injectivity radius} for $\Mscr_{1,2}$ is
  bounded from below by $c r_0(x)$ for some positive constant $c$.
  By~\eqref{eq:6.7}, a similar lower bound holds for the homogenized
  {\colour[A3] injectivity radius} of $\Mscr_{1,2}$.  Then
  \Prp{lower.bd.rh} yields the lower bound for the harmonic radius
  required in \Thm{main1}.  (Note that neither $g_1$ nor $g_2$ have to
  be close to $g_0$ at infinity.)
 \end{example}
 {\colour[A1] 
   In the following remark we show that in some particularly simple
   cases the distance between two metrics $g_1$ and $g_2$ can be
   computed more or less explicitly.

\begin{remark}
  \label{rem:computation-of-metric} 
  Suppose we are given two quasi-isometric metrics $g_1$ and $g_2$ on
  $M = \R \times \Sphere^{n-1}$ of the form $g_i = \dd s^2 + r_i(s)^2
  \dd g_{\Sphere^{n-1}}$, for $i = 1,2$, where the functions $r_i$
  satisfy the same conditions as $r_0$ in \Ex{6.1}. In particular, the
  manifolds $\Mscr_i = (M,g_i)$ have bounded sectional curvature and
  their (homogenized) injectivity radius is bounded below by $c
  r_i(s)$ for some constant $c > 0$.  We may thus work with $r_0(s) :=
  \min\{1, r_1(s), r_2(s)\}$ in eqn.~\eqref{eq:met.l1}; note that the
  function $r_0$ defined here may be different from the function $r_0$
  of \Ex{6.1}.

  It is easy to see that the matrix $A$ from
  eqn.~\eqref{eq:the-matrix-A(x)} has one eigenvalue $1$ while the
  other $n-1$ eigenvalues are equal to $(r_2/r_1)^2$ so that
  \begin{equation*}
    \wt d(g_1,g_2)
    = \Bigabs{\Bigl(\frac{r_2}{r_1}\Bigr)^{n/2}-
      \Bigl(\frac{r_1}{r_2}\Bigr)^{n/2}}.  
  \end{equation*}
  We can now compute $ \wt d_1^*(g_1,g_2)$ (as in
  eqn.~\eqref{eq:eq.q-met.d1}) as
  \begin{equation*}
    \wt d_1^*(g_1,g_2) = 
    \omega_{n-1}
    \int_{-\infty}^\infty  \Bigabs{\Bigl(\frac{r_2}{r_1}\Bigr)^{n/2}-
      \Bigl(\frac{r_1}{r_2}\Bigr)^{n/2}}
    \frac 1 {(\min\{1,r_1,r_2\})^{n+2}} \dd s;   
  \end{equation*}
  since $g_1$ and $g_2$ are quasi-isometric, there is a constant $ c_1
  \ge 1$ depending only on the quasi-isometric distance $\wt
    d_\infty(g_1,g_2)$ such that $ \wt d_1^*(g_1,g_2) \le \wt
  d_1(g_1,g_2) \le c_1 \wt d_1^*(g_1,g_2)$.
\end{remark}%
}

We next discuss examples without a global curvature bound:
\begin{example} (Existence and completeness of wave operators without
  global curvature bound).
  \label{ex:6.2}
  We start with a manifold $\Mscr_0$ with a warped product metric
  obtained from a function $r \in \Cont[2] \R$ satisfying
  condition~\eqref{eq:6.7}. We define the function $\kappa_0 \colon \R
  \to \R_+$ as in eqn.~\eqref{eq:d.5}.

  Given $r$, we then consider metrics $g_1$, $g_2$ as in \Prp{d.3}.
  In particular, $g_1$ and $g_2$ have to be quasi-isometric to $g_0 =
  \dd s^2 + r(s)^2 g_{\Sphere^{n-1}}$ and there is a (more
  complicated) condition on the sectional curvature of $\mathcal
  M_1:=(M,g_1)$ and $\mathcal M_2:=(M,g_2)$ expressed in terms of
  functions $\kappa_1$ and $\kappa_2$ on the real line, defined as in
  assumption~\itemref{prp.d.3.ii} of \Prp{d.3}.  By this proposition
  and \Rem{d.5}, the (homogenized) {\colour[A3] injectivity radius}
  and then also the (homogenized) harmonic radius of $\Mscr_1$ and
  $\Mscr_2$ at points $(s,y) \in M$ are bounded below by
  \begin{equation*}
    r_0(s)
    := C \min \bigl\{r(s),
         \min  \bigset{\bigl(\sqrt{\kappa_i(s)}\bigr)^{-1}} 
         {i = 0, 1, 2} \bigr\},
  \end{equation*}
  for some positive constant $C$. Proceeding now as in the above
  example, we find that the wave operators for the Laplacians $H_1$
  and $H_2$ exist and are complete if $g_1$ and $g_2$ satisfy
  condition~\eqref{eq:6.8}.
\end{example}

It may be of interest to note that lower bounds for the radius of 
injectivity are the main limitation and difficulty in the 
application of our theorems to concrete examples.

\subsection{Non-trivial scattering for warped products}
\label{sec:non-triv.scatt}

 Here we provide examples of functions $r$ on the real line which yield
 scattering channels that are open to each other.  We use the notation
 from the beginning of this section concerning ${\tilde H}_{\pm,m}$
 etc.  We give the details in the case of the scattering channels,
 $\tilde{H}_{\pm,0}$, with zero angular momentum. The case of
 scattering channels with non-zero zero angular momentum follows in the
 same way.

 We require that the potential $\pot$, defined in eqn.~\eqref{eq:6.3},
 is short-range, i.e., that it satisfies~\eqref{eq:s-range}.  Note 
 that the ends can be flat, horns, etc., as long as $\pot$ is short-range.

 We first introduce some notation that we need (see \App{line}).
 Denote by $h_0$ the unique self-adjoint realization of $-\dd^2/\dd
s^2$ in $\mathcal H:=\Lsqr \R$ with domain $\Sob [2] \R$.  We define
$ h: = h_0+w$, a self-adjoint operator in $\mathcal H$, with domain
$\dom h = \Sob [2] \R$.  Let $h_{\pm,0}$ be the self-adjoint
realizations of $-\dd^2/\dd s^2$ in $\Lsqr{\R_\pm}$ with Dirichlet
boundary condition at zero, i.e., with domain $\dom h_{\pm,0} =
\Sobn{\R_\pm} \cap \Sob [2] {\R_\pm}$. We denote by $\iota_\pm$ the
natural embeddings of $\Lsqr{\R_\pm}$ into $\Lsqr \R$ (extension by zero). 
  We then consider the wave operators
\begin{equation*}
  \Lambda^+_\pm := \slimpm  \e^{\im th} \iota_+ \e^{-\im t h_{+,0}},
  \quad\text{and}\quad
  \Lambda^-_\pm := \slimpm  \e^{\im th} \iota_- \e^{-\im t h_{-,0}}.
\end{equation*}
We define a mapping $j_0$ which associates with a function $u \in
\Lsqr \R$ the function $j_0 u \in \Lsqr{\R, E_0}$, defined by
\begin{equation*}
  (j_0 u)(s,y) = \frac 1 {\sqrt{\omega_{n-1}}} u(s) , 
  \end{equation*}
where $\omega_{n-1} := \vol_{n-1}(\Sphere^{n-1})$, 
and define $J := U^* j_0$, an isometry from $\Lsqr \R$ into $\Lsqr
\Mscr$.

A natural choice of a reference operator for the ends are the
operators $h_{\pm,0}$.  We then define the wave operators for the
right end as
\begin{equation}
  \label{eq:5.7}
  W_\pm^+ := \slimpm \e^{\im t H} J\iota_+ \e^{-\im th_{+,0}}, 
\end{equation}
and for the left end as
\begin{equation}
  \label{eq:5.8}
  W_\pm^- := \slimpm \e^{\im t H} J\iota_- \e^{-\im th_{+,0}}. 
\end{equation}
By the chain rule, we have
\begin{equation}
  \label{eq:5.9}
  W^+_\pm= W_\pm(H,h, J) \circ \Lambda^+_\pm
  \quad\text{and}\quad
  W^-_\pm= W_\pm(H,h, J) \circ \Lambda^-_\pm.
\end{equation}
Furthermore, as $U J = UU^* j_0 = j_0$, 
\begin{equation}
  \label{eq:5.10}
  W_\pm(H,h, J)
  = U^* W_\pm({\tilde H}, h, j_0)
  = U^* W_\pm({\tilde H}_0, h, j_0) = U^* j_0 = J, 
\end{equation}
since $\e^{\im t{\tilde H}_0} j_0 u = j_0 \e^{\im th} u$.
By~\eqref{eq:5.9} and~\eqref{eq:5.10}, we have
\begin{equation}
  \label{eq:5.11}
  W^+_\pm= J \circ \Lambda^+_\pm
  \quad\text{and}\quad
  W^-_\pm= J \circ \Lambda^-_\pm.
\end{equation}
Since $J$ is an isometry, it follows from \Lem{c.3}
and~\eqref{eq:5.11} that 
\begin{equation}
  \label{eq:5.12}
  \ran W^+_+ \ne \ran W^+_-
  \quad\text{and}\quad
  \ran W^-_+ \ne \ran W^-_-,
\end{equation}
which shows that the left and the right scattering channels are open
to each other, i.e.,
\begin{equation}
  \label{eq:5.13}
  S_{+,-} :=\bigl(W^+_+\bigr)^* W^-_- \neq 0.
\end{equation}
Let us give an example of a high-velocity asymptotic state which comes
in from the left end for large negative times, and which travels to
the right end as time tends to infinity, with a reflected part that is
very small if the velocity is large.  As in \App{line} we define
\begin{equation*}
  \phi^+_v(s):= \e^{\im v s} \phi_0,
  \quad \text{with\ } \mathcal F \phi_0 \in \Cci \R, 
\end{equation*}
where $\mathcal F$ denotes the Fourier transform.  Since
\begin{equation*}
  \bigl(\mathcal F \phi^+_v\bigr)(k)
  = \bigl(\mathcal F\phi_0 \bigr)(k-v),
\end{equation*}
this state has large velocity if $v>0$ is taken large enough.  We
set
\begin{equation*}
  \phi^+_{v,0} :=  \phi^+_v(s)-\phi^+_v(-s), \quad  s>0,
\end{equation*}
then by~\eqref{eq:5.11} and~\eqref{eq:c.54}--\eqref{eq:c.55} we have
\begin{equation*}
  S_{+,-} \phi^+_{v,0} =
  \bigl(\Lambda^+_+\bigr)^*  \Lambda^-_-\phi^+_{v,0}
  =\phi^+_{v,0}+ \mathcal O \Bigl(\frac 1v \Bigr),
\end{equation*}
and $ S_{+,-}\phi^+_{v,0}   =\bigl(\Lambda^+_+\bigr)^* \Lambda^-_-
\phi^+_{v,0} \ne 0$ for $v$ large enough.

\subsection{Perturbations of warped products: Open scattering channels
  (Applications of \Cor{main2})}
\label{sec:ex.cor.main2}

We now consider perturbations of the rotationally symmetric 
situation discussed in the preceding subsection. 

\begin{example}[Open scattering channels]
  \label{ex:6.3}
  For simplicity, let $n=2$ and let us assume that the metric $g_0$ is
  a warped product obtained from a function $r \colon \R \to \R_+$ of
  class $\Contspace [2]$ satisfying $r(s) = |s|$ for $s \le -1$ and
  that $r(s) = \tau s^{\beta}$, for $s \ge 1$, for some $\beta < 0$
  and $\tau > 0$.  We let $\Mscr_0 = (M,g_0)$, a manifold of bounded
  sectional curvature. Let $K \ge 0$ denote an upper bound for the
  curvature of $\Mscr_0$.  By the above discussion, the scattering
  channels for the Laplacian of $\Mscr_0$ are open.  As in \Ex{6.1},
  we find that the (homogenized) {\colour[A3] injectivity radius} of
  $\Mscr_0$ has a lower bound of the form $c_0 \min\{1, r(s) \}$ for
  $s \in \R$, with a constant $c_0 > 0$; cf.~\Lem{d.2}.  Since the
  Ricci curvature is of order $s^{-2}$ as $s \to \infty$, and as it is
  equal to zero for $ s\leq-1$, we may choose $r_0(s)= c_0$ for $s
  \leq -1$ and $r_0(s)= \tau s^\beta$ for $s \geq 1$
  in~\eqref{eq:6.8}.

  For some $\gamma > 0$ and $K$ as above, let us consider the class
  $\Met_{r_0}(M, g_0, \gamma, K, \eps)$ consisting of all metrics $g
  \in \Met_{r_0}(M, g_0, \gamma, \eps)$ with sectional curvature
  bounded by $K$.  Since any $g \in \Met_{r_0}(M, g_0, \gamma, K,
  \eps)$ is quasi-isometric to $g_0$ (with relative constants
  depending only on $\gamma$), Proposition~2.1
  of~\cite{mueller-salomonsen:07} (cf.\ also \Prp{d.1}) implies that
  the {\colour[A3] injectivity radius} of $g$ is bounded below by $c
  r_0$, for some constant $c > 0$; it is easy to see that a similar
  estimate then holds for the homogenized {\colour[A3] injectivity
    radius} as well.  As in \Ex{6.1}, the wave operators for the pair
  $H$ and $H_0$ exist and are complete. From \Cor{main2} we infer that
  there exists $\eps_0 > 0$ such that the scattering channels for the
  Laplacian of $\Mscr = (M,g)$ are open for any metric $g \in
  \Met_{r_0}(M, g_0, \gamma, K, \eps)$, provided $\eps < \eps_0$.
\end{example}

Here is, finally, an example for \Cor{main2} without a global bound on
the curvature:

\begin{example}[Open scattering channels without global curvature
  bound]
  \label{ex:6.4}
  Let $n \in \N$, $n \ge 2$. For $r \colon \R \to \R_+$, $r \in
  \Contspace [2]$ satisfying~\eqref{eq:6.7}, let $g_0 := \dd s^2 +
  r(s)^2 g_{\Sphere^{n-1}}$ and assume that the scattering channels
  are open.  Define $\kappa_0 \colon \R \to \R_+$ as in
  eqn.~\eqref{eq:d.5}.  Let $\kappa \colon \R \to \R_+$ be continuous
  and satisfy $\kappa(s) \ge \kappa_0(s)$ and let $\Met_{r_0}(M, g_0,
  \gamma, \kappa, \eps)$ denote the set of all metrics $g \in
  \Met_{r_0}(M, g_0, \gamma, \eps)$ which satisfy the curvature
  condition \eqref{prp.d.3.ii} of \Prp{d.3}.  By this proposition and
  \Rem{d.5}, the (homogenized) harmonic radius of any $\Mscr = (M,g)$
  with $g \in \Met_{r_0}(M, g_0, \gamma, \kappa, \eps)$ at points
  $(s,y) \in M$ is bounded below by
  \begin{equation*}
    r_0(s) := C \min \Bigl\{ r(s), \frac 1 {\sqrt\kappa(s)} \Bigr\}.
  \end{equation*}
  By \Cor{main2}, there exists $\eps_0 > 0$ such that the scattering
  channels of any $\Mscr = (M,g)$ with $g \in \Met_{r_0}(M, g_0,
  \gamma, \kappa, \eps)$ are open, provided $0 < \eps < \eps_0$.
\end{example}


\appendix
%
\section{Pointwise distance functions on the set of metrics}
\label{app:dist.met}
%
Let us introduce two pointwise distance functions on the set of
metrics $\Met (M)$ on a manifold $M$.  We use the terminology
``distance (function)'' for what is usually called ``metric'' in the
sense of metric spaces.  Let $V$ be an $n$-dimensional $\C$-vector
space and denote by $\Sesq_+ (V)$ the set of all positive definite
sesquilinear forms on $V$.  Given $g_1$, $g_2 \in \Sesq_+(V)$, we  
define a positive definite endomorphism $A=A_{g_2,g_1}$ on $V$ via
\begin{equation}
  \label{eq:2.met}
  g_2(\xi, \zeta) = g_1(A \xi, \zeta)
\end{equation}
for all $\xi$, $\zeta \in V$, the \emph{relative distortion} of $g_2$
with respect to $g_1$.  The \emph{distance} of $g_1$ to $g_2$ is
defined as
\begin{equation*}
  d(g_1,g_2) :=  \max_k \abs{\ln \alpha_k} , 
\end{equation*}
where $\alpha_1, \dots, \alpha_n$ denote the $n$ positive eigenvalues
of $A$;  hence $d(g_1,g_2)$ equals the operator norm of $\ln A$.
Moreover, setting $\overrightarrow {g_1g_2} := \ln A_{g_2,g_1}$, the
set $\Sesq_+(V)$ becomes an \emph{affine} space with associated vector
space $\BdOp V$ (endomorphisms on $V$).  Moreover, we have
$d(g_1,g_2)=\abs[\BdOp V]{\ln A_{g_2,g_1}}$, which shows that $d$ is
indeed a distance function on $\Sesq_+(V)$.

When dealing with Riemannian metrics (and especially with our trace
class estimate in \Sec{ex.wo}), it will be convenient to work with the
following modified distance function, namely,
\begin{equation}
  \label{eq:def.q-met}
  \wt d(g_1,g_2) := 2 \sinh \Bigl(\frac n 4 \cdot d(g_1,g_2) \Bigr).
\end{equation}
Note that $\wt d$ is symmetric and definite (i.e., $\wt
d^*(g_1,g_2)=0$ implies $g_1=g_2$), but does not fulfill the triangle
inequality.  Instead, using the addition theorem 
\begin{equation*}
  \sinh(u+v)
  = \sinh u \sqrt{\sinh^2 v + 1} + \sinh v \sqrt{\sinh^2 u + 1}
\end{equation*}
and the triangle inequality for $d$, one can see that
\begin{equation*}
  \wt d(g_1,g_3) \le \mu(\wt d(g_1,g_2), \wt d(g_2,g_3))
\end{equation*}
with $\mu(a,b) = a\sqrt{(b/2)^2+1} + b\sqrt {(a/2)^2+1}$.  Since
$\mu(a,b) \le a + b + ab =: \wt \mu(a,b)$, we can also use $\wt \mu$
instead of $\mu$.  Replacing $d$ with the equivalent (uniformly)
\emph{bounded} distance 
\begin{equation*}
  d^*(g_1,g_2) := \min \{d(g_1,g_2),\delta\}
\end{equation*}
for some fixed $\delta>0$,
we see that
\begin{multline*}
  \wt d^*(g_1,g_3)
  \le \wt d^*(g_1,g_2) + \wt d^*(g_2,g_3) 
  + \wt d^*(g_1,g_2) \wt d^*(g_2,g_3)\\
  \le \Bigl(1 + \sinh \frac {n\delta} 4 \Bigr) 
  \bigl( \wt d^*(g_1,g_2) + \wt d^*(g_2,g_3) \bigr),
\end{multline*}
where $\wt d^*$ is defined as $\wt d$ in~\eqref{eq:def.q-met} but with
$d^*$ instead of $d$, i.e., the triangle inequality is fulfilled up to
a factor $\tau=1+\sinh(n\delta/4)$.  Note that if $\wt d$ is bounded
by $\gamma$, then we can choose $\tau = 1+\gamma/2$.
\begin{definition}
  \label{def:q-metric}
  A function $\wt d^* \ge 0$ is called \emph{quasi-distance} if it is
  symmetric ($\wt d^*(g_1,g_2)=\wt d^*(g_2,g_1)$), definite ($\wt
  g^*(g_1,g_2)=0$ implies $g_1=g_2$), and fulfills the weak triangle
  inequality
  \begin{equation*}
    \wt d^*(g_1,g_3) \le \tau(\wt d^*(g_1,g_2)+ \wt d^*(g_1,g_2))
  \end{equation*}
  for some factor $\tau \ge 1$.
\end{definition}
Usually, a quasi-distance is called a \emph{quasi-metric}, but we
prefer the terminology ``distance'' in order not to interfere with the
word ``metric'' for the points of the space, being Riemannian metrics.
Sometimes, such distance functions are also called
\emph{semi-metrics}.  For more details on such spaces we refer
to~\cite[Sec.~14]{heinonen:01} (see also~\cite{xia:09} for a more
recent list of references).  Let us just mention Proposition~14.5
from~\cite{heinonen:01} stating that a \emph{power} of a quasi-distance
$\wt d^*$ is equivalent to a metric, i.e., for any $\eps \in
(0,\eps_0)$ there is a metric $\bar d_\eps$ and a constant
$C=C(\eps,\tau)>0$ such that
\begin{equation*}
  C^{-1}(\wt d^*(g_1,g_2))^\eps
  \le \bar d_\eps(g_1,g_2)
  \le C (\wt d^*(g_1,g_2))^\eps
\end{equation*}
for all $g_1, g_2$, where $\eps_0=\ln 2/(2 \ln \tau)$.  In particular,
a quasi-distance uniquely determines a topology and a uniform structure
\emph{independent} of $\eps$.  Hence, the notions of \emph{convergence}
and \emph{completeness} are well-defined on a space with a
quasi-distance.

Let us now pass to metrics $g_1$, $g_2$ on a manifold $M$, i.e., to
sections in the bundle $\Met(M) := \Sesq_+ T^* M$. (For the following
considerations, no smoothness assumptions on $g_1$, $g_2$ are needed.)
In particular, $g_1$, $g_2$ induce a section $A=A_{g_2,g_1}$ into the
bundle of positive definite endomorphisms on $T^*M$
applying~\eqref{eq:2.met} pointwise.  Denote by $\map{\alpha_k} M
{(0,\infty)}$ the (pointwise) eigenvalues of $A$.  Comparing the two
volume measures $\dvol_{g_1}$ and $\dvol_{g_2}$ yields
\begin{equation}
  \label{eq:2.vol}
  \dvol_{g_2} = \rho \dvol_{g_1}
  \qquad \text{with} \qquad
  \rho=\rho_{g_2, g_1}=(\det A)^{-1/2} =
  (\alpha_1 \cdot \ldots  \cdot \alpha_n)^{-1/2}.
\end{equation}
Denote by $d(g_1,g_2)(x) := d(g_1(x),g_2(x))$ the \emph{pointwise}
distance function of $g_1, g_2 \in \Met(M)$, and similarly for $\wt
d$.  It is easily seen that the pointwise estimate
\begin{equation}
  \label{eq:met.est}
  (\wt d(g_1,g_2) + 1 )^{-2}
  \le \e^{-\frac n 2 d(g_1,g_2)}
  \le \rho
  = \e^{-\frac 12 \sum_k \ln \alpha_k}
  \le \e^{\frac n 2 d(g_1,g_2)}
  \le (\wt d(g_1,g_2) + 1)^2
\end{equation}
holds, the inequality being interpreted pointwise in the sense of
quadratic forms.  The same estimate holds with $\rho$ replaced by
$\rho A$, where we use the estimate
\begin{equation}
  \label{eq:met.est2}
  \e^{-\frac n 2 d(g_1,g_2)}
  \le \e^{\min_i (-\frac 12 \sum_{k \ne i} \ln \alpha_k 
                + \frac 12 \ln \alpha_i)}
  \le \rho A
  \le \e^{\max_i (-\frac 12 \sum_{k \ne i} \ln \alpha_k 
                + \frac 12 \ln \alpha_i)}
  \le \e^{\frac n 2 d(g_1,g_2)}.
\end{equation}

For two elements $g_1, g_2 \in \Met(M)$, denote by
\begin{equation*}
  d_\infty(g_1, g_2) := \sup_{x  \in M} d(g_1,g_2)(x) 
  \qquad \text{resp.} \qquad
  \wt d_\infty(g_1, g_2) := \sup_{x  \in M} \wt d(g_1,g_2)(x) 
   \label{eq:met.est3}
\end{equation*}
the uniform (quasi-)distance.  Note that $d$ is indeed a metric on
$\Met(M)$ (except that $d_\infty (g_1,g_2)=\infty$ is possible).

 \begin{remark}
  \label{rem:q-iso}
  It is easy to see that
  \begin{equation*}
    d_\infty(g_1,g_2)<\infty 
    \quad\Longleftrightarrow \quad 
    \wt d_\infty(g_1,g_2) < \infty 
    \quad\Longleftrightarrow\quad 
    \text{$g_1$ quasi-isometric to $g_2$,}
  \end{equation*} 
  where the last statement means that there exists $\eta>0$ such that
  \begin{equation*}
    \eta^{-1} g_2(x) \le g_1(x) \le \eta g_1(x)
  \end{equation*}
  for all $x \in M$ in the sense of quadratic forms (one may choose
  $\eta=\exp d_\infty(g_1,g_2)$). Recall that the notion
  ``quasi-isometric'' was already defined in \Def{quasi} in
  \Sec{ex.wo}.
\end{remark}

For a further analysis of the topological (or uniform) structure of
$\Met(M)$, we refer to~\cite{eichhorn:07}.

%
\section{Pointwise bounds for $u(x)$ and $\de u(x)$}
\label{app:ptw.bd}
%

Let $\Mscr = (M,g)$ denote a {\colour[B12] complete} Riemannian
manifold satisfying the assumptions of \Sec{mfds.sob.lap}.  The metric
$g$ is assumed to be smooth (or to have the ``minimal'' regularity of
\Rem{regularity}).  Let $H$ denote the {\colour[B13] self-adjoint and
  non-negative extension of the Laplacian of $\Mscr$.}
We provide pointwise bounds for $u$ and $\de u$ where $u = (H +
1)^{-m}f$, $f \in \Lsqr \Mscr$, and $m$ is sufficiently large.  For
{\colour $n < p < \infty$} and $1 < Q \le 2$ let $r(x) = \rH
\Mscr(x,p,Q)$ denote the harmonic radius at $x \in M$ as in
\Sec{harm.rad}.  We then have the following theorem.
\begin{theorem}
  \label{thm:b.1} 
  Let $H$ and $r(x)$ be as above, let $f \in \Lsqr \Mscr$, and let $u
  := (H+1)^{-m} f$, where $m \in \N$, $m \ge [n/4] + 2$.  Then
  {\colour $u \in \Cont[1]\Mscr$ and} there exist constants $C>0$,
  depending only on $m$, $n$, $p$, and $Q$, such that
  \begin{subequations}
    \begin{align}
      \label{eq:b.1a}
      \abs{u(x)} &
      \le C (\min\{1, r(x)\})^{-n/2} \norm[\Lsqr \Mscr] f
    \intertext{and}
      \label{eq:b.1b}
    \abs[g] {\de u(x)} &
    \le C (\min\{1,r(x)\})^{-n/2 -1} \norm [\Lsqr \Mscr] f.
    \end{align}
  \end{subequations}
\end{theorem}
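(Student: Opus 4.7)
The plan is to work in a harmonic coordinate chart of radius $r_0 := \min\{1, r(x)\}$ centered at $x$, rescale that ball to unit size in $\R^n$, iterate the interior $L^p$ elliptic estimate of \cite[Thm.~9.11]{gilbarg-trudinger:83} $m$ times to gain enough regularity, and then read off pointwise values of $u$ and $\de u$ by a Sobolev embedding into $C^1$. The rescaling is exactly what produces the factors $r_0^{-n/2}$ and $r_0^{-n/2-1}$ in \eqref{eq:b.1a} and \eqref{eq:b.1b}.

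To set up the bootstrap, I introduce the auxiliary chain $v_0 := f$ and $v_k := (H+1)^{-1} v_{k-1}$, so that $v_m = u$, each $v_k$ is a weak solution of $(H+1) v_k = v_{k-1}$, and $\norm[\Lsqr \Mscr]{v_k} \le \norm[\Lsqr \Mscr] f$. As emphasised in \cite{deturck-kazdan:81, anderson-cheeger:92}, in harmonic coordinates the Laplacian has no first-order terms, so each $v_k$ satisfies
\[
   - g^{ij} \partial_i \partial_j v_k + v_k = v_{k-1}
\]
on the image $U \subset \R^n$ of the chart, with principal coefficients controlled by \eqref{eq:2.3} and \eqref{eq:2.4}. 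Pulling back by the dilation $y \mapsto r_0 y$, the rescaled functions $\tilde v_k(y) := v_k(r_0 y)$ solve an analogous elliptic equation on the unit ball in $\R^n$ whose principal coefficients $\tilde g^{ij}$ still satisfy \eqref{eq:2.3} and whose zeroth-order term is the harmless $r_0^2 \le 1$. Combining \eqref{eq:2.4} with Morrey's embedding supplies a uniform $C^{0,1-n/p}$-modulus for the $\tilde g^{ij}$ depending only on $n$, $p$ and $Q$, so the hypotheses of \cite[Thm.~9.11]{gilbarg-trudinger:83} are met with constants independent of $x$.

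I would then iterate the interior $L^p$ estimate on a shrinking sequence of balls: starting from $\tilde v_0 \in L^2(B_1)$ with $\|\tilde v_0\|_{L^2(B_1)} \le C r_0^{-n/2} \norm[\Lsqr \Mscr] f$ (the factor coming from the Jacobian of the rescaling and from the comparability of $\dvol_g$ with Lebesgue measure via \eqref{eq:2.3}), each application of \cite[Thm.~9.11]{gilbarg-trudinger:83} adds two derivatives, preceded if necessary by one Sobolev step to raise the integrability exponent above $n$. After $m$ iterations, and since $2m \ge 2[n/4] + 4 > n/2 + 1$, a final Sobolev embedding yields $\tilde v_m \in C^1$ near the origin with norm at most $C \|\tilde v_0\|_{L^2(B_1)}$. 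Reversing the rescaling recovers \eqref{eq:b.1a}, while the chain-rule factor $r_0^{-1}$ coming from $\partial_{y_i} v_m = r_0^{-1} \partial_{x_i} \tilde v_m$ supplies the extra $r_0^{-1}$ in \eqref{eq:b.1b}; the equivalence of $\abs[g]{\de u}$ with the Euclidean gradient norm, again by \eqref{eq:2.3}, is absorbed into the constant. The main obstacle is pure bookkeeping: one must ensure that every constant along the way depends only on $m, n, p, Q$, and that each Sobolev upgrade and each interior estimate is applied on a ball strictly inside the previous one, following exactly the scheme developed in \cite{anderson-cheeger:92}.
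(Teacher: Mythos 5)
Your proposal takes essentially the same route as the paper: harmonic coordinates, an $\Lpspace$ elliptic bootstrap along the chain $(H+1)v_k = v_{k-1}$ driven by \cite[Thm.~9.11]{gilbarg-trudinger:83}, a dilation of the chart to produce the $r_0$-powers, and a final Sobolev/Morrey embedding into $\Contspace[1]$; the only organizational difference is that you rescale once up front whereas the paper encapsulates the rescaling in \Lem{b.4} and iterates that lemma. One bookkeeping slip: the final bound cannot be ``at most $C\,\norm[\Lsqr{B_1}]{\tilde v_0}$'' --- take $f$ supported outside the chart, then $\tilde v_0 \equiv 0$ on $B_1$ while $u$ need not vanish --- it should be $C\,r_0^{-n/2}\norm[\Lsqr \Mscr] f$, absorbing the local $\Lsqrspace$-norms of every $\tilde v_k$, $k=0,\dots,m$ (each controlled by $r_0^{-n/2}\norm[\Lsqr \Mscr] f$), which is exactly why the paper arranges \Lem{b.2} so that the last term on the right is the $\Lsqrspace$-norm of the solution $u$ itself rather than of the datum.
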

While the first estimate is well-known (cf., e.g., \cite{cgt:82}), the
gradient estimate requires some additional work. The most crucial
ingredient is an estimate in elliptic regularity theory (eqn.~(0.10)
in~\cite{anderson-cheeger:92}) which we adapt to our situation. We
will employ elliptic regularity theory in $\Lsqrspace$ for equations in
divergence form as well as elliptic regularity theory in $\Lpspace$
for strong solutions.  The first one will allow us to show that weak
solutions are in fact strong solutions in the sense of
\cite{gilbarg-trudinger:83}; the actual estimates will then be
obtained from Theorem~9.11 in~\cite{gilbarg-trudinger:83}.

Let us recall how the equation $(H + 1) u = f$ reads in local
coordinates. Note first that
\begin{equation}
\label{eq:b.111}
  \abssqr[g] {\de u}
  = \sum_{i,j=1}^n g^{ij} \partial_i u \partial_j \conj u . 
\end{equation} 
 Therefore, by the very definition of the operator $H$ in
\Sec{mfds.sob.lap}, the equation $Hu + u = f$ (in the weak sense) means
 that $u$ belongs to $\Sob \Mscr$  and satisfies
\begin{equation}
  \label{eq:b.2}
  \sum_{i,j=1}^n
       \int g^{ij} \partial_i u \,\partial_j \phi  \sqrt{g}  \dd x 
     + \int u \phi \sqrt{g} \dd x 
  = \int f \phi \sqrt{g} \dd x,  
\end{equation}
for all $\phi \in \Cci U$, where $U$ is as in \Sec{harm.rad} and where
$\sqrt g = \sqrt{\det(g_{ij})}$.  (Note that we do not
distinguish in the notation between $u$ and $u \circ \Phi^{-1}$ if
$\Phi$ is a coordinate map.)  Therefore, the \emph{weak form} of the
partial differential equation is formally given by
\begin{equation}
  \label{eq:b.3}
  - \sum_{i,j=1}^n \partial_j (\sqrt g g^{ij} \partial_i u) + \sqrt g u
  = \sqrt g f.    
\end{equation}
We will see shortly that, under suitable assumptions, $u$ belongs to
$\Sobx [2] \loc U$ and is also a \emph{strong} solution. In the
special case of harmonic coordinates the first-order terms cancel, and
we see that $u$ then satisfies the partial differential equation  
\begin{equation}
\label{eq:b.4}
   - \sum_{i,j=1}^n g^{ij} \partial_i \partial_j u +  u = f; 
 \end{equation}
 {\colour[B14] see, e.g., \cite{deturck-kazdan:81}.}
Since, in harmonic coordinates, $g$ is close to $1$ in its coordinate
patch (cf.\ eqn.~\eqref{eq:2.3}), the extra factors of $\sqrt g$ in
eqn.~\eqref{eq:b.3} pose no problem.

As a preparation, we recall some facts from (interior) elliptic
regularity theory in a bounded domain $\Omega \subset \R^n$ where we
now use the symbols $u$ and $f$ in a different context.  We begin with
\emph{weak} solutions $u \in \Sobx [1] \loc \Omega$ of an elliptic
equation $L_w u = f$,
\begin{equation}
  \label{eq:b.5}
  L_w u := -\sum_{i,j=1}^n \partial_j A_{ij}\partial_i  u + \gamma u, 
\end{equation}
where the coefficient matrix $(A_{ij})$ is uniformly positive definite
with $(A_{ij}) \ge 1/2$, the $A_{ij}$ are of class $\Cont[0,1]
\Omega$, and $\gamma$ is bounded; finally, we assume $f \in \Lp[q]
\Omega$ for some $q \ge 2$.
From~\cite[Thm.~8.8]{gilbarg-trudinger:83} we then infer that $u \in
\Sobx [2] \loc \Omega$ and that $u$ is a \emph{strong solution}
of eqn.~\eqref{eq:b.5} in the sense of~\cite{gilbarg-trudinger:83}.
Furthermore,~\cite[Lemma 9.16]{gilbarg-trudinger:83} yields that $ u
\in \SobW[q,\loc] 2 \Omega$.  This type of regularity will be
needed later on.

We next consider \emph{strong} solutions $u \in
\SobW[q,\loc] 2 \Omega$ of an elliptic equation $L u = f$ with
\begin{equation}
  \label{eq:b.6}
  L u := -\sum_{i,j=1}^n a_{ij}\partial_i \partial_j u + \gamma u
\end{equation}
where, as above, $q \ge 2$, $(a_{ij}) \ge 1/2$, $a_{ij} \in \Cont[0,1]
{\Omega}$, and $\gamma$ is bounded.
In view of the Sobolev Embedding Theorem, we define the exponent
\begin{equation}
  \label{eq:b.7}
  \sigma(q) :=
  \begin{cases}
    \dfrac{qn}{n - 2q}, & 2q < n,\\
    q+1, & 2q = n,\\
    \infty, & 2q > n,  
  \end{cases}
\end{equation}
for $q \in [1,\infty]$.  For $a \in \Cont[0,\alpha] {\clo \Omega}$
(the space of \emph{uniformly} H\"older-continuous functions), we
denote the $\alpha$-H\"older-constant of $a$ by $[a]_{0,\alpha}$.  We
then have the following lemma.

\begin{lemma}
  \label{lem:b.2} 
  Let $\Omega := B_2 \subset \R^n$ and let $L$ as in~\eqref{eq:b.6}
  with $(a_{ij}) \ge 1/2$, $a_{ij} \in \Cont[0,1] {\Omega}$, and
  $\gamma$ bounded. {\colour Let $\alpha \in (0,1]$ and} let $\Lambda
  > 0$ be such that $\norm[\infty]{a_{ij}} \le \Lambda$,
  $[a_{ij}]_{0,\alpha} \le \Lambda$ and $\norm[\infty]{\gamma} \le
  \Lambda$.  Let $q \in [2, \infty)$ and let $u \in \SobW[q] 2 {B_2}$.
  We then have:
  \begin{enumerate}
  \item
    \label{b.2.i}
    There exists a constant $C_1 \ge 0$, depending only on $n$,
    $\alpha$, and $\Lambda$, such that
    \begin{equation}
      \label{eq:b.8}
      \norm [{\SobW[q] 2 {B_1}}] u
      \le C_1 \bigl(\norm[{\Lp[q]{B_2}}] {L u}
        + \norm[\Lsqr {B_2}]u \bigr).
    \end{equation}
  \item
    \label{b.2.ii}
    Let $q_1 := \sigma(q)$ as in~\eqref{eq:b.7}.  Then there
    exists a constant $C_2 \ge 0$, depending only on $n$, $\alpha$,
    and $\Lambda$, such that
    \begin{equation}
      \label{eq:b.9}
      \norm[{\Lp[q_1]{B_1}}] u
      \le C_2 \bigl( \norm[{\Lp[q]{B_2}}]{L u}
                   + \norm[\Lsqr{B_2}] u \bigr).
    \end{equation}
  \end{enumerate}
\end{lemma}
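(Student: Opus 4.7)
\smallskip

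\noindent\textbf{Proof plan.} The plan is to combine the interior $L^p$-regularity theorem of \cite[Thm.~9.11]{gilbarg-trudinger:83} with a finite Sobolev-embedding bootstrap. Under the present hypotheses---uniform ellipticity $(a_{ij}) \ge 1/2$, $a_{ij} \in \Cont[0,1]\Omega \subset \Cont[0,\alpha]{\Omega}$ with H\"older-norm bounded by $\Lambda$, and $\gamma$ bounded---Theorem~9.11 of~\cite{gilbarg-trudinger:83}, applied on concentric balls $B_r \subset B_{r'}$ with $1 \le r < r' \le 2$, yields, for every $p \in [2,\infty)$,
\begin{equation*}
  \norm[{\SobW[p] 2 {B_r}}] u
  \le \wt C \bigl(\norm[{\Lp[p]{B_{r'}}}] {L u}
                + \norm[{\Lp[p]{B_{r'}}}] u \bigr),
\end{equation*}
with a constant $\wt C$ depending only on $n$, $\alpha$, $\Lambda$, and the fixed ratio $r/r'$. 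This is the workhorse estimate; what remains is to replace the $\Lp[p]$-norm of $u$ on the right by the weaker $\Lsqr{B_2}$-norm appearing in~\eqref{eq:b.8}.

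To establish part~\itemref{b.2.i}, I would fix once and for all a finite increasing sequence $2 = p_0 < p_1 < \dots < p_N$ with $p_N \ge q$ and nested radii $1 = r_N < r_{N-1} < \dots < r_0 = 2$, choosing $p_k = \sigma(p_{k-1})$ whenever this is finite and $p_k = p_{k-1} + 1$ in the borderline case $2 p_{k-1} = n$; by the definition of $\sigma$ in~\eqref{eq:b.7}, each such choice guarantees a continuous Sobolev embedding $\SobW[p_{k-1}] 2 B \hookrightarrow \Lp[p_k] B$ on every ball $B \subset \R^n$, and since $\sigma(p) > p$ strictly whenever $2p < n$ and $\sigma(p) = \infty$ as soon as $2p > n$, the sequence reaches or exceeds $q$ in $N = N(n,q)$ steps. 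Starting from the trivial bound on $\norm[{\Lp[2]{B_{r_0}}}] u$, I would iterate: at stage $k$, apply the displayed inequality on $B_{r_{k+1}} \subset B_{r_k}$ at exponent $p_k$, and then Sobolev-embed on $B_{r_{k+1}}$ to upgrade the integrability of $u$ to $p_{k+1}$. H\"older's inequality on the bounded domain $B_2$ allows one at each step to replace $\norm[{\Lp[p_k]{B_{r_k}}}]{L u}$ by a multiple of $\norm[{\Lp[q]{B_2}}]{L u}$, so all $Lu$-terms are subsumed in the target right-hand side, and after $N$ iterations one arrives at~\eqref{eq:b.8} on $B_1 = B_{r_N}$. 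Part~\itemref{b.2.ii} is then an immediate consequence of~\itemref{b.2.i} combined with the Sobolev embedding $\SobW[q] 2 {B_1} \hookrightarrow \Lp[q_1] {B_1}$, which covers all three regimes of~\eqref{eq:b.7}: subcritical $2q < n$, borderline $2q = n$ (embedding into $\Lp[r]{B_1}$ for every $r < \infty$, in particular $r = q+1$), and supercritical $2q > n$ (embedding into $\Lp[\infty]{B_1}$).

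The main technical point will be bookkeeping of constants, ensuring the final $C_1$, $C_2$ depend only on $n$, $\alpha$, and $\Lambda$. Since the exponents $(p_k)$, the radii $(r_k)$ and the iteration count $N$ can all be fixed depending only on $n$ and $q$, and since each iteration contributes a constant of the admissible form (the Sobolev constants on order-one balls depending only on $n$ and $p_k$, and the Gilbarg--Trudinger constant on such balls depending only on $n$, $\alpha$, $\Lambda$), this tracking comes out as required. A minor subtlety I expect at the borderline Sobolev embedding $2p_{k-1} = n$ is handled uniformly by the fallback choice $p_k = p_{k-1} + 1$.
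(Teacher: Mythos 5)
Your plan matches the paper's own proof: both iterate \cite[Thm.~9.11]{gilbarg-trudinger:83} on a nested chain of balls, interleaved with Sobolev embeddings to upgrade the integrability exponent, and both derive~\eqref{eq:b.9} from~\eqref{eq:b.8} by one more Sobolev embedding. One detail to watch: the paper explicitly caps each intermediate exponent at $q$, setting $p_1 := \min\{q,\sigma(2)\}$ and so on, whereas you let $p_k = \sigma(p_{k-1})$ run freely until $p_N \ge q$. The cap is not optional bookkeeping. If some $p_k > q$, then (a) the H\"older step you invoke to dominate $\norm[{\Lp[p_k]{B_{r_k}}}]{Lu}$ by $\norm[{\Lp[q]{B_2}}]{Lu}$ runs in the wrong direction and fails, and (b) \cite[Thm.~9.11]{gilbarg-trudinger:83} requires $u \in \SobW[p_k,\loc]{2}$ a priori, which the hypothesis $u \in \SobW[q]{2}{B_2}$ gives you only for $p_k \le q$. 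Replace your recursion by $p_k := \min\{q,\sigma(p_{k-1})\}$ (handling the borderline and supercritical cases of $\sigma$ exactly as you describe) and the argument closes correctly, with constants tracked as you indicate.
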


\begin{remark}
  \begin{myenumerate}{\alph}
  \item It is essential for later applications that the last term in
    eqns.~\eqref{eq:b.8} and~\eqref{eq:b.9} is an $\Lsqrspace$-norm,
    as in eqn.~(0.10) in~\cite{anderson-cheeger:92}.

  \item If $L$ would also contain first order terms $b_i
    \partial_i u$, it appears that we would need to require the
    coefficients $b_i$ to be bounded. In general, the first order
    terms of the Laplacian contain derivatives of the $g^{ij}$ and we
    would need an assumption like $\norm[{\SobWspace[\infty] 1}] {g^{ij}}
    \le \Lambda$, while the harmonic coordinates only come with an
    estimate for $\norm[{\SobWspace[p] 1}]{g^{ij}}$.
  \end{myenumerate}
\end{remark}

\begin{proof} The proof of \Lem{b.2} combines elliptic regularity in
  $\Lpspace$ with a simple bootstrap argument for which we fix a
  sequence of radii $2 > \rho_1 > \rho_2 > \dots > 1$.

  \begin{myenumerate}{\roman}
  \item Interior elliptic regularity in $\Lsqr {B_2}$ as
    in~\cite[Thm.~9.11]{gilbarg-trudinger:83} gives us a constant
    $c_1$, depending only on $\Lambda$ and $\rho_1$ such that
    \begin{equation*}
      \norm[{\Sob [2] {B_{\rho_1}} }] u
      \le c_1 \bigl(   \norm[\Lsqr{B_2}] {Lu}
      + \norm[\Lsqr{B_2}]  u
      \bigr).
    \end{equation*}

  \item Let $p_1 := \min\{q, \sigma(2)\}$.  By the Sobolev Embedding
    Theorem, there is a constant $c_2$, depending only on $n$,
    $\alpha$, $\Lambda$, and $q$, such that
    \begin{equation*}
      \norm[{\Lp[p_1]{B_{\rho_2}}}] u
      \le c_2 \norm[{ \Sob [2] {B_{\rho_1}} }] u
      \le c_1 c_2 \bigl(
      \norm[\Lsqr {B_2}]{Lu}
      + \norm[\Lsqr {B_2}] u
      \bigr).
    \end{equation*}

  \item We apply~\cite[Thm.~9.11]{gilbarg-trudinger:83} in
    $\Lp[p_1]{B_{\rho_2}}$ to obtain a constant $c_3$, depending only
    on $\Lambda$, $\rho_1$, $\rho_2$, $\rho_3$, and $p_1$ such that
    \begin{align*}
      \norm[{ \SobW[p_1] 2 {B_{\rho_3}} }] u & \le c_3 \bigl(
      \norm[{\Lp[p_1]{B_{\rho_2}}}] {Lu} +
      \norm[{\Lp[p_1]{B_{\rho_2}}}] u
      \bigr)\\
      & \le c_4 \norm[{\Lp[p_1]{B_2}}] {Lu} + c_5 \norm[\Lsqr{B_2}] u.
    \end{align*}
  \end{myenumerate}
  If $p_1 = q$ the proof of the first inequality is finished.
  Otherwise we continue with another application of the Sobolev
  Embedding Theorem. The proof terminates after a finite number of
  steps (which depends only on $n$ and $q$). The second inequality
  follows by Sobolev. 
\end{proof}

By a simple scaling argument we now transfer the
estimate~\eqref{eq:b.9} from $B_2$ to $B_{2r}$ where $0 < r \le 1$:

\begin{lemma}
  \label{lem:b.4}
  Let $0 < r \le 1$ and consider $\Omega := B_{2r}$. {\colour Let $L$
    be as in~\eqref{eq:b.6} with $(a_{ij}) \ge 1/2$, $a_{ij} \in
    \Cont[0,1] {\Omega}$, and $\gamma$ bounded.  Let $\Lambda > 0$ be
    such that $\norm[\infty]{a_{ij}} \le \Lambda$,
    $[a_{ij}]_{0,\alpha} \le \Lambda r^{-\alpha}$, and
    $\norm[\infty]{\gamma} \le \Lambda$ on $B_{2r}$.}  Let $q \in
  [2,\infty)$, $q_1 := \sigma(q)$, and let $u \in \SobW[q] 2
  {B_{2r}}$.

  Then there exists a constant $C \ge 0$, depending only on $n$,
  $\alpha$, and $\Lambda$, such that
  \begin{equation}
    \label{eq:b.10}
    \norm [{\Lp[q_1]{B_r}}] u
    \le C r^2 \cdot r^{-n(1/q - 1/q_1)} 
            \cdot \norm[{\Lp[q]{B_{2r}}}] {Lu}
          + C \cdot r^{-n(1/2 - 1/q_1)} \cdot \norm[{\Lsqr{B_{2r}}}] u.
 \end{equation}
\end{lemma}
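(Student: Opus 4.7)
The plan is to reduce \Lem{b.4} to \Lem{b.2}\itemref{b.2.ii} by a straightforward rescaling that maps $B_{2r}$ to $B_2$. Define $\tilde u(y) := u(ry)$ for $y \in B_2$, together with the rescaled coefficients
\begin{equation*}
  \tilde a_{ij}(y) := a_{ij}(ry), \qquad \tilde \gamma(y) := r^2 \gamma(ry),
\end{equation*}
and the rescaled operator $\tilde L \tilde u := -\sum_{i,j} \tilde a_{ij} \partial_i\partial_j \tilde u + \tilde \gamma \tilde u$. A direct chain-rule computation gives $(\tilde L \tilde u)(y) = r^2 (Lu)(ry)$.

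The next step is to verify that the rescaled problem satisfies the hypotheses of \Lem{b.2} with the \emph{same} constant $\Lambda$. Uniform ellipticity $(\tilde a_{ij}) \ge 1/2$ is immediate. Using $0<r\le 1$, one has $\norm[\infty]{\tilde a_{ij}} \le \Lambda$, $\norm[\infty]{\tilde \gamma} = r^2\norm[\infty]\gamma \le \Lambda$, and the H\"older seminorm scales as $[\tilde a_{ij}]_{0,\alpha} = r^\alpha [a_{ij}]_{0,\alpha} \le \Lambda$. The membership $\tilde u \in \SobW[q]{2}{B_2}$ is inherited from $u \in \SobW[q]{2}{B_{2r}}$.

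Applying \Lemenum{b.2}{b.2.ii} to $\tilde u$ yields
\begin{equation*}
  \norm[{\Lp[q_1]{B_1}}]{\tilde u}
  \le C_2 \bigl( \norm[{\Lp[q]{B_2}}]{\tilde L \tilde u} + \norm[\Lsqr{B_2}]{\tilde u} \bigr),
\end{equation*}
with $C_2$ depending only on $n,\alpha,\Lambda$. The proof then concludes by undoing the change of variables via the standard identities
\begin{equation*}
  \norm[{\Lp[p]{B_R}}]{\tilde u} = r^{-n/p} \norm[{\Lp[p]{B_{rR}}}] u, \qquad
  \norm[{\Lp[q]{B_2}}]{\tilde L \tilde u} = r^{2-n/q} \norm[{\Lp[q]{B_{2r}}}] {Lu},
\end{equation*}
and collecting the powers of $r$. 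This produces exactly~\eqref{eq:b.10}.

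There is no genuine obstacle here; the only points requiring care are (i) the use of $r \le 1$ to absorb the factor $r^2$ on $\tilde\gamma$ and the factor $r^\alpha$ on the H\"older seminorm so that \Lem{b.2} applies with an $r$-independent $\Lambda$, and (ii) the bookkeeping of the scaling exponents $2 - n/q + n/q_1$ and $-n/2 + n/q_1$, which match the exponents stated in~\eqref{eq:b.10}.
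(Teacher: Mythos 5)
Your proof is correct and follows essentially the same argument as the paper: the rescaling $y = x/r$ that maps $B_{2r}$ to $B_2$, the observation that the rescaled coefficients satisfy the hypotheses of \Lem{b.2} with the same $\Lambda$ (using $r\le 1$ for the $\gamma$-term and the H\"older seminorm), an application of \Lemenum{b.2}{b.2.ii}, and scaling back. The paper absorbs the $r^2$ into the equation $\tilde L\tilde u = r^2\tilde f$ rather than into the definition of $\tilde\gamma$, but this is a purely cosmetic difference.
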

\begin{proof}
  Write $f := Lu \in \Lp[q]{B_{2r}}$ and scale out (i.e., set $y :=
  x/r$) to obtain
  \begin{equation*}
    {\tilde u}(y) := u(ry), \qquad y \in B_2.
  \end{equation*}
  Similarly, write ${\tilde a}_{ij}(y) := a_{ij}(ry)$, ${\tilde \gamma}(y)
  := \gamma(ry)$, and ${\tilde f}(y) := f(ry)$.  Defining
  \begin{equation*}
    {\tilde L} = - \sum {\tilde a}_{ij} \partial_{y_i} \partial_{y_j} +
    r^2 {\tilde \gamma},
  \end{equation*}
  the equation $Lu = f$ in $\Lsqr{B_{2r}}$ is then equivalent with
  \begin{equation}
    \label{eq:b.11}
    \tilde L  \tilde u = r^2 {\tilde f}(y)
  \end{equation} 
  in $\Lsqr{B_2}$.  Applying \Lem{b.2} to eqn.~\eqref{eq:b.11} in
  $B_2$, we find that
  \begin{equation*}
    \norm[{\Lp[q_1]{B_1}}]{\tilde u}
    \le C r^2 \norm [{\Lp[q]{B_2}}]{\tilde f}
       + C \norm [\Lsqr {B_2}] {\tilde u},
  \end{equation*} 
  with a constant $C$ depending only on $n$, $\alpha$, and $\Lambda$;
  note that $[{\tilde a}_{ij}]_{0,\alpha} {\colour \le
    r^{\alpha}[a_{ij}]_{0,\alpha} \le \Lambda}$.  Scaling back yields
  \begin{equation*}
    r^{-n/q_1} \norm [{\Lp[q_1]{B_r}}] u
    \le C r^2 r^{-n/q} \cdot
        \norm [{\Lp[q]{B_{2r}}}] f  + r^{-n/2} \norm [\Lsqr {B_r}] u
  \end{equation*}
  and the result follows.
\end{proof}

The estimate~\eqref{eq:b.10} is slightly more precise than {\colour what
  follows from} the estimate~(0.10) in~\cite{anderson-cheeger:92} in
the sense that the dependence on the (local) harmonic radius is made
explicit in eqn.~\eqref{eq:b.10}.

The proof of \Thm{b.1} will be based on an iteration of \Lem{b.4}.  To
illustrate the idea, let $f \in \Lsqr {B_{2r}}$ and let $u_1$, $u_2$
satisfy $L u_1 = f$ and $L u_2 = u_1$ in $B_{2r}$, in the sense of a
strong solution; in particular, we assume $u_1, u_2 \in \Sobx [2] \loc
{B_{2r}}$.  By Sobolev, we then have $u_1 \in \Lploc[q_1] {B_{2r}}$,
where $q_1 := \sigma(2)$ and, by Lemma~9.15
in~\cite{gilbarg-trudinger:83}, $u_2 \in \SobW[q_1,\loc] 2 {B_{2r}}$.
Again, we assume $0 < r \le 1$. Then, \Lem{b.4} again yields
\begin{equation}
\label{eq:b.12}
  \norm [{\Lp[q_1]{B_r}}] {u_1}
   \le C  \cdot r^{2-n (1/2 - 1/q_1)}
          \cdot \norm [\Lsqr {B_{2r}}] f
     + C \cdot  r^{-n (1/2 - 1/q_1)}
          \cdot \norm [\Lsqr{B_{2r}}]{u_1}. 
\end{equation}
Similarly, we see that $u_2 \in \Lp[q_2]{B_{r/2}}$ with $q_2 :=
\sigma(q_1)$.  Again, \Lem{b.4} gives 
\begin{equation}
  \label{eq:b.13}
  \norm [{\Lp[q_2] {B_{r/2}}}]{u_2} 
  \le C \cdot r^{2-n(1/q_1 - 1/q_2)}
        \cdot \norm [{\Lp[q_1]{B_r}}]{u_1}  +
      C \cdot r^{-n (1/2 -  1/q_2)}\cdot
          \norm[\Lsqr{B_r}] {u_2}.
 \end{equation}
 Inserting~\eqref{eq:b.12} into~\eqref{eq:b.13} we obtain
 \begin{align}
  \nonumber
  \norm [{\Lp[q_2]{B_{r/2}}}] {u_2}
  & \le C r^{-n (1/2 - 1/q_2)}
        \cdot
        \bigl(
             C r^4 \norm[\Lsqr {B_{2r}}] f
             + C   \norm[\Lsqr {B_{2r}}] {u_1}
             +     \norm[\Lsqr {B_r}] {u_2}
        \bigr)\\
  \label{eq:b.14}
   & \le C' r^{-n/2}
        \bigl(
               \norm[\Lsqr {B_{2r}}] f
             + \norm[\Lsqr {B_{2r}}] {u_1}
             + \norm[\Lsqr {B_r}] {u_2}
        \bigr); 
\end{align}
note that the powers $r^{1/q_1}$ have dropped out. Clearly,
analogous\hiddenfootnote{Begr\"undung f\"ur ``analogous'':
  \begin{align*}
   \norm[{\Lp[q_2] {B_{r/2}}}]{u_2}
      & \le C^2\cdot r^4 \cdot r^{-n(1/2 -  1/q_2)} \cdot
                                      \norm[\Lsqr{B_{2r}}]{f}  \\
      & + C^2 \cdot  r^{-n(1/2 -  1/q_2)} \cdot
                                       \norm[\Lsqr {B_{2r}}]{u_1}\\
      & + C \cdot  r^{-n(1/q_1 -  1/q_2)}
                        \norm[\Lsqr{B_r}]{u_2}.
\end{align*}
}
estimates hold for (finite) chains of equations where $L u_{k+1} =
u_k$, for $k = 0, \ldots, m$.

Finally, we now return to manifolds $\Mscr$ with Laplacian $H$.  In
view of \Def{harm.rad} we fix \emph{once and for all} some $p \in
(n,\infty)$ and let $\alpha := 1 - n/p$.  We also fix some $1 < Q \le 2$
close enough to $1$ to ensure that any of the neighborhoods $U$ of
\Sec{harm.rad} contains a Euclidean ball of radius $\rH \Mscr (x)/2$;
here we use eqn.~\eqref{eq:2.3} and the usual formula for the length
of curves in local coordinates. In the sequel, we will suppress the
dependence on the constants $p$, $Q$ and $n$ in the notation.

 Let $f \in \Lsqr \Mscr$ and consider
\begin{equation*}
  u_k := (H+1)^{-k}f, \qquad k \in \N_0;   
\end{equation*}
we then have $u_k \in \dom H$,
\begin{equation}
  \label{eq:b.15}
  (H+1) u_{k+1} = u_k, \qquad k \in \N_0, 
\end{equation}
and 
\begin{equation*}
  \norm [\Lsqr \Mscr] {u_k}
  \le \norm [\Lsqr \Mscr] f, \qquad k \in \N. 
\end{equation*}
Let $r(x) = \rH\Mscr(x,p,Q)$ denote the harmonic radius at $x \in M$.
Passing to harmonic coordinates in a geodesic ball $B = B_\Mscr(x,r(x)) 
\subset M$ of radius $r(x)$ around $x$, the functions $u_k$ (or, more
precisely, $u_k \circ \Phi^{-1}$ etc.)  are weak solutions of the
divergence form equation $-\partial_j \sqrt g g^{ij} \partial_i
u_{k+1} + \sqrt g u_{k+1} = \sqrt g u_k$, but then, as explained
above, they are also strong solutions of
\begin{equation}
  \label{eq:b.16}
  - g^{ij} \partial_i \partial_j u_{k+1} + u_{k+1} = u_k 
\end{equation}
in $\Phi(B) \subset \R^n$, for $k \in \N_0$; note that we may apply
Theorem~8.8 of~\cite{gilbarg-trudinger:83} to the weak equation since
$\sqrt g g^{ij}$ is in $\SobWspace [\infty,\loc] 1$ and thus locally
Lipschitz.

\begin{proof}[Proof of \Thm{b.1}]
  The $g^{ij}$ satisfy the
  estimates~\eqref{eq:2.3}--{\colour \eqref{eq:gij.hoelder}} and we
  see that $\Lambda$ (defined as in \Lem{b.4}) depends only on $n$ and
  $p$.  Applying \Lem{b.4} successively to the
  equations~~\eqref{eq:b.16}, as indicated above, we obtain the
  estimate~\eqref{eq:b.1a}, which, in fact, holds for $m \ge [n/4] +
  1$.

  As for the gradient estimate, we let $k_0 := [n/4] + 1$ and consider
  the equation $L u_{k_0+1} + u_{k_0+1} = u_{k_0}$ where, by the
  above, $u_{k_0}$ and $u_{k_0+1}$ are locally bounded with
  estimates
  \begin{equation*}
    |u_{k_0}(x')|, \> |u_{k_0+1}(x')| 
           \le C r(x)^{-n/2} \norm[\Lsqr \Mscr] f,
     \qquad |x-x'| \le 2^{-k_0-1} r(x).
  \end{equation*}
  Scaling out as in the proof of \Lem{b.4}, but now with a factor of
  $2^{k_0+3}r(x)$, we find that the scaled function ${\tilde
    u}_{k_0+1}$ satisfies an equation
  \begin{equation*}
    {\tilde L} {\tilde u}_{k_0+1} = 4^{-k_0 - 3} r(x)^2 {\tilde u}_{k_0}
  \end{equation*}
  in $B_2 \subset \R^n$, where ${\tilde L} = - {\tilde g}^{ij}
  \partial_i\partial_j +  4^{-k_0 - 3} r(x)^2$. 

  Fix some $q \in (n,\infty)$, e.g., $q := n+1$.  As above, we have
  ${\tilde u}_{k_0+1} \in \SobW[q,\loc] 2 {B_2}$ and \Lem{b.2}
  yields an estimate
  \begin{equation*}
     \norm [{\SobW[q] 2 {B_1}}] {{\tilde u}_{k_0+1}}
        \le c
           \bigl(
               \norm [{\Lp[q]{B_2}}] {{\tilde u}_{k_0}} 
                +\norm[\Lsqr{B_2}] {{\tilde u}_{k_0+1}}
           \bigr)
        \le c' r^{-n/2} \norm [\Lsqr \Mscr] f.
  \end{equation*}
  where the constants $c$, $c'$ depend only on $n$, $p$, and $Q$.

  By the Sobolev Embedding Theorem, we now conclude that {\colour
    ${\tilde u}_{k_0+1} \in \Cont[1]{B_1}$ and}
  \begin{equation*}
    |\nabla {\tilde u}_{k_0+1}(x)|
    \le C' \norm [{\SobW[q] 2 {B_1}}] {{\tilde u}_{k_0+1}}
    \le c'' r^{-n/2} \norm [\Lsqr \Mscr] f, \qquad {\colour x \in B_1,} 
  \end{equation*}
  with $C'$ depending only on $n$, and $c'' := C'c'$.

  Scaling back gives the estimate $|\nabla u_{k_0+1}(x)| \le c
  r^{-n/2-1} \norm [\Lsqr \Mscr] f$, with a constant $c$ depending
  only on $n$, $p$, and $Q$.  We conclude by
  combining~\eqref{eq:b.111} with the estimate~\eqref{eq:2.3}.
\end{proof}

{\colour[B11] Since it fits well into the context of this appendix, we
  indicate here how to deal with (smooth) boundaries $\Sigma$ and some
  of the mapping properties of $R^m = (H+1)^{-m}$ and of $R_\dec^m =
  (H_\dec + 1)^{-m}$, as required in the proof of Proposition 4.5.
  Recall from Section 4 that $\Sigma_s = \bd M_s$, for $s = 1, \ldots,
  \ell$, and $\Sigma = \bigcup_{s=1}^\ell \Sigma_s = \bd M_0$ where we
  now label by the index $s$ instead of $k$. The spaces
  $\Cont[1]{\Mscr}$ and $\Cont[1]{\clo \Mscr_s}$, for $s = 0, \ldots,
  \ell$, are as in \Sec{mfds.ends}.
 
\begin{proposition} 
  \label{prp:b.5}
  For $m \in \N$, $m \ge [n/4] + 2$ we have the following:  
  \begin{myenumerate}{\alph}
  \item
    \label{prp.b.5a}
    $R^m$ is a bounded operator from $\Lsqr \Mscr$ to
    $\Cont[1]{\Mscr}$;

  \item
    \label{prp.b.5b}
    $R_\dec^m$ is a bounded operator from $\Lsqr \Mscr$ to
    $\Cont[1]{\clo \Mscr_0} \times \cdots \times
    \Cont[1]{\clo \Mscr_\ell}$.  Furthermore, for all $f \in
    \Lsqr \Mscr$ we have $(R_\dec^m f) \restriction \Sigma = 0$.

  \item
    \label{prp.b.5c}
    For any $K \subset M$ compact there exists a constant $C_K$ such
    that
    \begin{align}
      \label{eq:b.18}
      \sup_{x\in K} (\abs{u(x)} + \abs[g] {\de u(x)}) \le C
      \norm[\Lsqr \Mscr] f,
    \end{align}
    for all $f \in \Lsqr \Mscr$ and $u = R^m f$ or $u = R_\dec^m f$.
  \end{myenumerate}
\end{proposition} 

One might say that this result is a routine consequence of elliptic
regularity theory, and, indeed, its proof is very similar to the proof
of \Thm{b.1}. Let again $\map {r_0} M {(0,1]}$ denote the (continuous)
function introduced in \Prp{lower.bd.rh}. Notice that, in \Prp{b.5},
we do not need to control the constants in our estimates as functions
of $r_0(x)$, which simplifies the argument as compared to the proof of
\Thm{b.1}; on the other hand, the presence of a boundary requires the
use of appropriate tools from elliptic regularity theory.
%
\begin{proof} 
  For $K \subset M$ compact, there exists a constant $\rho_0 > 0$ such
  that $r_0(x) \ge \rho_0$ for all $x \in {K}$.  The desired results
  in~\itemref{prp.b.5a} and~\itemref{prp.b.5c} for $R^m$ and $u =
  (H+1)^{-m}f$ are now immediate from \Thm{b.1} (but note that we
  could also use the simpler arguments given below).

  We next consider $R_\dec^m$.  For any $x \in {K}$ there exists an
  open neighborhood ${U}_x$ which admits a system of harmonic
  coordinates $\Phi_x$ that map ${U}_x$ diffeomorphically to a
  Euclidean ball $B(0,r_x)$ of radius $r_x > 0$.  By compactness,
  there exists a finite selection of points $x_1, \ldots, x_J \in {K}$
  such that the union of ${U}_{x_1}, \ldots, {U}_{x_J}$ covers
  ${K}$. We may assume, in addition, that for any $j$ we either have
  ${U}_{x_j} \cap \Sigma = \emptyset$ or $x_j \in \Sigma$.  We write
  $N_{s,j} := \Phi_{x_j}(M_s \cap {U}_{x_j}) \subset B(0,r_j)$, for $s
  = 0, \ldots, \ell$ and $j = 1, \ldots, J$.

  Letting $r_j := r_{x_j}$ and $\map{\Psi_j}{B(0,r_j)} M$ denote
  the inverse of $\Phi_{x_j}$, there exist radii $0 < r_j' < r_j$ with
  the property that the sets $\Psi_j(B(0,r_j'))$ also cover ${K}$. We
  therefore only have to produce the required bounds on Euclidean
  balls $B(0,r_j')$.

  As in eqn.~\eqref{eq:b.15}, we write $u_0 : = f$ and $u_k := (H_\dec
  + 1)^{-k} f$ for $k \in \N$; we also let $u_{k,j} := u_k \circ
  \Psi_j$.  We then have $\norm{u_k} \le \norm{f}$ and $\normsqr
  [\Lsqr {T^*\Mscr}] {\de u_k} \le C$ for all $k$.  Furthermore, the
  equation $(H_\dec + 1)u_{k+1} = u_k$ in $\Lsqr{\Mscr_s}$ implies
  that $u_{k+1,j}$ is a weak solution of the associated divergence
  form elliptic partial differential equation in local coordinates
  (cf.~eqn.~\eqref{eq:b.3}) in the sets $N_{s,j}$, satisfying
  Dirichlet boundary conditions on $B(0,r_j) \cap \Phi_{x_j} (\partial
  M_s \cap {U}_{x_j})$, for $s = 0, \ldots, \ell$.

  For simplicity of notation, let us assume that $r_j = 1$ and $r_j' =
  1/2$.  It will be convenient to introduce the radii $\rho_\nu :=
  \frac 1 2 + 2^{-\nu-1}$, for $\nu \in \N_0$, so that $1/2 <
  \rho_{\nu+1} < \rho_\nu < 1$ for all $\nu \in \N$.  We may assume
  without loss of generality that the domains $N_{s,j,\nu} := N_{s,j}
  \cap B(0,\rho_\nu)$ are Lipschitz so that the Sobolev Embedding
  Theorem in the form of~\cite[Eqn.~(7.30)]{gilbarg-trudinger:83} can
  be applied to each of the $N_{s,j,\nu}$ (with a constant which may
  depend on $\nu$).

  \begin{myenumerate}{\roman}
  \item 
    Now, applying~\cite[Thm.~8.12]{gilbarg-trudinger:83} in a suitable
    domain with $\Contsymb^2$-boundary yields $u_{k,j} \in \Sob[2] 
    {N_{s,j,1}}$ for $k \in \N$; furthermore, $u_{k,j}$ satisfies the
    estimate~(8.25) of~\cite{gilbarg-trudinger:83} with $\Omega :=
    N_{s,j,1}$ and $\phi = 0$. (If $x_i \in \Sigma$ the
    application of~\cite[Thm.~8.12]{gilbarg-trudinger:83} requires
    some care: we first pick a cut-off function $\psi \in
    \Cci{B(0,1)}$ which is $1$ on $B(0, \rho_1)$ and
    plug $\psi u_{k,j}$ into the p.d.e.\ satisfied by $u_{k,j}$.) 
    In addition, we may conclude that $u_{k,j}$ is a \emph{strong} 
    solution of the associated partial differential equation
    in $N_{s,j,1}$; cf.\  eqn.~\eqref{eq:b.16}. 
  \item 
    Since, by the first step, $u_{1,j} \in \Sob[2]{N_{s,j,1} }$, the
    Sobolev Embedding Theorem yields $u_{1,j} \in \Lp[q_1]{N_{s,j,1}}$
    with $q_1 := \sigma(2)$, where $\sigma(\cdot)$ is defined
    in~\eqref{eq:b.7}. Now~\cite[Theorem~9.13 and
    Lemma~9.16]{gilbarg-trudinger:83} imply that $u_{2,j} \in
    \SobW[q_1] 2 {N_{s,j,2}}$ together with an estimate as in
    \Lem{b.2}
     \begin{align*}
       \norm [{\SobW[q_1] 2 {N_{s,j,2}}}] {u_{2,j}}  & \le C
        \bigl( \norm [\Lsqr {N_{s,j,1}}] {(H_\dec + 1) u_{2,j}}
       + \norm [\Lsqr {N_{s,j,1}}] {u_{2,j}} \bigr) \\
       & \le C \bigl( \norm [\Lsqr  {N_{s,j,1}}] {u_{1,j}} 
                 + \norm [\Lsqr  {N_{s,j,1}}]
        {u_{2,j}} \bigr) \le C \norm [\Lsqr \Mscr] f.
    \end{align*}
    In a similar fashion we subsequently obtain an estimate of $ \norm
    [{\SobW[q_2] 2 {N_{s,j,3}}}] {u_{3,j}}$ in terms of
    $\norm [\Lsqr \Mscr] f$ etc.
  \item 
    Iterating the above steps $m$ times we arrive at $q_m > n/2$ with
    an estimate
    \begin{align*}
      \norm [{\SobW[q_m] 2 {N_{s,j,m+1}}}] {u_{m+1,j}}   \le C  \norm [\Lsqr
      \Mscr] f.  
    \end{align*}
    A variant of the Sobolev Embedding Theorem
    (\cite[Thm.~7.26]{gilbarg-trudinger:83}, which uses the Sobolev
    Extension Theorem) yields $u_{m+1,j} \in
    \Cont[1]{\overline{N}_{s,j,m+2}}$ and a corresponding estimate
    estimate of $|u_{m+1,j}(x)|$ and $|\nabla u_{m+1,j}(x)|$ for $x
    \in N_{s,j,m+2}$. The estimate of $|u(x)|$ is now immediate, while
    the estimate of $\abs[g] {\de u(x)}$ follows as in the proof of
    \Thm{b.1}.\qedhere 
  \end{myenumerate}
 \end{proof} 
}
%
\section{Scattering on the line}
\label{app:line}
%

Let us denote by $h_0$ the unique self-adjoint realization of
$-\dd^2/\dd x^2$ in $\HS:=\Lsqr \R$ with domain $\Sob [2] \R$.
Denoting by $\mathcal F$ the Fourier transform in $\Lsqr{\mathbb R}$,
we have
\begin{equation*}
  (\e^{-\im t h_0} \phi)(x)
  = \frac 1 {\sqrt{2\pi}} \int_\R  
         \e^{\im (kx-tk^2)} (\mathcal F\phi)(k) \dd k,
         \quad \phi \in \Schwartz \R.
\end{equation*}
We state below the standard stationary-phase estimate
(cf.~\cite[Corollary to Thm.~XI.14]{reed-simon-3}): 
Let $\phi \in \Lsqr \R$ satisfy $\mathcal F \phi \in \Cci \R$ with
$\supp \mathcal F \phi = K$.  Then, for any open set
$U$ such that $ K \subset U$ and for any $ m \in \N$, there
is a constant $C_m$ such that
\begin{equation}
  \label{eq:c.2}
  \bigabs{\e^{-\im t h_0} \phi(x)}
  \le C_m  \bigl(1+|x|+|t|\bigr)^{-m},
    \quad \text{for all $x \in \R$ such that $x/(2t) \notin U$.}
\end{equation} 
Similar estimates hold for the $x$-derivatives of $\e^{-\im t h_0}
\phi$.  Let $h_{\pm,0}$ be the self-adjoint re\-a\-li\-zations of
$-\dd^2/\dd x^2$ in $\Lsqr{\R_\pm}$ with Dirichlet boundary condition
at zero, i.e., with domain $\dom h_{\pm,0}= \Sobn{\R_\pm} \cap \Sob
[2] {\R_\pm}$.  We denote by $\iota_\pm$ the natural embeddings of
$\Lsqr{\R_\pm}$ into $\Lsqr \R$ (extension by zero) and by
\begin{equation*}
  \mathcal H_\pm
  := \mathcal F^{-1} \bigl(\iota_\pm \Lsqr{\R_\pm}\bigr)
  =\bigset{u \in \HS} { \hat u \restr {\R_\mp}= 0}.
\end{equation*}

We define the wave operators
\begin{equation*}
  \Theta^+_{\pm} := \slimpm \e^{\im th_0} \iota_+ \e^{-\im t h_{+,0}}
  \quad\text{and}\quad
  \Theta^-_\pm := \slimpm \e^{\im th_0} \iota_- \e^{-\im t h_{-,0}},
\end{equation*}
provided that the strong limits exist.

\begin{lemma}
  \label{lem:c.1}
  The wave operators $\Theta^+_{\pm}$ and $\Theta^-_{\pm}$ exist, are
  isometric and we have
  \begin{equation*}
    \ran \Theta^+_{\pm}= \mathcal H_{\pm}
    \quad\text{and}\quad
    \ran \Theta^-_{\pm}= \mathcal H_{\mp}.
  \end{equation*}
\end{lemma}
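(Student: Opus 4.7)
My plan is to reduce the half-line dynamics to the standard Fourier/stationary-phase analysis on the line via odd extension. For $g \in \Lsqr{\R_+}$, let $g_o \in \HS$ denote its odd extension across $0$; since odd functions automatically satisfy the Dirichlet condition, odd extension intertwines $h_{+,0}$ with $h_0$, so $\e^{-\im t h_0} g_o = (\e^{-\im t h_{+,0}} g)_o$. Writing $P_+$ for multiplication by $\1_{\R_+}$, I therefore get
\begin{equation*}
  \Theta^+_\pm g = \slimpm \e^{\im t h_0} P_+ \e^{-\im t h_0} g_o,
\end{equation*}
and analogously $\Theta^-_\pm g = \slimpm \e^{\im t h_0} P_- \e^{-\im t h_0} g_o$ for $g \in \Lsqr{\R_-}$, using odd extension from $\R_-$ and $P_- := I - P_+$.

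Next I decompose $g_o$ into positive/negative momentum parts. Since $g_o$ is odd, so is $\widehat{g_o}$, and there is a unique $u \in \mathcal H_+$ with $\hat u(k) = \widehat{g_o}(k) \cdot \1_{\R_+}(k)$. With $R\phi(x) := \phi(-x)$ denoting spatial reflection, one checks that $g_o = u - Ru$, $Ru \in \mathcal H_-$, $\|u\|^2 = \|Ru\|^2 = \|g\|_{\Lsqr{\R_+}}^2$, and that $g \mapsto u$ is a bijective isometry $\Lsqr{\R_+} \to \mathcal H_+$. Since $R$ commutes with $h_0$,
\begin{equation*}
  P_+ \e^{-\im t h_0} g_o = P_+ \e^{-\im t h_0} u - P_+ \e^{-\im t h_0} Ru.
\end{equation*}

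The key technical step is the localization claim: for $u \in \mathcal H_+$,
\begin{equation*}
  \|(I - P_+) \e^{-\im t h_0} u\| \to 0 \text{ as } t \to +\infty,
  \qquad \|P_+ \e^{-\im t h_0} u\| \to 0 \text{ as } t \to -\infty,
\end{equation*}
together with the $R$-symmetric statement for $\mathcal H_-$. I would first establish this for $u$ with $\hat u \in \Cci{(0,\infty)}$ via the stationary-phase estimate~\eqref{eq:c.2}: for $K := \supp \hat u \subset (0,\infty)$ and a small open neighborhood $U \supset K$ in $(0,\infty)$, the set $\{x \in \R : x/(2t) \notin U\}$ contains all of $\R_-$ for $t$ large positive and all of $\R_+$ for $t$ large negative, so~\eqref{eq:c.2} yields $|\e^{-\im t h_0} u(x)| \le C_m (1+|x|+|t|)^{-m}$ there, which squared and integrated tends to $0$. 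Density of such $u$ in $\mathcal H_+$ combined with the uniform bound $\|P_+ \e^{-\im t h_0}\| \le 1$ extends the claim to all of $\mathcal H_+$.

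Combining: as $t \to +\infty$, the first term is asymptotically $\e^{-\im t h_0} u$ while the second vanishes (since $Ru \in \mathcal H_-$), so applying $\e^{\im t h_0}$ yields $\Theta^+_+ g = u \in \mathcal H_+$. As $t \to -\infty$, the first term vanishes and the second survives, giving $\Theta^+_- g = -Ru \in \mathcal H_-$. Isometry follows from $\|u\| = \|g\|_{\Lsqr{\R_+}}$, and the identification of the ranges from the surjectivity of $g \mapsto u$ onto $\mathcal H_+$ noted above. The mirror argument for $\Theta^-_\pm$ is identical. The only substantive ingredient is the localization claim, but since~\eqref{eq:c.2} has been set up precisely for this purpose, I do not anticipate any serious obstacle.
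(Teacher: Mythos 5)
Your proof is correct, and it takes a genuinely different organizational route from the paper's, although the two key ingredients — the odd-extension trick identifying $\e^{-\im t h_{+,0}}$ with the free propagator restricted to $\R_+$, and the stationary-phase estimate~\eqref{eq:c.2} — are the same. The paper proceeds in two separate steps: first it proves \emph{existence} by replacing the sharp cutoff $\iota_+$ with a smooth cutoff $j$ (justified by Rellich local compactness) and then running Cook's/Duhamel's method, relying on~\eqref{eq:c.2} only to show an integrability estimate for the commutator $[h_0,j]$; then, in a second step, it identifies the ranges by proving $\phi_\pm = \Theta^+_\pm \phi_{\pm,0}$ for $\phi_\pm \in \hat{\mathcal D}_\pm$ together with the orthogonality $\hat{\mathcal D}_\pm \subset (\ran\Theta^+_\mp)^\perp$, and then uses $\Lsqr\R = \mathcal H_-\oplus\mathcal H_+$. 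You collapse these two steps into one: the decomposition $g_o = u - Ru$ and the localization claim let you compute the strong limit directly, which simultaneously yields existence, the explicit identity $\Theta^+_+ g = u$ (resp.\ $\Theta^+_- g = -Ru$), isometry, and surjectivity onto $\mathcal H_\pm$. What this buys you is that Cook's method and the Rellich detour (which the paper needs precisely because $\iota_+$ is not smooth and would produce distributional terms under the commutator) become unnecessary: the localization estimate is strong enough to conclude directly, and the same computation that the paper uses only for the range identification now does all the work.
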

\begin{proof}
  We only give the proof for $\Theta^+_\pm$; the case of
  $\Theta^-_\pm$ is similar. We first prove the existence of the
  $\Theta^+_\pm$.  Let $j \in \Cont[2] \R$ satisfy $0 \le j(x) \leq
  1$, $j(x)=0$ for $x \leq 0$ and $j(x)=1$ for $x \ge 2$.  We also
  denote by $j$ the bounded operator from $\Lsqr{\R_+}$ into $\mathcal
  H $ given by multiplication by $j$. {\colour[B15] As a first step,
    we replace $\iota_+$ in the definition of $\Theta^+_\pm$ by $j$,
    using a well-known and simple argument; cf.~\cite[p.~35 and
    problem 18]{reed-simon-3}: The function $1 -j$, defined on $\R_+$,
    is bounded and has compact support; hence the Rellich local
    compactness theorem implies that $(1-j) (h_{+,0} + 1)^{-1}$ is
    compact.  Since $h_{+,0}$ is absolutely continuous, we have
    $\slimpm (1- j) (h_{+,0} + 1)^{-1} \e^{-\im t h_{+,0}}=0$ and
    therefore
\begin{equation*} 
    (1- j)  \e^{-\im t h_{+,0}}  \varphi = 
     (1- j) (h_{+,0} + 1)^{-1} \e^{-\im t h_{+,0}}
   (h_{+,0} + 1) \varphi \to 0 
   \end{equation*}
in norm, for all $\varphi \in \Cci {\R_+}$.  Therefore, it is enough to 
prove the existence of the wave operators   
 \begin{equation*}
    M_\pm:= \slimpm \e^{\im th_0} j \e^{-\im t h_{+,0}}. 
  \end{equation*}
 }
 For any $\phi \in \mathcal S$, we let $\phi_\odd$ denote the odd part of 
 $\phi$, given by $\phi_\odd (x) = \phi(x) - \phi(-x)$, and define 
  \begin{equation*}
    \mathcal D_0 := 
    \bigset {\phi_\odd} {\phi \in \mathcal S(\R), \mathcal F \phi \in \Cci{\R
        \setminus \{0\}}}.
  \end{equation*}
  As $\mathcal D_0 \restriction \R_+$ is dense in $\Lsqr{\R_+}$ it is enough
  to prove the existence of $M_\pm (\phi_\odd \restriction \R_\pm)$ 
  for $\phi_\odd \in \mathcal D_0$. Defining
  \begin{equation*}
    \phi_{\odd, t}(x)
    := \e^{-\im th_0} \phi_\odd(x)
    = \bigl(\e^{-\im t h_0} \phi \bigr)(x)
    -  \bigl(\e^{-\im t h_0} \phi \bigr)(-x),
    \quad \phi_\odd \in \mathcal D_0, 
  \end{equation*}
  we have $ \im \frac \partial {\partial t} \phi_{\odd, t} = - \frac
  {\partial^2}{\partial x^2} \phi_{\odd, t}$ and $\phi_{\odd, t}(0)=0$
  which implies
  \begin{equation*}
    \phi_{\odd, t}= \e^{-\im t h_{0,+}} \phi_\odd, \qquad x > 0. 
  \end{equation*}
  By Duhamel's formula, to prove the existence of $M_+ \phi_\odd$ it
  is enough to show that 
  \begin{multline*}
    \int_0^\infty \bignorm[\Lsqr \R] 
          {\e^{\im th_0} (h_0 j-j h_{0,+}) \e^{-\im t h_{0,+}} 
          \phi_\odd} \dd t \\
   = \int_0^\infty \Bignorm[\Lsqr \R]
     {\Bigl( j''+ 2 j' \frac \dd {\dd x} \Bigr) 
      \bigl( (\e^{-\im t h_0} \phi)(x)- (\e^{-\im t h_0} \phi)(-x) \bigr)}
                \dd t  < \infty, 
  \end{multline*}
  but this is immediate from~\eqref{eq:c.2}.  The case of $M_-
  \phi_\odd$ follows in the same way. $\Theta_\pm$ are
  isometric since $\e^{\im t h_0}$ and $\e^{-\im th_{0,+}}$
  are unitary.

  Set
  \begin{equation*}
    \hat {\mathcal D}_\pm
    := \bigset {\phi_\pm \in \Lsqr \R} {\mathcal F \phi_\pm \in \Cci{\R_\pm}} 
  \end{equation*}
  and
  \begin{equation*}
    \phi_{\pm, t}(x)
    := \bigl(\e^{-\im t h_0} \phi_\pm \bigr)(x)
    -  \bigl(\e^{-\im t h_0} \phi_\pm \bigr)(-x), 
    \quad \phi_\pm \in \hat{\mathcal D}_\pm.
  \end{equation*}
  Then, as above,
  \begin{equation*}
    \phi_{\pm, t} = \e^{-\im t h_{0,+}} \phi_{\pm,0},
    \quad \text{where} \quad 
    \phi_{\pm,0}(x)= \phi_\pm(x)-\phi_\pm(-x), \quad
    \phi_\pm \in \hat{\mathcal D}_\pm.
  \end{equation*}
  It follows from~\eqref{eq:c.2} that 
    $\lim_{t \rightarrow \pm \infty} 
       \bignorm[\Lsqr \R] {\e^{-\im th_0} \phi_\pm 
             - \iota_+ \e^{-\im t h_{0,+}} \phi_{\pm,0}}=0$
 and we see that 
  \begin{equation}
   \label{eq:c.19}
    \phi_\pm = \Theta^+_\pm \phi_{\pm, 0},  
    \quad   \forall \phi_\pm \in \hat{\mathcal D}_\pm.
  \end{equation}
  Then, $\hat{\mathcal D}_\pm \subset {\rm Ran}\Theta^+_\pm$, and as
  the $\Theta^+_\pm$ are isometric,
  \begin{equation}
    \label{eq:c.20}
    \mathcal H_\pm= \clo{\hat{\mathcal D}_\pm} 
    \subset  \ran  \Theta^+_\pm.
  \end{equation}
  We prove in the same way, using~\eqref{eq:c.2}, that 
   $ \hat{\mathcal D}_\pm
    \subset \bigl(\ran \Theta^+_\mp \bigr)^\perp$,
  and then that 
  \begin{equation}
    \label{eq:c.22}
    \mathcal H_\pm
    \subset \bigl(\ran \Theta^+_\mp \bigr)^\perp.
  \end{equation}
  Finally, as $\Lsqr \R= \mathcal H_- \oplus  \mathcal H_+$ and 
  $\Lsqr \R= \ran \Theta^+_\pm \oplus \bigl(\ran \Theta^+_\pm \bigr)^\perp$,
  equations~\eqref{eq:c.20} and~\eqref{eq:c.22} imply that 
  $ \mathcal H_\pm= \ran \Theta^+_\pm$.
\end{proof}

Let $w$ be a real-valued, bounded function defined on $\mathbb R$
satisfying
\begin{equation*}
  \abs{w(x)} \le  C(1+|x|)^{-1-\beta}
\end{equation*}
for some constants $C \ge 0$, $\beta > 0$, and let
\begin{equation*}
  h:= h_0+w.
\end{equation*}
The operator $h$ is self-adjoint in $\mathcal H$ with domain $\dom h=
\Sob [2] \R$.  We consider the wave operators
\begin{equation*}
  \Omega^+_\pm:= \slimpm \e^{\im th} \iota_+ \e^{-\im t h_0},
  \quad\text{and}\quad
  \Omega^-_\pm:= \slimpm \e^{\im th} \iota_- \e^{-\im t h_0}.
\end{equation*}
\begin{theorem}
  \label{thm:c.2}
  The wave operators $\Omega^+_\pm, \Omega^-_\pm$ exist and are
  partially isometric. Moreover, their initial subspaces,
  respectively, $\HS^+_{\ini, \pm}$ and $\HS^-_{\ini,
    \pm}$, are given by
  \begin{equation}
    \label{eq:c.29} 
    \HS^+_{\ini, \pm} = \HS_\pm
    \quad\text{and}\quad
    \HS^-_{\ini, \pm} = \HS_\mp.
  \end{equation}
  Furthermore,
  \begin{equation*}
    \ran \Omega^+_+ \ne \ran \Omega^+_-
    \quad\text{and}\quad
    \ran \Omega^-_+ \ne \ran \Omega^-_-.
  \end{equation*}
\end{theorem}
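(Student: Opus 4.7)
The plan is to split the proof into three steps: (i) existence and identification of each $\Omega^\pm_\pm$ in terms of the standard wave operators $W_\pm(h,h_0) := \slimpm \e^{\im t h} \e^{-\im t h_0}$; (ii) the resulting partial-isometry property together with the explicit initial subspaces; and (iii) the non-equality of ranges via a high-energy argument for the scattering operator. Throughout, the symbol $\iota_\pm$ in the definitions of $\Omega^\pm_\pm$ has to be read consistently as the projection $\iota_\pm\iota_\pm^*$ on $\Lsqr \R$, i.e., as multiplication by $\chi_{\R_\pm}$, since $\e^{-\im t h_0}$ acts on $\Lsqr \R$.

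The analytic input for (i)--(ii) is the stationary-phase estimate~\eqref{eq:c.2} already used in \Lem{c.1}: for $\phi_\pm \in \mathcal H_\pm$ the free evolution $\e^{-\im t h_0} \phi_\pm$ is essentially supported on $\R_\pm$ as $t \to +\infty$ and on $\R_\mp$ as $t \to -\infty$. Concretely, on the dense subsets $\hat{\mathcal D}_\pm$ from the proof of \Lem{c.1},~\eqref{eq:c.2} yields
\begin{equation*}
  \bignorm[\Lsqr \R]{\chi_{\R_\mp} \e^{-\im t h_0}\phi_\pm} \to 0 \text{ as } t \to +\infty,
  \qquad
  \bignorm[\Lsqr \R]{\chi_{\R_\pm} \e^{-\im t h_0}\phi_\pm} \to 0 \text{ as } t \to -\infty.
\end{equation*}
Existence of $W_\pm(h,h_0)$ itself follows from Cook's method applied to the short-range bound $|w(x)|\le C(1+|x|)^{-1-\beta}$. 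Combining these ingredients and extending by continuity from the dense subsets, one obtains, for $\phi_\pm\in \mathcal H_\pm$,
\begin{equation*}
  \Omega^+_+ \phi_+ = W_+(h,h_0)\phi_+, \quad \Omega^+_+ \phi_- = 0, \quad \Omega^+_- \phi_- = W_-(h,h_0)\phi_-, \quad \Omega^+_- \phi_+ = 0,
\end{equation*}
together with the analogous identities $\Omega^-_+\phi_- = W_+(h,h_0)\phi_-$, $\Omega^-_-\phi_+ = W_-(h,h_0)\phi_+$, and vanishing on the complementary subspaces. Each $\Omega^\pm_\pm$ is therefore the restriction of the isometry $W_\pm(h,h_0)$ to a closed subspace of $\Lsqr \R$, hence a partial isometry with initial space exactly $\mathcal H^+_{\ini,\pm}=\mathcal H_\pm$ and $\mathcal H^-_{\ini,\pm}=\mathcal H_\mp$ as claimed.

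From this identification, $\ran \Omega^+_+ = W_+(h,h_0)\mathcal H_+$ and $\ran \Omega^+_- = W_-(h,h_0)\mathcal H_-$, so that equality of the two ranges is equivalent to $S\mathcal H_- = \mathcal H_+$ for the scattering operator $S := W_+(h,h_0)^*W_-(h,h_0)$. Since $S$ commutes with $h_0$, in the spectral representation of $h_0$ it decomposes as a direct integral of $2\times 2$ unitary matrices on the two-dimensional fibres corresponding to momenta $\pm k$; the condition $S\mathcal H_-=\mathcal H_+$ is equivalent to each such fibre matrix being antidiagonal, i.e., to the transmission coefficient of $w$ vanishing almost everywhere. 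This is prevented by classical 1D short-range scattering, where the transmission coefficient tends to $1$ at high momentum. Concretely, the high-velocity states $\phi^+_v$ used later in~\eqref{eq:c.54}--\eqref{eq:c.55} satisfy $S\phi^+_{v,0}=\phi^+_{v,0}+\mathcal O(1/v)$, so for $v$ sufficiently large $W_+\phi^+_v\in \ran \Omega^+_+$ cannot lie in $\ran \Omega^+_-$. Replacing $v$ by $-v$ gives $\ran\Omega^-_+ \ne \ran\Omega^-_-$ by the same argument, since both non-equalities are equivalent to the single condition $S\mathcal H_- \ne \mathcal H_+$ (the unitarity of $S$ making the two versions equivalent).

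The main obstacle is the last step: the non-triviality of the transmission coefficient at high momentum. Everything preceding it is a routine combination of~\eqref{eq:c.2} with Cook's method for the short-range pair $(h,h_0)$. The key ingredient for the non-equality is the explicit high-velocity asymptotics hinted at in the "non-trivial scattering" subsection of \Sec{example}, which must be established by a direct Cook-type expansion of $\Lambda^\pm_-\phi^+_{v,0}$ in inverse powers of $v$; this reduces the range question to the elementary statement that the free evolution of a highly-right-moving wave packet is only slightly perturbed by a short-range $w$.
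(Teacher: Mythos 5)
Your proof of existence and of the identification~\eqref{eq:c.29} follows essentially the same lines as the paper's, with a slight reorganization that is arguably cleaner: you introduce the full-line wave operators $\Omega_\pm := \slimpm \e^{\im th}\e^{-\im th_0}$ (which the paper brings in only midway through the proof, for the range inequality) up front, and identify each $\Omega^\pm_\pm$ as the restriction of $\Omega_\pm$ to $\mathcal H_\pm$ resp.\ $\mathcal H_\mp$ by combining the stationary-phase estimate~\eqref{eq:c.2} with Cook's method. The paper instead proves existence of $\Omega^\pm_\pm$ directly via a smooth cut-off plus Duhamel and then computes $\norm{\Omega^+_\pm\phi}$ on the dense sets $\hat{\mathcal D}_\pm$ to read off the initial subspaces. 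Both routes rest on the same analytic inputs; yours makes the partial-isometry structure more transparent. Your observation that $\iota_\pm$ must be read as $\chi_{\R_\pm}$ on $\Lsqr\R$ is exactly the interpretation the paper implicitly uses.

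For the range inequality the core ingredient is also the same in both arguments, namely the high-velocity estimate $\Omega_\pm\phi^\pm_v = \phi^\pm_v + \mathcal O(1/v)$ coming from Corollary~2.3 of~\cite{enss-weder:95}; your reformulation via $S\mathcal H_-\ne\mathcal H_+$ and the non-vanishing of the transmission coefficient is a conceptual repackaging of the paper's direct construction of a state $\psi^-_v = \Omega_-\phi^-_v \in \ran\Omega^+_-\setminus\ran\Omega^+_+$. However, as written your final step is circular: \eqref{eq:c.54}--\eqref{eq:c.55} are established only in \Rem{c.4}, which depends (via~\eqref{eq:c.46} and \Lem{c.3}) on \Thm{c.2} itself, so they cannot be invoked inside this proof. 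The fix is to cite Corollary~2.3 of~\cite{enss-weder:95} directly, as the paper does. A further small slip: $S=\Omega_+^*\Omega_-$ acts on $\Lsqr\R$, so ``$S\phi^+_{v,0}$'' only makes sense after identifying $\phi^+_{v,0}$ with its odd extension; the cleaner statement is $S\phi^\pm_v = \phi^\pm_v + \mathcal O(1/v)$ for $\phi^\pm_v \in \mathcal H_\pm$. With these repairs, your deduction---and your observation that unitarity of $S$ makes $S\mathcal H_-\ne\mathcal H_+$ equivalent to $S\mathcal H_+\ne\mathcal H_-$, so that one estimate covers both non-equalities---is correct.
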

\begin{proof}
  The existence of $\Omega^+_\pm \phi$ and $\Omega^-_\pm\phi$ for $
  \mathcal F \phi \in \Cci{\R \setminus \{0\}}$ follows upon replacing
  $\iota_\pm$ by multiplication with a \emph{smooth} cut-off function
  together with Duhamel's formula and equation~\eqref{eq:c.2}.  We
  omit the details.  This proves the existence of the wave operators
  in a dense set, and by continuity in $\HS$.

  By~\eqref{eq:c.2}, we have for $\phi_+ \in \hat{\mathcal D}_+$,
 \begin{equation*}
   \bignorm{\iota_+ \e^{-\im t h_0} \phi_+} \to
    \begin{cases}
      \norm {\phi_+}, & \text{as $t \to \infty$},\\
      0, & \text{as $t \to -\infty$},
    \end{cases}
  \end{equation*}
  and similarly for $\phi_-  \in \hat{\mathcal D}_-$. It follows that
  \begin{equation}
    \label{eq:c.32}
    \norm{\Omega_\pm^+ \phi} =
    \begin{cases}
      \norm \phi , & \text{if $\phi \in  \hat{\mathcal D}_\pm  $},\\
      0, & \text{if $ \phi \in \hat{\mathcal D}_\mp$}.
    \end{cases}
  \end{equation}
  Since the subspaces $\hat{\mathcal D}_\pm$ are dense in $\mathcal
  H_\pm$, the first equality in~\eqref{eq:c.29} follows
  from~\eqref{eq:c.32}.  We prove the second equality
  in~\eqref{eq:c.29} in the same way.

  We now show that $\ran \Omega^+_+ \ne \ran \Omega^+_-$.  The
  intuition behind our proof is as follows.  Consider an incoming
  state with large negative mean velocity and assume that this state
  is localized near $+\infty$ for large negative times.  Such a state
  will be in the range of $\Omega^+_-$.  As time increases it will
  propagate to the left, but as it has large velocity it will ``go
  across'' the potential $w$ and will travel to $-\infty$ as time goes
  to infinity (with only a small reflected part of the state
  travelling to $+\infty$).  %
  Since this state has a non-trivial component localized near
  $-\infty$ for large positive times, it cannot be in the range of
  $\Omega^+_+$.

  Let us consider the following asymptotic state with high negative
  velocity,
  \begin{equation*}
    \phi^-_v:= \e^{-\im vx} \phi_0, 
    \quad\text{with $\mathcal F\phi_0 \in \Cci \R.$}
  \end{equation*}
  As
  \begin{equation*}
    \bigl(\mathcal F \phi^-_v \bigr)(k)
    = (\mathcal F\phi_0)(k+v),
  \end{equation*}      
  this state will have large negative velocity if $v>0$ is taken large
  enough.

  Let us introduce the wave operators 
  \begin{equation*}
    \Omega_\pm := \slimpm  \e^{\im t h} \e^{-\im t h_0}.
  \end{equation*}
  The existence of these wave operators follows from~\eqref{eq:c.2}
  and Duhamel's formula;  they are also complete (cf.~\cite{enss:78,
    simon:79c}).  Defining $\psi^-_v:= \Omega_-\, \phi^-_v$, we have 
  $\slimm \iota_- \e^{-\im t h_0} \phi^-_v= 0$,
  by~\eqref{eq:c.2},   and then
  \begin{equation}
    \label{eq:c.38}
    \psi^-_v := \slimm  \e^{\im th} \e^{-\im t h_0} \phi^-_v
    =           \slimm \e^{\im th} \iota_+ \e^{-\im t h_0} \phi^-_v
    = \Omega^+_ -\phi^-_v.
  \end{equation}
  This proves that $\psi^-_v \in \ran \Omega^+_-$.

  It follows from Corollary~2.3 of~\cite{enss-weder:95} and the
  intertwining relation $\e^{-\im th} \Omega_-=\Omega_- \e^{-\im t
    h_0}$ that
  \begin{equation*}
    \bignorm[\Lsqr \R] {\e^{-\im th} \psi^-_v 
                   - \e^{-\im t h_0} \phi^-_v}
    = \mathcal O(1/v),
  \end{equation*}
  uniformly in $t \in \R$.  Hence, by~\eqref{eq:c.2},
  \begin{align*}
    \liminf_{t \rightarrow \infty}
    \bignorm[\Lsqr \R] {\iota_- \e^{-\im th} \psi^-_v}
    & \ge \liminf_{t \rightarrow \infty}
      \bignorm[\Lsqr \R] {\iota_- \e^{-\im th_0} \phi^-_v}
      - \mathcal O(1/v)\\
    & = \bignorm[\Lsqr \R] {\phi^-_v} - \mathcal O(1/v)
    \ge \frac12 \bignorm[\Lsqr \R] {\phi^-_v} >0,
  \end{align*}
   for $v$ large enough, which proves that $\psi^-_v \notin \ran
  \Omega^+_+$.

  The asymptotic state $\phi^+_v$, defined as  
 $\phi^+_v(x):= \e^{\im vx} \phi_0$, with $\mathcal F\phi_0 \in \Cci \R$,
  has large positive velocity if $v>0$ is large enough. We define
  $\psi^+_v:= \Omega_+\, \phi^+_v$  and prove as above that  
  $\psi^+_v = \Omega^-_- \phi^+_v$, 
  so that $\psi^+_v \notin \ran \Omega^-_+$, which proves that
   $\ran \Omega^-_+ \ne \ran \Omega^-_-$.
\end{proof}
 We need yet another set of wave operators. 
Let us denote by $\Lambda^+_\pm$ and $\Lambda^-_\pm$ the wave 
operators
\begin{equation}
  \Lambda^+_\pm:= \slimpm \e^{\im th} \iota_+ \e^{-\im t h_{+,0}}
  \quad\text{and}\quad
  \Lambda^-_\pm:= \slimpm \e^{\im th} \iota_- \e^{-\im t h_{-,0}}.
\end{equation}
\begin{lemma}
  \label{lem:c.3}
  The wave operators $\Lambda^+_\pm$ and $\Lambda^-_\pm$ exist, are
  isometric and, furthermore,
  \begin{equation}
    \ran \Lambda^+_+ \ne \ran \Lambda^+_-,
    \quad\text{and}\quad
    \ran \Lambda^-_+ \ne \ran \Lambda^-_-.
  \end{equation}
\end{lemma}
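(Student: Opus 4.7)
The plan is to reduce \Lem{c.3} to results already established by writing the operators $\Lambda^\pm_\pm$ as compositions of two families of wave operators: the two-sided wave operators $\Omega_\pm := \slimpm \e^{\im th}\e^{-\im th_0}$ (whose existence and isometry were used in the proof of \Thm{c.2}) and the half-line wave operators $\Theta^\pm_\pm$ supplied by \Lem{c.1}. The target identities are
\begin{equation*}
  \Lambda^+_\pm = \Omega_\pm \circ \Theta^+_\pm, \qquad
  \Lambda^-_\pm = \Omega_\pm \circ \Theta^-_\pm,
\end{equation*}
which should follow from a standard chain-rule argument: inserting $\e^{-\im th_0}\e^{\im th_0}$ between $\e^{\im th}$ and $\iota_\pm$, the right-hand factor $\e^{\im th_0}\iota_\pm\e^{-\im th_{\pm,0}}\psi$ converges to $\Theta^\pm_\pm\psi$, which lies in $\HS = \HS_\ac(h_0)$, while the left-hand factor $\e^{\im th}\e^{-\im th_0}$ converges strongly to $\Omega_\pm$ and is uniformly bounded.

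Once this factorization is available, existence and isometry of $\Lambda^\pm_\pm$ drop out immediately: $\Omega_\pm$ is isometric on all of $\HS$ (the initial space for the free dynamics), and $\Theta^\pm_\pm$ are isometric by \Lem{c.1}. Combining with the range identities $\ran \Theta^+_\pm = \HS_\pm$ and $\ran \Theta^-_\pm = \HS_\mp$ from \Lem{c.1} then yields
\begin{equation*}
  \ran \Lambda^+_\pm = \Omega_\pm(\HS_\pm), \qquad
  \ran \Lambda^-_\pm = \Omega_\pm(\HS_\mp).
\end{equation*}

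The remaining step is to identify these ranges with those of the operators $\Omega^\pm_\pm$ appearing in \Thm{c.2}. By the stationary-phase bound~\eqref{eq:c.2}, for $\phi \in \HS_+$ the free evolution $\e^{-\im th_0}\phi$ is concentrated on $\R_+$ as $t \to +\infty$ and on $\R_-$ as $t \to -\infty$, while for $\phi \in \HS_-$ the roles are reversed. This forces $\Omega^+_\pm \phi = \Omega_\pm \phi$ for $\phi \in \HS_\pm$ and $\Omega^+_\pm \phi = 0$ for $\phi \in \HS_\mp$, whence $\ran \Omega^+_\pm = \Omega_\pm(\HS_\pm) = \ran \Lambda^+_\pm$; the same computation gives $\ran \Omega^-_\pm = \Omega_\pm(\HS_\mp) = \ran \Lambda^-_\pm$. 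The inequalities $\ran \Omega^+_+ \ne \ran \Omega^+_-$ and $\ran \Omega^-_+ \ne \ran \Omega^-_-$ established in \Thm{c.2} therefore transfer directly to the $\Lambda$-operators. The only step requiring some care is the chain-rule justification in the first paragraph, which is routine but must be carried out by hand, since $\iota_\pm$ is not unitary and so the composition formula for wave operators does not apply in its most standard form.
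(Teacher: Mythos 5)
Your proposal is correct and follows essentially the same route as the paper: the paper likewise reduces the lemma to the chain rule together with \Lem{c.1} and \Thm{c.2}, via $\Lambda^\pm_\pm = \Omega^\pm_\pm \circ \Theta^\pm_\pm$. The only cosmetic difference is that you factor through the full-line wave operators $\Omega_\pm$ rather than the truncated $\Omega^\pm_\pm$, which then forces you to separately verify $\ran\Lambda^+_\pm = \ran\Omega^+_\pm$ — but that verification is exactly the initial-space computation already carried out inside the paper's proof of \Thm{c.2}, so no new ideas are needed.
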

\begin{proof}
  By the chain rule
  \begin{equation}
     \label{eq:c.46}
    \Lambda^+_\pm = \Omega^+_\pm  \circ  \Theta^+_\pm
    \quad\text{and}\quad
    \Lambda^-_\pm= \Omega^-_\pm  \circ  \Theta^-_\pm.
  \end{equation}
  Then, the lemma follows by \Lem{c.1} and \Thm{c.2}.
\end{proof}

\begin{remark}
  \label{rem:c.4}
  The above proofs yield explicit examples of asymptotic states that
  belong to $\ran \Lambda^\pm_-$ and do not belong to $\ran
  \Lambda^\pm_+$.  In fact, it follows from~\eqref{eq:c.19} that
  \begin{equation}
     \label{eq:c.47}
    \phi^\pm_v = \Theta^+_\pm \phi^\pm_{v,0}, 
    \quad\text{where}\quad
    \phi^\pm_{v,0}:= \phi^\pm_v(x)- \phi^\pm_v(-x).
  \end{equation}
  We prove in the same way that
  \begin{equation*}
    \phi^\pm_v= \Theta^-_\mp \phi^\pm_{v,0}.
  \end{equation*}
  By~\eqref{eq:c.38}, \eqref{eq:c.46} and~\eqref{eq:c.47} we have
  \begin{equation}
    \psi^-_v = \Omega^+_- \Theta^+_- \phi^-_{v,0}
    = \Lambda^+_- \phi^-_{v,0}.
  \end{equation}
  Moreover as $\psi^-_v \notin \ran \Omega^+_+$ it follows
  from~\eqref{eq:c.46} that $\psi^-_v \notin \ran \Lambda^+_+$.  We
  prove in the same way 
 that
  \begin{equation}
    \label{eq:c.50}
    \psi^+_v = \Omega^-_- \Theta^-_- \phi^+_{v,0}
    = \Lambda^-_- \phi^+_{v,0}.
  \end{equation}
  Moreover as $\psi^+_v \notin \ran \Omega^-_+$ it follows from
  \eqref{eq:c.46} that $\psi^{+}_{v} \notin \ran \Lambda^{-}_+$.

  By the definition of $\psi^+_v = \Omega_+\, \varphi^+_v$,
  equation~\eqref{eq:c.50}, and Corollary~2.3 of~\cite{enss-weder:95},
  we have
  \begin{equation}
    \label{eq:c.51}
    \Lambda^-_- \phi^ +_{v,0}
    =\Omega_ + \phi^+_v = \phi^+_v + \mathcal O\Bigl(\frac 1v \Bigr).
  \end{equation}
\end{remark}

As in the proof of~\eqref{eq:c.38} we prove that 
\begin{equation}
  \label{eq:c.52}
  \Omega^+_+ \phi^+_v= \Omega_+ \phi^+_v.
\end{equation}
Then, as above, it follows from \eqref{eq:c.46}, \eqref{eq:c.47},
\eqref{eq:c.52} and Corollary 2.3 of~\cite{enss-weder:95} that
\begin{equation}
  \label{eq:c.53}
  \Lambda^+_+ \phi^+_{v,0} = \phi^+_v+ \mathcal O \Bigl(\frac 1v \Bigr).
\end{equation}
By~\eqref{eq:c.51} and~\eqref{eq:c.53} and since $\Lambda^+_+$ is
isometric, we have
\begin{equation}
  \label{eq:c.54}
  \bigl(\Lambda^+_+\bigr)^* \Lambda^-_- \phi^+_{v,0}
  = \phi^+_{v,0}+ \mathcal O \Bigl(\frac 1v \Bigr).
\end{equation}
 Then 
\begin{equation}
  \label{eq:c.55}
 \bigl(\Lambda^+_+\bigr)^* \Lambda^-_- \phi^+_{v,0} \ne 0, \qquad v \gg 1,
 \end{equation}
proving that the $-$ and the $+$ channels are open to each other.
Note that $\phi^+_{v,0}$ is a high-velocity asymptotic state, coming
in from the left for large negative times; as time increases, it
travels to the right and is transmitted through the potential---with
a reflected part that is very small for $v$ large---and it goes to
$+\infty$ as time tends to $+\infty$.

%
\section{Lower bounds for the {\colour[A3] injectivity radius}}
\label{app:inj.rad}
%

The main purpose of this final appendix is to acquaint the reader with
a comparison result of M\"uller and
Salomonsen~\cite[Prop.~2.1]{mueller-salomonsen:07} for the injectivity
radius of two complete manifolds with bounded curvature.  We use their
result in a two-step process to deal with warped products (and
perturbed warped products) where the curvature does not obey a global
curvature bound, as happens for shrinking ends of dimension $n \ge 3$.
In some sense, we attempt to produce local versions
of~\cite[Prop.~2.1]{mueller-salomonsen:07}; note that the proof given
in their paper uses some non-local arguments and completeness is an
issue.  The basic idea is to extend finite sections of our manifolds
to complete manifolds with two cylindrical ends and to obtain a lower
bound for the {\colour[A3] injectivity radius} of these extended manifolds by
comparison with a straight cylinder.

We begin with the comparison theorem for the {\colour[A3] injectivity
  radius} of M\"uller and Salomonsen~\cite{mueller-salomonsen:07}. Our
statement displays a lower bound with explicit constants taken from
their proof; cf.~eqn.~\eqref{eq:d.2}.

\begin{proposition}[{\cite[Prp.~2.1]{mueller-salomonsen:07}}]
  \label{prp:d.1}
  Let $M$ denote a smooth $n$-dimensional manifold.  Suppose that the
  Riemannian manifolds $\Mscr_0 := (M,g_0)$ and $\Mscr_1 : = (M,g_1)$
  are complete with quasi-isometric metrics $g_0$ and $g_1$, i.e.,
  \begin{equation}
    \label{eq:d.1}
    \eta g_0 \le g_1 \le \eta^{-1} g_0,
  \end{equation}
  for some constant $\eta \in (0,1]$. Furthermore, suppose that the
  sectional curvature of $\Mscr_0$ and $\Mscr_1$ is bounded (in
  absolute value) by some constant $K \ge 0$. Let $\inj_{\Mscr_0}(x)$
  and $\inj_{\Mscr_1}(x)$ denote the {\colour[A3] injectivity radius}
  of $\Mscr_0$ and $\Mscr_1$, respectively, at the point $x \in M$.
  We then have
  \begin{equation}
    \label{eq:d.2}
    \inj_{\Mscr_1}(x)
    \ge  \frac1{2} \min \Bigl\{\frac{\eta^2 \pi}{\sqrt K}, 
                    \eta \inj_{\Mscr_0}(x) \Bigr\},
    \qquad  x \in M.
  \end{equation}
\end{proposition}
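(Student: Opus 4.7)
The plan is to combine Klingenberg's classical injectivity radius lemma with the quasi-isometry \eqref{eq:d.1} to transfer information from $\Mscr_0$ to $\Mscr_1$. Klingenberg states that for a complete $(M,g)$ with sectional curvature bounded above by $K$,
\begin{equation*}
  \inj_{(M,g)}(x) \ge \min\bigl(\pi/\sqrt K,\; \tfrac12 \ell_g(x)\bigr),
\end{equation*}
where $\ell_g(x)$ is the length of the shortest non-constant geodesic loop at $x$. Applied to $\Mscr_1$, the conjugate-radius term already gives $\pi/\sqrt K$, which dominates the required $\eta^2\pi/(2\sqrt K)$ since $\eta \in (0,1]$. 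The task therefore reduces to showing $\ell_{g_1}(x) \ge \eta \inj_{\Mscr_0}(x)$.

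For the loop estimate I would argue by contradiction. Assume $\gamma\colon[0,L]\to M$ is a $g_1$-geodesic loop at $x$ of $g_1$-length $L < \eta \inj_{\Mscr_0}(x)$. The quasi-isometry $g_1 \ge \eta g_0$ yields $\ell_{g_0}(\gamma) \le L/\sqrt\eta$, so every point of $\gamma$ lies within $g_0$-distance $L/(2\sqrt\eta) < \tfrac12 \sqrt\eta\,\inj_{\Mscr_0}(x) \le \inj_{\Mscr_0}(x)$ of $x$. Hence $\gamma$ is contained in the $g_0$-normal neighbourhood $U := B_{g_0}(x, \inj_{\Mscr_0}(x))$ on which $\exp_x^{g_0}$ is a diffeomorphism, and pulls back under $(\exp_x^{g_0})^{-1}$ to a closed loop $\tilde\gamma$ in $T_xM$ based at $0$. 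One then compares the $g_1$-geodesic equation for $\gamma$ with the straight line in $T_xM$ (i.e., a $g_0$-geodesic read in normal coordinates) via Jacobi field estimates from Rauch's comparison theorem using the sectional curvature bound $K$, in order to conclude that such a loop cannot close up unless $L$ is bounded below by a constant multiple of $\eta \inj_{\Mscr_0}(x)$, a contradiction.

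The main obstacle I anticipate is making the Jacobi field comparison quantitative enough to deliver the precise $\eta^2$ factor in the first slot and $\eta$ in the second. The quasi-isometry controls the metric tensors pointwise but not their derivatives, so the $g_1$- and $g_0$-Christoffel symbols can in principle differ uncontrollably; what makes the argument succeed is the uniform two-sided sectional curvature bound on both metrics, which supplies the necessary Jacobi field control and bypasses any need for direct estimates on derivatives of the metric. As a more robust alternative I would keep in reserve the Cheeger--Gromov--Taylor volume-based injectivity radius estimate, combined with the comparison $\vol_{g_1}(B_{g_1}(x,r)) \ge \eta^{n/2}\vol_{g_0}(B_{g_0}(x, \sqrt\eta\, r))$ obtained directly from \eqref{eq:d.1}; this would yield a possibly weaker constant but follow a more mechanical path.
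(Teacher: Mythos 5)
The paper does not prove this proposition; it is imported verbatim from M\"uller--Salomonsen, with the explicit constants in~\eqref{eq:d.2} merely extracted from their proof. So there is no in-paper argument to compare yours against, and what you have written amounts to a proposal for re-proving the cited external result.

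As a proof, your sketch has a real gap precisely where you flag it. The reduction via Klingenberg's lemma and the observation that the conjugate-radius term already dominates $\eta^2\pi/(2\sqrt K)$ are fine, and the computation that a short $g_1$-loop lies in the $g_0$-normal ball at $x$ is correct. But the decisive step --- showing that a $g_1$-geodesic loop cannot close up inside a $g_0$-normal neighbourhood --- is not something Rauch/Jacobi comparison delivers. Rauch compares the Jacobi equation of a \emph{single} metric against a constant-curvature model; it gives conjugate-point and spread estimates for $g_1$-geodesics among themselves, but it says nothing about how a $g_1$-geodesic sits relative to a $g_0$-geodesic (i.e.\ a straight line in $g_0$-normal coordinates). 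What that comparison would actually require is control of the $g_1$-Christoffel symbols written in $g_0$-normal coordinates, and neither the quasi-isometry (a pointwise $C^0$ comparison of the metric tensors) nor the two separate sectional curvature bounds (which are invariants of each metric individually) controls those mixed first-derivative quantities; this is exactly the issue you raise and then wave away. There is also a small internal inconsistency: you assume $L < \eta\,\inj_{\Mscr_0}(x)$ for contradiction, yet the Jacobi step is only claimed to force $L$ above ``a constant multiple of $\eta\,\inj_{\Mscr_0}(x)$'', which would not contradict the assumption. The Cheeger--Gromov--Taylor volume route you keep in reserve is a more plausible mechanical path but, as you note, it would not reproduce the stated constants; in any case it too is left unexecuted. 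In short, the plan does not yet constitute a proof of the proposition, and the central comparison step would need a genuinely different idea (one that compares geodesics of the two metrics, not geodesics of each metric with a model).
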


We next use the above comparison theorem to obtain lower bounds for
the {\colour[A3] injectivity radius} of manifolds $\Mscr = (M_+, g)$
where $M_+ = \R_+ \times \Sphere^{n-1}$, $\R_+ = (0,\infty)$, and $g$
is a warped product metric generated by a function $r$. We require a
local bound on the variation of $r$, cf.~eqn.~\eqref{eq:d.3}. From
this point on, we restrict our attention to the case $n \ge 3$. The
corresponding results in the case $n=2$ are obtained by dropping the
term $(1+{\dot r}(t)^2)/r(t)^2$ in the definition of $\kappa$ in
eqn.~\eqref{eq:d.5} below etc.

\begin{lemma}
  \label{lem:d.2}
  Let $\map r {\R_+} {\R_+}$, $r \in \Cont[2] {\R_+}$, satisfy the
  condition
  \begin{equation}
    \label{eq:d.3}
    \frac 1 m r(s_0)
    \le r(s) \le
    m r(s_0), \qquad s \in [s_0 -2, s_0 + 2], 
  \end{equation}
  for all $s_0 > 2$, where $m \ge 1$ is a constant. Let $\Mscr_+ =
  (M_+, \dd s^2 + r(s)^2 g_{\Sphere^{n-1}})$ with $ g_{\Sphere^{n-1}}$
  denoting the standard metric on $\Sphere^{n-1}$.
  
  Then $\inj_{\Mscr_+}(s)$, the {\colour[A3] injectivity radius} of
  $\Mscr_+$ at the points $(s,\omega) \in M_+$, satisfies the lower
  bound
  \begin{equation}
    \label{eq:d.4}
    \inj_{\Mscr_+}(s) \ge C_0\, \min \Bigl\{ \frac 1 {\sqrt{\kappa(s)}},
      r(s) \Bigr\},
    \qquad s > 2,
  \end{equation}
  where $C_0 > 0$ is a constant that is independent of $s$, and
  \begin{equation}
    \label{eq:d.5}
    \kappa(s) :=
    \max_{|s-t| \le 2} \max\Bigl\{ 
                \frac{|\ddot r(t)|}{r(t)},
                \frac{1 + \dot r(t)^2}{r(t)^2}, 1 \Bigr\}. 
    \end{equation}
  
\end{lemma}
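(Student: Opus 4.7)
The plan, as outlined at the start of \App{inj.rad}, is to fix $(s_0,\omega_0) \in M_+$ with $s_0 > 2$ and to modify the warping function $r$ outside $[s_0-1,s_0+1]$ so as to obtain a function $\tilde r$ on all of $\R$ which agrees with $r$ on $[s_0-1,s_0+1]$ and equals the constant $r(s_0)$ off $[s_0-2,s_0+2]$. The associated warped product $\tilde \Mscr := (\R \times \Sphere^{n-1}, \dd s^2 + \tilde r(s)^2 g_{\Sphere^{n-1}})$ is then complete with two cylindrical ends and coincides isometrically with $\Mscr_+$ on the common region $[s_0-1,s_0+1]\times \Sphere^{n-1}$. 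Comparing $\tilde \Mscr$ with the straight cylinder $\mathcal C := (\R \times \Sphere^{n-1}, \dd s^2 + r(s_0)^2 g_{\Sphere^{n-1}})$---whose injectivity radius is $\pi r(s_0)$ at every point---by means of \Prp{d.1} will yield a lower bound for $\inj_{\tilde \Mscr}(s_0,\omega_0)$, which I then transfer back to $\inj_{\Mscr_+}(s_0,\omega_0)$.

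Concretely, I would set $\tilde r(s) := r(s_0) + \chi(s-s_0)(r(s) - r(s_0))$ for a fixed smooth cutoff $\chi \in \Cont[2]{\R}$ with $\chi \equiv 1$ on $[-1,1]$ and $\chi \equiv 0$ off $[-2,2]$.  By~\eqref{eq:d.3}, $\tilde r(s) \in [r(s_0)/m,\ m\, r(s_0)]$ everywhere, so $\tilde \Mscr$ is quasi-isometric to $\mathcal C$ with constant $\eta = 1/m^2$.  The sectional curvatures $K(\partial_s,\partial_{\xi_j}) = -\ddot{\tilde r}/\tilde r$ and $K(\partial_{\xi_j},\partial_{\xi_k}) = (1-\dot{\tilde r}^2)/\tilde r^2$ recalled in \Sec{warped} are each bounded in absolute value by $K := C_1\kappa(s_0)$ for some $C_1 = C_1(m,\chi)$: on $[s_0-1,s_0+1]$ this is immediate from~\eqref{eq:d.5}; in the transition intervals it follows by expanding $\ddot{\tilde r} = \ddot\chi(r-r(s_0)) + 2\dot\chi\dot r + \chi\ddot r$ and using the crude estimates $|r-r(s_0)| \le m r(s_0)$, $|\dot r| \le m r(s_0)\sqrt{\kappa(s_0)}$, $|\ddot r| \le m r(s_0)\kappa(s_0)$ (all consequences of~\eqref{eq:d.3}--\eqref{eq:d.5} together with $1/r(s_0)^2 \le \kappa(s_0)$); outside $[s_0-2,s_0+2]$ the manifold is flat. \Prp{d.1} then gives
\begin{equation*}
  \inj_{\tilde \Mscr}(s_0,\omega_0)
  \ge \tfrac 12 \min \bigl\{\eta^2\pi/\sqrt K,\ \eta\pi r(s_0)\bigr\}
  \ge C_2 \min \bigl\{1/\sqrt{\kappa(s_0)},\ r(s_0)\bigr\}
\end{equation*}
with $C_2 = C_2(m,n)$.

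For the transfer back to $\Mscr_+$, I would choose $C_0 \le \min\{C_2,1\}$. Since $\kappa(s_0) \ge 1$, the bound $R := C_0 \min\{1/\sqrt{\kappa(s_0)},\ r(s_0)\}$ automatically satisfies $R \le 1$; any unit-speed geodesic starting at $(s_0,\omega_0)$ of length $\le R \le 1$ keeps its $s$-coordinate in $[s_0-1,s_0+1]$ (because $|\dot s| \le 1$ along such a geodesic), hence remains entirely in the region where $\tilde \Mscr$ and $\Mscr_+$ are isometric. Consequently any conjugate point to $(s_0,\omega_0)$, or any closed geodesic loop through $(s_0,\omega_0)$, realized in $\Mscr_+$ within distance $\le R$ would correspond to an identical object in $\tilde \Mscr$, contradicting the lower bound just obtained; therefore $\inj_{\Mscr_+}(s_0,\omega_0) \ge R$, which is~\eqref{eq:d.4}. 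The main obstacle is precisely this last step: since the injectivity radius is only a semi-local invariant (sensitive to short closed geodesic loops as well as to conjugate points), the transfer from $\tilde \Mscr$ to $\Mscr_+$ must be carried out carefully, relying on the explicit $|\dot s|\le 1$ bound to confine short geodesics to the region where the two metrics coincide.
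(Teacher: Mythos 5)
Your proof is correct and follows essentially the same route as the paper: you define the same cutoff-modified warping function ($\tilde r$ equals the paper's $\rho_{s_0}$), compare the resulting complete manifold with a straight cylinder via \Prp{d.1} after bounding its sectional curvature by $C_1\kappa(s_0)$, and transfer the bound back to $\Mscr_+$ using the coincidence of the two metrics on $|s-s_0|\le 1$. Your treatment is in fact slightly more careful than the paper's in two places: the quasi-isometry constant should indeed be $\eta=1/m^2$ (the paper writes $1/m$, a harmless slip absorbed into $C_0$), and you justify the final transfer step explicitly via the $|\dot s|\le 1$ bound for unit-speed geodesics, whereas the paper merely asserts $\inj_{\Mscr_+}(s_0)\ge\min\{\inj_{\Mscr_{s_0}},1\}$.
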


\begin{proof}
  We fix a function $\phi \in \Cci{-2,2}$ satisfying $0 \le \phi \le
  1$ and $\phi(x) = 1$ for $-1 \le x \le 1$. Let $c_\phi :=
  \max\{\norm[\infty]{\phi'}, \norm[\infty]{\phi''}\}$.  Furthermore,
  let $\phi_{s_0} = \phi(\cdot - s_0)$ and define
  \begin{equation}
    \label{eq:d.6}
    \rho_{s_0}(s)
    := \phi_{s_0}(s) r(s) + (1 - \phi_{s_0}(s)) r(s_0),
    \qquad s \in \R, 
  \end{equation}
  for $s_0 \ge 2$; note that $\rho_{s_0}$ is defined on all of $\R$.
  We have
  \begin{equation}
    \label{eq:d.7}
    \frac 1 m  r(s_0) \le \rho_{s_0}(s) \le m r(s_0),
    \qquad s \in \R. 
  \end{equation}
  Let $\Mscr_{s_0} = (M, g_{s_0})$ with $M := \R \times \Sphere^{n-1}$
  and $g_{s_0} = \dd s^2 + \rho_{s_0}^2 g_{\Sphere^{n-1}}$.  Then
  $\Mscr_{s_0}$ is complete and, by~\eqref{eq:d.7}, $\Mscr_{s_0}$ is
  quasi-isometric (with a constant $\eta := 1/m \in (0,1)$) to $\wt
  \Mscr_{s_0} = (M, \dd s^2 + r(s_0)^2 g_{\Sphere^{n-1}})$, a cylinder
  of constant radius $r(s_0)$.

  In order to obtain a curvature bound for $\Mscr_{s_0}$ we first note
  that $|r(s) - r(s_0)| \le r(s)$ if $r(s) \ge r(s_0)$ while $|r(s) -
  r(s_0)| \le r(s_0) \le m r(s)$ if $r(s) < r(s_0)$; in both cases the
  estimate $|r(s) - r(s_0)| \le m r(s)$ is valid.  For $|s-s_0| \le 2$
  the derivatives of the function $\rho_{s_0}$ satisfy
  \begin{equation*}
    |\dot \rho_{s_0}(s)|
    \le \norm[\infty]{\phi'} |r(s) - r(s_0)| + |\dot r(s)| 
    \le c_\phi m r(s) + \sqrt{\kappa(s_0)} r(s) 
    \le c_1 \sqrt{\kappa(s_0)} r(s_0), 
  \end{equation*}
  by definition of $c_\phi$ and $\kappa$. Similarly, we have for
  $|s-s_0| \le 2$
  \begin{align*}
     |\ddot \rho_{s_0}(s)|
     &\le  \norm[\infty]{\phi''} |r(s) - r(s_0)| 
                 + 2 \norm[\infty]{\phi'} |\dot r(s)| + |\ddot r(s)| \\
     &\le c_\phi m r(s) + 2 c_\phi \sqrt{\kappa(s_0)}r(s)  
             + \kappa(s_0) r(s) \le c_2 \kappa(s_0) r(s_0). 
  \end{align*}
  We now find for $|s-s_0| \le 2$  
  \begin{equation*}
    \frac{|1 - \dot\rho_{s_0}(s)^2|}{\rho_{s_0}(s)^2}
    \le \frac{m^2}{r(s_0)^2} + c_1^2 \kappa(s_0) \le c_2 \kappa(s_0), 
   \qquad 
    \frac{|\ddot \rho_{s_0}(s)|}{\rho_{s_0}(s)} \le c_3 \kappa(s_0). 
  \end{equation*}
  Thus the sectional curvature of $\Mscr_{s_0}$ is bounded by $C_1
  \kappa(s_0)$ with some constant $C_1 \ge 0$ which is independent of
  $s_0 \ge 2$.

  We may now apply \Prp{d.1} to $\Mscr_{s_0}$ and $\wt \Mscr_{s_0}$
  with $\eta := 1/m$ and $K := C_1 \kappa(s_0)$ to obtain
  \begin{equation*}
    \inj_{\Mscr_{s_0}} \ge
    C_0\min \Bigl\{ \frac{1}{\sqrt{\kappa(s_0)}}, r(s_0) \Bigr\} . 
  \end{equation*}
  Since the manifolds $\Mscr_+$ and $\Mscr_{s_0}$ have the same metric
  for $|s - s_0| \le 1$ it is clear that $\inj_{\Mscr_+}(s_0) \ge
  \min\{\hbox{\rm inj}_{\Mscr_{s_0}}, 1\}$, and the desired result
  follows.
\end{proof}

The idea of proof used in obtaining \Lem{d.2} can easily be
generalized to perturbations of a warped product metric. For
simplicity we work here with $M = \R \times \Sphere^{n-1}$.  The
functions $\phi_{s_0} \in \Cci{-2,2}$ are as above.

\begin{proposition}
  \label{prp:d.3}
  Let $M = \R \times \Sphere^{n-1}$, let $r \colon \R \to \R_+$ be a
  $\Contspace [2]$-function satisfying the estimate~\eqref{eq:d.3} for
  all $s_0 \in \R$.  Also define $\kappa_0 = \kappa_0(s)$ as in
  eqn.~\eqref{eq:d.5}, but now for all $s \in \R$. Let $g_0$ denote
  the metric $\dd s^2 + r(s)^2 g_{\Sphere^{n-1}}$.

  Let $g$ denote another metric on $M$ that satisfies the following
  conditions:
  \begin{enumerate}
  \item
    \label{prp.d.3.i}
    There exists a constant $m \ge 1$ such that, for all $s_0 \in
    \R$,
    \begin{equation}
      \label{eq:*}
      \frac{1}{m} g_0(s_0)
      \le g(s)
      \le m g_0(s_0), \qquad s \in [s_0-2, s_0 + 2]. 
    \end{equation}

  \item
    \label{prp.d.3.ii}
    There exists a (continuous) function $\kappa$ on
    $\R$ such that the sectional curvature of
    \begin{equation*}
      \Mscr_{s_0} := (M, \phi_{s_0}(s) g(s) + (1 - \phi_{s_0}(s)) g_0(s_0))
    \end{equation*}
    is bounded (in absolute value) by $\kappa(s_0)$, for all $s_0 \in \R$.
  \end{enumerate}
  Then $\inj_{\Mscr}(s,\omega)$, the {\colour[A3] injectivity radius} of
  $\Mscr = (M,g)$ at $(s,\omega) \in M$, obeys the lower bound
  \begin{equation}
    \label{eq:**}
    \inj_\Mscr(s,\omega)
    \ge C\, \min \Bigl\{ \frac{1}{\sqrt{\kappa_0(s)}}, 
                        \frac{1}{\sqrt{\kappa(s)}}, r(s) \Bigr\},
    \qquad s \in \R,
  \end{equation}
  for some positive constant $C$. 
\end{proposition}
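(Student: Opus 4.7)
\emph{Proof plan.} The plan is to adapt the strategy used for \Lem{d.2}: for every $s_0 \in \R$, apply \Prp{d.1} to compare the auxiliary manifold $\Mscr_{s_0}$ provided by hypothesis~\itemref{prp.d.3.ii} with the constant-radius cylinder $\wt{\Mscr}_{s_0} := (M, g_0(s_0))$, and then transfer the resulting injectivity-radius bound back to $\Mscr$ using the fact that $\Mscr_{s_0}$ agrees with $\Mscr$ on a neighborhood of the slice $\{s=s_0\}$. Fix $s_0 \in \R$ and $\omega_0 \in \Sphere^{n-1}$; it suffices to establish~\eqref{eq:**} at $(s_0,\omega_0)$.

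\textbf{Step 1 (Completeness and quasi-isometry).} The interpolated metric $g_{s_0} := \phi_{s_0} g + (1-\phi_{s_0}) g_0(s_0)$ coincides with the complete cylinder metric $g_0(s_0)$ outside $[s_0-2,s_0+2]\times \Sphere^{n-1}$, so $\Mscr_{s_0}$ is complete. Hypothesis \itemref{prp.d.3.i} together with the local variation bound~\eqref{eq:d.3} (which yields $m^{-1}g_0(s_0) \le g_0(s) \le m g_0(s_0)$ for $|s-s_0|\le 2$) shows that $g(s)$ is comparable to $g_0(s_0)$ on the support of $\phi_{s_0}$ with constants depending only on $m$. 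Hence the convex combination $g_{s_0}$ is globally quasi-isometric to $g_0(s_0)$ with some constant $\eta = \eta(m)\in(0,1)$.

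\textbf{Step 2 (Comparison via \Prp{d.1}).} The cylinder $\wt{\Mscr}_{s_0}$ has sectional curvature bounded in absolute value by $1/r(s_0)^2 \le \kappa_0(s_0)$ (by~\eqref{eq:d.5}) and constant injectivity radius $\pi r(s_0)$ at every point. By hypothesis~\itemref{prp.d.3.ii}, the sectional curvature of $\Mscr_{s_0}$ is bounded by $\kappa(s_0)$. Setting $K_{s_0} := \max\{\kappa_0(s_0),\kappa(s_0)\}$ and applying \Prp{d.1} to $(\wt{\Mscr}_{s_0},\Mscr_{s_0})$ with quasi-isometry constant $\eta$ yields
\begin{equation*}
  \inj_{\Mscr_{s_0}}(s_0,\omega_0)
  \ge \tfrac12 \min\bigl\{\eta^2 \pi / \sqrt{K_{s_0}},\ \eta \pi r(s_0)\bigr\}
  \ge C_1 \min\Bigl\{\frac{1}{\sqrt{\kappa_0(s_0)}},\frac{1}{\sqrt{\kappa(s_0)}},r(s_0)\Bigr\},
\end{equation*}
with $C_1 > 0$ depending only on $m$.

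\textbf{Step 3 (Transfer to $\Mscr$) and the main obstacle.} Since $\phi_{s_0}\equiv 1$ on $\{|s-s_0|\le 1\}$, the metrics $g$ and $g_{s_0}$ agree pointwise there; hence a geodesic of one manifold starting at $(s_0,\omega_0)$ is simultaneously a geodesic of the other, so long as it stays inside this agreement region. The $g$-distance from $(s_0,\omega_0)$ to the boundary $\{|s-s_0|=1\}$ is bounded below by a constant $c=c(m)>0$ (the $g_0(s_0)$-distance is at least $1$ along the $s$-direction, and the Step~1 quasi-isometry has a constant depending only on $m$). A standard contrapositive argument comparing geodesic loops in the two metrics then gives $\inj_{\Mscr}(s_0,\omega_0) \ge \min\{c,\inj_{\Mscr_{s_0}}(s_0,\omega_0)\}$, which combined with Step~2 delivers~\eqref{eq:**}. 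I expect this last step to be the main technical obstacle: one must verify carefully that no short loop or conjugate point in $\Mscr$ can escape the agreement region and still close up at $(s_0,\omega_0)$, and that the constant $c$ can be chosen uniformly in $s_0$ (which is ensured by the $s_0$-independent local comparison \eqref{eq:d.3} and hypothesis~\itemref{prp.d.3.i}, both depending only on $m$).
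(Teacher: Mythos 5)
Your proof is correct, and it takes a slightly different route than the paper's. The paper's proof runs through the auxiliary interpolated warped product $\Mscr_{0,s_0} := (M,\ \phi_{s_0} g_0 + (1-\phi_{s_0}) g_0(s_0))$, reusing the detailed curvature and injectivity computations from \Lem{d.2} to bound $\inj_{\Mscr_{0,s_0}}$ and the curvature of $\Mscr_{0,s_0}$, and then applies \Prp{d.1} to the pair $(\Mscr_{0,s_0}, \Mscr_{s_0})$. You instead compare $\Mscr_{s_0}$ directly with the constant-radius cylinder $\wt{\Mscr}_{s_0} = (M, g_0(s_0))$, whose curvature ($1/r(s_0)^2 \le \kappa_0(s_0)$, and $\kappa_0\ge 1$ by definition~\eqref{eq:d.5}) and injectivity radius ($\pi r(s_0)$) are immediate. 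This is arguably more self-contained and economical, since it bypasses \Lem{d.2} entirely; what the paper's route buys is the ability to cite \Lem{d.2} as a black box rather than recomputing the cylinder's geometry. You also make the final transfer step explicit (the geodesic-localization argument yielding $\inj_\Mscr(s_0,\omega_0)\ge\min\{1/m,\ \inj_{\Mscr_{s_0}}(s_0,\omega_0)\}$, noting that the quasi-isometry constant $\eta=1/m$ controls the $g$-distance to the boundary $\{|s-s_0|=1\}$), whereas the paper ends with ``the desired estimate follows'' and leaves this to the reader. This step is indeed the place that requires care, and you handle it correctly; it is not a genuine obstacle once one observes that any $g$-geodesic from $(s_0,\omega_0)$ of length less than $1/m$ stays within the region $\{|s-s_0|<1\}$ where $g$ and $g_{s_0}$ agree. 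Finally, since $\kappa_0\ge 1$ forces $\min\{1/\sqrt{\kappa_0},1/\sqrt{\kappa},r\}\le 1$, your two-case minimum $\min\{c, C_1\cdot(\cdots)\}$ correctly collapses to a single multiplicative constant $C$.
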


\begin{proof}
  For $s_0 \ge 2$ given, let us consider the (complete) manifold
  \begin{equation*}
    \Mscr_{0,s_0} := (M, \phi_{s_0}(s) g_0(s) + (1 - \phi_{s_0}(s))
    g_0(s_0)).
  \end{equation*}
  Note that the manifolds $\Mscr_{0,s_0}$ and $\Mscr_{s_0}$ are
  identical outside the interval $[s_0-2,s_0+2]$. Furthermore, both
  have a metric that is independent of $s$ outside of $[s_0-2,s_0+2]$.
  It is immediate from assumption~\itemref{prp.d.3.i} that
  $\Mscr_{0,s_0}$ and $\Mscr_{s_0}$ are quasi-isometric with a constant 
  $\eta = 1/m$.
  As in the proof of \Lem{d.2}, one shows that the sectional curvature
  of $\Mscr_{0,s_0}$ is bounded by $C_1 \kappa_0(s_0)$, with a
  constant $C_1$ that is independent of $s_0 \in \R$.
  As for $\Mscr_{s_0}$, assumption~\itemref{prp.d.3.ii} says that the
  sectional curvature of $\Mscr_{s_0}$ is bounded by $\kappa(s_0)$.
  By \Lem{d.2}, the injectivity radius of $\Mscr_{0,s_0}(s)$ is
  bounded below by $C_0 \min \{(\kappa_0(s))^{-1}, r(s) \}$ and thus
  \Prp{d.1} yields the estimate 
  \begin{equation*}
    \inj_{\Mscr_{s_0}}(s_0,\omega) \ge \min \Bigl\{ \frac{\eta^2
        \pi}{\sqrt{\kappa(s_0)}},
     \frac{\eta C_0}{\sqrt{\kappa_0(s_0)}}, \eta C_0 r(s_0) \Bigr\}
  \end{equation*}
  and the desired estimate follows.
\end{proof}

\begin{remark}
  \label{rem:d.4}
  In concrete applications the required bound $\kappa$ on the
  sectional curvature can be obtained by direct calculation in terms
  of the metric (cf., e.g.,~\cite[p.~204~ff.]{oneill:83}).
\end{remark}

\begin{remark} 
  \label{rem:d.5}
  It is clear from the assumptions of \Lem{d.2} and \Prp{d.3} (in
  particular, eqns.~\eqref{eq:d.3} and \eqref{eq:*}) that the lower
  bounds~\eqref{eq:d.4} and~\eqref{eq:**} for the radius of
  injectivity also hold for the \emph{homogenized} radius of
  injectivity $\iota_\Mscr$ as defined in
  equation~\eqref{eq:unif.inj.rad}, possibly with a smaller positive
  constant. In both cases our assumptions imply that the Ricci
  curvature satisfies the lower bound required in \Prp{lower.bd.rh} in
  the form $\Ric^-_\Mscr(s,\omega) \ge - (n-1) \kappa(s_0)$ for $|s -
  s_0| < 2$ since the Ricci curvature at a point $x$ is the sum of the
  sectional curvatures of any $n-1$ orthogonal non-degenerate planes
  through $x$ (\cite[p.~88]{oneill:83}).  Therefore, \Prp{lower.bd.rh}
  yields that the homogenized harmonic radius $\iota_\Mscr$ obeys
  lower bounds analogous to the lower bounds for $\inj_\Mscr$; in
  other words, the function $r_0(x)$ of \Prp{lower.bd.rh} can be read
  off from the right hand side of eqns.~\eqref{eq:d.4} or
  \eqref{eq:**}.
\end{remark}   


\subsection*{Acknowledgments}

R.\ Weder and R.\ Hempel thank Volker En{\ss}, Aachen, for the
suggestion to study the openness of channels in a perturbational
setting.  R.~Hempel is most grateful to Brian Davies, London, for
fruitful discussions and suggestions which led to substantial
improvements. He would also like to thank the Isaac Newton Institute,
Cambridge, where a part of this work was done, for its hospitality.
The visit of R.~Hempel to the Isaac Newton Institute was supported by
the Program ``Spect'' of the European Science Foundation, Strasbourg.
O.\ Post kindly acknowledges the financial support given by the
SFB~647 ``Space---Time---Matter'' at the Humboldt University Berlin.
O.~Post would also like to thank Qinglan Xia for pointing his
attention to the chapter on quasi-metrics in the book of
Heinonen~\cite{heinonen:01}.  R.~Weder thanks Patrick Joly for his
kind hospitality at the project POEMS of the Institut National de
Recherche en Informatique et en Automatique (Inria),
Paris-Rocquencourt, where a part of this work was done. This research
was partially supported by CONACYT under Project CB-2008-01-99100.
R.~Weder is a Fellow of the Sistema Nacional de Investigadores.
{\colour Last but not least we would like to thank the unknown
  referees for various suggestions which we have been happy to
  incorporate into the paper.}

\newcommand{\etalchar}[1]{$^{#1}$}
\providecommand{\bysame}{\leavevmode\hbox to3em{\hrulefill}\thinspace}

\end{document}